\theoremstyle{plain}
\newtheorem{theorem}   {Theorem}[section]
\newtheorem{proposition} [theorem]{Proposition}
\newtheorem{lemma}       [theorem]{Lemma}
\newtheorem{corollary}   [theorem]{Corollary}
\theoremstyle{definition}
\newtheorem{example}     [theorem]{Example}
\newtheorem{definition}[theorem]{Definition}
\definecolor{niceredbright}{HTML}{bd0310}
\definecolor{nicebluebright}{HTML}{197b9b}
\definecolor{nicered}{HTML}{7f0a13}
\definecolor{niceblue}{HTML}{104354}
\definecolor{nicegreen}{HTML}{217516}
\definecolor{nicepurple}{HTML}{884bab}
\definecolor{nicebg}{HTML}{f6f0e4}
\definecolor{niceredlight}{HTML}{c9888d}
\definecolor{nicebluelight}{HTML}{78a4b8}
\definecolor{nicegreenlight}{HTML}{76de68}
\definecolor{nicepurplelight}{HTML}{bc87db}
   \def\@citecolor{niceblue}%
   \def\@urlcolor{niceblue}%
   \def\@linkcolor{nicered}%
\def\orcidID#1{\smash{\href{http://orcid.org/#1}{\protect\raisebox{-1.25pt}{\protect\includegraphics{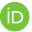}}}}}
\renewcommand{\subsubsection}{\paragraph}
\setlist[enumerate,1]{itemsep=0pt,topsep=1ex,before={\pagebreak[1]}}
\setlist[itemize,1]{itemsep=0pt,topsep=1ex}
\pretocmd\start@gather{%
    \if@minipage\kern-\topskip\kern-\baselineskip\kern+7pt\fi
}{}{}
\colorlet{darkred}{red!45!black}
\colorlet{darkgreen}{green!50!black}
\colorlet{darkblue}{blue!70!black}
\definecolor{detection} {HTML}{000000} 
\definecolor{acceptance}{HTML}{000000} 
\definecolor{selection} {HTML}{000000} 
\definecolor{fairness}  {HTML}{000000} 
\definecolor{fillD}  {HTML}{ddffdd}
\definecolor{fillA}  {HTML}{ddddff}
\definecolor{fillF}  {HTML}{ffdddd}
\definecolor{fillDA} {HTML}{ddffff}
\definecolor{fillDF} {HTML}{ffffdd}
\definecolor{fillAF} {HTML}{ffddff}
\definecolor{topDAF} {HTML}{ffffff}
\definecolor{botDAF} {HTML}{eeeeee}
\definecolor{drawD}  {HTML}{007700} 
\definecolor{drawA}  {HTML}{0000ff} 
\definecolor{drawF}  {HTML}{cc0000} 
\definecolor{drawDA} {HTML}{0077ff}
\definecolor{drawDF} {HTML}{cc7700}
\definecolor{drawAF} {HTML}{cc00ff}
\definecolor{drawDAF}{HTML}{444444}
\definecolor{colDAF}{HTML}{000000}
\definecolor{colDaF}{HTML}{3a3b04}
\definecolor{colDAf}{HTML}{093f3c}
\definecolor{coldAF}{HTML}{4a054b}
\definecolor{colDaf}{HTML}{023f0b}
\definecolor{coldAf}{HTML}{031455}
\definecolor{coldaf}{HTML}{470501}
\definecolor{niceredbright}{HTML}{bd0310}
\definecolor{nicebluebright}{HTML}{197b9b}
\definecolor{nicered}{HTML}{7f0a13}
\definecolor{niceblue}{HTML}{104354}
\definecolor{nicegreen}{HTML}{217516}
\definecolor{nicepurple}{HTML}{884bab}
\definecolor{nicebg}{HTML}{f6f0e4}
\definecolor{niceredlight}{HTML}{c9888d}
\definecolor{nicebluelight}{HTML}{78a4b8}
\definecolor{nicegreenlight}{HTML}{76de68}
\definecolor{nicepurplelight}{HTML}{bc87db}
\NewCommand{\Naturals}{\mathbb{N}}
\NewCommand{\N}{\mathbb{N}}
\NewCommand{\Positives}{\mathbb{N}_{+}}
\NewCommand{\Integers}{\mathbb{Z}}
\NewCommand{\Z}{\mathbb{Z}}
\NewCommand{\PowerSet}[m]{2^{#1}}
\NewCommand{\Q}{\mathbb{Q}}
\NewCommand{\R}{\mathbb{R}}
\NewCommand{\NL}{\mathsf{NL}}
\NewCommand{\Tuple}[m]{(#1)}
\NewCommand{\bigTuple}[m]{\bigl(#1\bigr)}
\NewCommand{\BigTuple}[m]{\Bigl(#1\Bigr)}
\NewCommand{\Set}[m]{\{#1\}}
\NewCommand{\bigSet}[m]{\bigl\{#1\bigr\}}
\NewCommand{\SetBuilder}[mm]{\{#1:#2\}}
\NewCommand{\bigSetBuilder}[mm]{\bigl\{#1\bigm|#2\bigr\}}
\NewCommand{\BigSetBuilder}[mm]{\Bigl\{#1\Bigm|#2\Bigr\}}
\NewCommand{\Card}[m]{\operatorname{card}(#1)}
\NewCommand{\UpwardClosure}[m]{\lceil#1\rceil}
\NewCommand{\Range}[soms]{
  \IfBooleanTF{#1}{\langle}{[}%
  \IfValueT{#2}{#2{\,:\,}}%
  #3%
  \IfBooleanTF{#4}{\rangle}{]}%
}
\NewCommand{\DefEq}{\coloneqq} 
\NewCommand{\Alphabet}{\varLambda}
\NewCommand{\Automaton}{P'}
\NewCommand{\ExtendedAutomaton}{P}
\newcommand{\AdTrans}{A}
\NewCommand{\Configuration}[o]{\Switch{#1}{C}{C}{D}{E}}
\NewCommand{\ConfigurationSet}[o]{\Switch{#1}{\mathcal{C}}{\mathcal{C}}{\mathcal{D}}{\mathcal{E}}}
\NewCommand{\Run}{\rho}
\NewCommand{\Selection}[o]{\Switch{#1}{S}{S}{T}{}}
\NewCommand{\Schedule}{\sigma}
\NewCommand{\Scheduler}{\varSigma}
\NewCommand{\SelectionConstr}{s}
\NewCommand{\FairnessConstr}{f}
\NewCommand{\LanguageClass}{\mathcal{G}}
\NewCommand{\StarGraph}{\mathit{ST}}
\NewCommand{\TypeFont}[m]{\textup{\texttt{#1}}}
\NewCommand{\DetectionType}[m]{%
  \TypeFont{%
    \ifthenelse{\equal{#1}{set}}{d}{%
      \ifthenelse{\equal{#1}{multiset}}{\color{detection}D}{%
        \ifthenelse{\equal{#1}{*}}{*}{\Warning}%
      }%
    }%
  }%
}
\NewCommand{\AcceptanceType}[m]{%
  \TypeFont{%
    \ifthenelse{\equal{#1}{halting}}{a}{%
      \ifthenelse{\equal{#1}{stabilizing}}{\color{acceptance}A}{%
        \ifthenelse{\equal{#1}{*}}{*}{\Warning}%
      }%
    }%
  }%
}
\NewCommand{\SelectionType}[m]{%
  \TypeFont{%
    \ifthenelse{\equal{#1}{liberal}}{}{%
      \ifthenelse{\equal{#1}{synchronous}}{\color{selection}\$}{%
        \ifthenelse{\equal{#1}{exclusive}}{\color{selection}S}{%
          \ifthenelse{\equal{#1}{*}}{*}{\Warning}%
        }%
      }%
    }%
  }%
}
\NewCommand{\FairnessType}[m]{%
  \TypeFont{%
    \ifthenelse{\equal{#1}{weak}}{f}{%
      \ifthenelse{\equal{#1}{strong}}{\color{fairness}F}{%
        \ifthenelse{\equal{#1}{*}}{*}{\Warning}%
      }%
    }%
  }%
}
\NewCommand{\Type}[mmmm]{%
  \DetectionType{#1}%
  \AcceptanceType{#2}%
  \SelectionType{#3}%
  \FairnessType{#4}%
}
\NewCommand{\ModelName}[m]{\textcolor{col#1}{\texttt{#1}}}
\NewCommand{\DAsF}{\ModelName{DAF}}
\NewCommand{\DasF}{\ModelName{DaF}}
\NewCommand{\DAsf}{\ModelName{DAf}}
\NewCommand{\dAsF}{\ModelName{dAF}}
\NewCommand{\Dasf}{\ModelName{Daf}}
\NewCommand{\dAsf}{\ModelName{dAf}}
\NewCommand{\dasf}{\ModelName{daf}}
\NewCommand{\DAZf}{\textcolor{colDAf}{\texttt{DA\$}}}
\NewCommand{\DAF}{\ModelName{DAF}}
\NewCommand{\DaF}{\ModelName{DaF}}
\NewCommand{\DAf}{\ModelName{DAf}}
\NewCommand{\dAF}{\ModelName{dAF}}
\NewCommand{\Daf}{\ModelName{Daf}}
\NewCommand{\dAf}{\ModelName{dAf}}
\NewCommand{\daf}{\ModelName{daf}}
\NewCommand{\daF}{\texttt{daF}}
\NewCommand{\Cutoff}[mm]{\lceil#1\rceil_{#2}}
\NewCommand{\Supp}[m]{\Cutoff{#1}{1}}
\NewCommand{\trans}{\mapsto}
\NewCommand{\RVwait}{\text{\footnotesize\faHourglassStart}}
\NewCommand{\RVsearch}{\text{\small\faSearch}}
\NewCommand{\RVanswer}{\text{\small\faHandOUp}}
\NewCommand{\RVconfirm}{\text{\small\faCheck}}
\NewCommand{\clsTrivial}{\mathsf{Trivial}}
\NewCommand{\clsCutoff}{\mathsf{Cutoff}}
\NewCommand{\clsCutoffOne}{\mathsf{Cutoff}(1)}
\NewCommand{\NLinSpace}{\mathsf{NSPACE}(n)}
\NewCommand{\clsMajority}{\mathsf{Maj}}
\NewCommand{\clsInvSM}{\mathsf{ISM}}
\NewCommand{\clsL}{\mathsf{L}}
\renewcommand{\ldots}{...}
\renewcommand{\dots}{...}
\newcommand{\Proofsketch}{Proof (sketch)}
\newcommand{\parag}[1]{\paragraph*{#1.}}
\begin{document}

\title{Decision Power of Weak Asynchronous Models of Distributed Computing\thanks{This work was supported by an ERC Advanced Grant (787367: PaVeS) and by the Research Training Network of the Deutsche Forschungsgemeinschaft (DFG) (378803395: ConVeY).}}
\author{Philipp Czerner \orcidID{0000-0002-1786-9592}, Roland Guttenberg \orcidID{0000-0001-6140-6707},\\ Martin Helfrich \orcidID{0000-0002-3191-8098}, Javier Esparza \orcidID{0000-0001-9862-4919}}
\affil{\{czerner, guttenbe, helfrich, esparza\}@in.tum.de\\
Department of Informatics, TU München, Germany}
\maketitle

\begin{abstract}
Esparza and Reiter have recently conducted a systematic comparative study of models of distributed computing 
consisting of a network of identical finite-state automata that cooperate to decide if the underlying graph of the network satisfies a given property. The study classifies models according to four criteria, and shows that twenty-four initially possible combinations collapse into seven equivalence classes with respect to their decision power, i.e.\ the properties that the automata of each class can decide. However, Esparza and Reiter only show (proper) inclusions between the classes, and so do not characterise their decision power. In this paper we do so for \emph{labelling} properties, i.e.\ properties that depend only on the labels of the nodes, but not on the structure of the graph. In particular, majority (whether more nodes carry label $a$ than $b$) is a labelling property. Our results show that only one of the seven equivalence classes identified by Esparza and Reiter can decide majority for arbitrary networks. We then study the expressive power of the classes on bounded-degree networks, and show that three classes can. In particular, we present an algorithm for majority that works for all bounded-degree networks under adversarial schedulers, i.e.\  even if the scheduler must only satisfy that every node makes a move infinitely often, and prove that no such algorithm can work for arbitrary networks.

\end{abstract}


\section{Introduction} \label{sec:intro}

A common feature of networks of natural or artificial devices, like molecules, cells, microorganisms, or nano-robots, is that agents have very limited computational power and no identities. Traditional distributed computing models are often inadequate to study the power and efficiency of these networks, which has led to a large variety of new models, including population protocols \cite{AADFP06,angluin2005stably}, chemical reaction networks \cite{SoloveichikCWB08}, networked finite state machines \cite{EmekW13}, the weak models of distributed computing of \cite{HJKLLLSV15}, and the beeping model \cite{CK10,AfekABCHK13} (see e.g.\ \cite{FeinermanK13,NB15} for surveys and other models). 

These new models share several characteristics \cite{EmekW13}: the network can have an arbitrary topology; all nodes run the same protocol; each node has a finite number of states, independent of the size of the network or its topology; state changes only depend on the states of a bounded number of neighbours; nodes do not know their neighbours, in the sense of \cite{Angluin80}. Unfortunately, despite such substantial common ground, the models still exhibit much variability. 
In \cite{ER20} Esparza and Reiter have recently identified four fundamental criteria according to which they diverge:
\begin{itemize}
\item \textit{Detection}. In some models, nodes can only detect the \emph{existence} of neighbours in a certain state, e.g., \cite{AfekABCHK13,HJKLLLSV15}, while in others they can \emph{count} their number up to a fixed threshold, e.g., \cite{EmekW13,HJKLLLSV15}. 
\item \textit{Acceptance}. Some models compute by \emph{stable consensus}, requiring all nodes to eventually agree on the outcome of the computation, e.g. \cite{AADFP06,angluin2005stably,SoloveichikCWB08};  others require the nodes to produce an output and halt, e.g. \cite{HJKLLLSV15,KuhnLO10}. 
\item \textit{Selection}. Some models allow for
\emph{liberal selection}: at each moment, an arbitrary  subset of nodes is  selected to take a step~\cite{EmekW13,Reiter17}.  \emph{Exclusive} models (also called \emph{interleaving} models) select exactly one node  (or one pair of neighbouring nodes) \cite{AADFP06,angluin2005stably,SoloveichikCWB08}. \emph{Synchronous} models select all nodes at each step e.g.,\cite{HJKLLLSV15} or classical synchronous networks \cite{Lynch96}. 
\item \textit{Fairness}. Some models assume that selections are \emph{adversarial},
only satisfying the minimal requirement that each node is selected infinitely often \cite{Francez,LehmannPS81}. Others assume \emph{stochastic} or \emph{pseudo-stochastic} selection (meaning that selections satisfy a fairness assumption capturing the main features of a stochastic selection) \cite{AADFP06,angluin2005stably,SoloveichikCWB08}. In this case, the selection scheduler is a source of randomness that can be tapped by the nodes to ensure  e.g. that eventually all neighbours of a node will be in different states. 
\end{itemize}

In \cite{ER20}, Esparza and Reiter initiated a comparative study of the computational power of these models. They introduced 
 {\em distributed automata}, a generic formalism able to capture all combinations of the features above. A distributed automaton consists of a set of rules that tell the nodes of a labelled graph how to change their state depending on the states of their neighbours. Intuitively, the automaton describes an algorithm that allows the nodes to decide whether the graph satisfies a given property. The decision power of a class of automata is the set of graph properties they can decide, for example whether the graph contains more red nodes than blue nodes (the \emph{majority} property), or whether the graph is a cycle. The main result of \cite{ER20} was that the twenty-four classes obtained by combining the features above collapse into only seven equivalence classes w.r.t. their decision power. 
The collapse is a consequence of a fundamental result: the selection criterion does not affect the decision power. That is, the liberal, exclusive, or synchronous versions of a class with the same choices in the detection, acceptance, and fairness categories, have the same decision power.
The seven equivalence classes are shown on the left of Figure~\ref{fig:classes}, where \texttt{D} and \texttt{d} denote detection with and without the ability to count; \texttt{A} and \texttt{a} denote acceptance by stable consensus and by halting; and \texttt{F} and \texttt{f} denote pseudo-stochastic and adversarial fairness constraints.  So, for example, \DAf{} corresponds to the class of distributed automata in which agents can count, acceptance is by stable consensus, and selections are adversarial. (As mentioned above, the selection component is irrelevant, and one can assume for example that all classes have exclusive selection.)  Intuitively, the capital letter corresponds to the option leading to higher decision power. 

The results of \cite{ER20} only prove inclusions between classes and separations, but give no information on which properties can be decided by each class, an information available e.g.\ for multiple variants of population protocols \cite{AngluinAER07,angluin2005stably,Aspnes17,BlondinEJ19,GR09,MCS11}. In this paper, we characterise the decision power of all classes of \cite{ER20} w.r.t. \emph{labelling} properties, i.e.\ properties that depend only on the labels of the nodes.  Formally, given a labelled graph $G$ over a finite set $\Lambda$ of labels, let $L_G \colon \Lambda \rightarrow \N$ be the \emph{label count} of $G$ that assigns to each label the number of nodes carrying it. A labelling property is a set $\mathcal{L}$ of label counts. A graph $G$ satisfies $\mathcal{L}$ if $L_G \in \mathcal{L}$, and a distributed automaton decides $\mathcal{L}$ if it recognises exactly the graphs that satisfy $\mathcal{L}$. For example, the majority property is a labelling property, while the property of being a cycle is not. 

\begin{figure*}[t]
	\begin{minipage}{35mm}
		\scalebox{0.91}{\begin{tikzpicture}[semithick,>=stealth',shorten >=0.5pt,auto,on grid]
  \tikzstyle{class}=[draw,rounded corners=1.2ex,
                     inner sep=0ex,minimum height=3ex,minimum width=6ex]
  \tikzstyle{D}  =[draw=drawD, fill=fillD]
  \tikzstyle{A}  =[draw=drawA, fill=fillA]
  \tikzstyle{F}  =[draw=drawF, fill=fillF]
  \tikzstyle{DA} =[draw=drawDA,fill=fillDA]
  \tikzstyle{DF} =[draw=drawDF,fill=fillDF]
  \tikzstyle{AF} =[draw=drawAF,fill=fillAF]
  \tikzstyle{DAF}=[draw=drawDAF,shade,top color=topDAF,bottom color=botDAF]
  \def\dv{6ex}
  \def\dh{8ex}
  \def\setbordershift{1ex}

  \node [class,F]   (0000)                                   {\Type{set}{halting}{liberal}{weak}};
  \node [class,D]   (1000) [above left=\dv and \dh of 0000]  {\Type{multiset}{halting}{liberal}{weak}};
  \node [class,A]   (0010) [above right=\dv and \dh of 0000] {\Type{set}{stabilizing}{liberal}{weak}};
  \node [class,DF]  (1100) [above=2*\dv of 1000]               {\Type{multiset}{halting}{liberal}{strong}};
  \node [class,DA]  (1010) [above=2*\dv of 0000]             {\Type{multiset}{stabilizing}{liberal}{weak}};
  \node [class,AF]  (0110) [above=2*\dv of 0010]               {\Type{set}{stabilizing}{liberal}{strong}};
  \node [class,DAF] (1110) [above=2*\dv of 1010]               {\Type{multiset}{stabilizing}{liberal}{strong}};

  \draw[->,detection]
    (0000) edge (1000)
    (0010) edge (1010)
    (0110) edge (1110)
    ;

  \draw[->,acceptance]
    (0000) edge (0010)
    (1000) edge (1010)
    (1100) edge (1110)
    ;

  \draw[->,fairness]
    (1000) edge (1100)
    (0010) edge (0110)
    (1010) edge (1110)
    ;
    
  \begin{scope}[on background layer]
  \draw [draw=none] 
  ([shift={(-\setbordershift,\setbordershift)}] 0000.north west)
  --([shift={(\setbordershift,\setbordershift)}] 0000.north east)
  --([shift={(\setbordershift,-\setbordershift)}] 0000.south east)
  --([shift={(-\setbordershift,-\setbordershift)}] 0000.south west)
  --cycle;
  
  \end{scope}
\end{tikzpicture}

	\end{minipage}\hspace{1.9mm}
	\begin{minipage}{51mm}
		\scalebox{0.91}{\begin{tikzpicture}[semithick,>=stealth',shorten >=0.5pt,auto,on grid]
  \tikzstyle{class}=[draw,rounded corners=1.2ex,
                     inner sep=0ex,minimum height=3ex,minimum width=6ex]
  \tikzstyle{D}  =[draw=drawD, fill=fillD]
  \tikzstyle{A}  =[draw=drawA, fill=fillA]
  \tikzstyle{F}  =[draw=drawF, fill=fillF]
  \tikzstyle{DA} =[draw=drawDA,fill=fillDA]
  \tikzstyle{DF} =[draw=drawDF,fill=fillDF]
  \tikzstyle{AF} =[draw=drawAF,fill=fillAF]
  \tikzstyle{DAF}=[draw=drawDAF,shade,top color=topDAF,bottom color=botDAF]
  \tikzstyle{set}=[rounded corners,dashed,fill=black!10]
  \tikzstyle{setlabel}=[]
  \def\dv{6ex}
  \def\dh{8ex}
  \def\setbordershift{0.6ex}
  \def\setshift{3.4}
  \def\setfontsize{\small}
  \def\movelabelr{\hspace*{-1mm}}

  \node [class,F]   (c0000)                                   {\Type{set}{halting}{liberal}{weak}};
  \node [class,D]   (c1000) [above left=\dv and \dh of c0000]  {\Type{multiset}{halting}{liberal}{weak}};
  \node [class,A]   (c0010) [above right=\dv and \dh of c0000] {\Type{set}{stabilizing}{liberal}{weak}};
  \node [class,DF]  (c1100) [above=2*\dv of c1000]               {\Type{multiset}{halting}{liberal}{strong}};
  \node [class,DA]  (c1010) [above=2*\dv of c0000]             {\Type{multiset}{stabilizing}{liberal}{weak}};
  \node [class,AF]  (c0110) [above=2*\dv of c0010]               {\Type{set}{stabilizing}{liberal}{strong}};
  \node [class,DAF] (c1110) [above=2*\dv of c1010]               {\Type{multiset}{stabilizing}{liberal}{strong}};
   
   \node[setlabel, anchor=east] (trivial) at (\setshift,0) {\setfontsize $\clsTrivial$\movelabelr};
   \node[setlabel, anchor=east] (support) at (\setshift,\dv) {\setfontsize $\clsCutoffOne$\movelabelr};   
   \node[setlabel, anchor=east] (cutoff) at (\setshift,3*\dv) {\setfontsize $\clsCutoff$\movelabelr};
   \node[setlabel, anchor=east] (NL) at (\setshift,4*\dv) {\setfontsize $\NL$\movelabelr};
   
   \draw[->,detection]
   (c0000) edge (c1000)
   (c0010) edge (c1010)
   (c0110) edge (c1110)
   ;
   
   \draw[->,acceptance]
   (c0000) edge (c0010)
   (c1000) edge (c1010)
   (c1100) edge (c1110)
   ;
   
   \draw[->,fairness]
   (c1000) edge (c1100)
   (c0010) edge (c0110)
   (c1010) edge (c1110)
   ;
   
   \begin{scope}[on background layer]
   \draw [set] 
   ([shift={(-\setbordershift,\setbordershift)}] c1010.north west)
   --([shift={(\setbordershift,\setbordershift)}] c1010.north east)
   --([shift={(\setbordershift,0)}] c1010.south east)
   --([shift={(\setbordershift+\setbordershift,\setbordershift)}] c0010.north west)
   --([shift={(\setbordershift,\setbordershift)}] support.north east)
   --([shift={(\setbordershift,-\setbordershift)}] support.south east)
   --([shift={(\setbordershift,-\setbordershift)}] c0010.south east)
   --([shift={(-\setbordershift,-\setbordershift)}] c0010.south west)
   --([shift={(-\setbordershift,-\setbordershift)}] c1010.south west)
   --cycle;
   
   \draw [set] 
   ([shift={(-\setbordershift,\setbordershift)}] c1100.north west)
   --([shift={(\setbordershift,\setbordershift)}] c1100.north east)
   --([shift={(\setbordershift,0)}] c1000.south east)
   --([shift={(\setbordershift+\setbordershift,\setbordershift)}] c0000.north west)
   --([shift={(\setbordershift,\setbordershift)}] trivial.north east)
   --([shift={(\setbordershift,-\setbordershift)}] trivial.south east)
   --([shift={(\setbordershift,-\setbordershift)}] c0000.south east)
   --([shift={(-\setbordershift,-\setbordershift)}] c0000.south west)
   --([shift={(-\setbordershift,-\setbordershift)}] c1000.south west)
   --([shift={(-\setbordershift,\setbordershift)}] c1000.north west)
   --([shift={(-\setbordershift,-\setbordershift)}] c1100.south west)
   --cycle;
   
   \draw [set] 
   ([shift={(-\setbordershift,\setbordershift)}] c0110.north west)
   --([shift={(\setbordershift,\setbordershift)}] cutoff.north east)
   --([shift={(\setbordershift,-\setbordershift)}] cutoff.south east)
   --([shift={(-\setbordershift,-\setbordershift)}] c0110.south west)
   --cycle;
   
   \draw [set] 
   ([shift={(-\setbordershift,\setbordershift)}] c1110.north west)
   --([shift={(\setbordershift,\setbordershift)}] NL.north east)
   --([shift={(\setbordershift,-\setbordershift)}] NL.south east)
   --([shift={(-\setbordershift,-\setbordershift)}] c1110.south west)
   --cycle;
   
   \end{scope}
\end{tikzpicture}

	\end{minipage}\hspace{1.9mm}
	\begin{minipage}{55mm}
		\scalebox{0.91}{\begin{tikzpicture}[semithick,>=stealth',shorten >=0.5pt,auto,on grid]
  \tikzstyle{class}=[draw,rounded corners=1.2ex,
                     inner sep=0ex,minimum height=3ex,minimum width=6ex]
  \tikzstyle{D}  =[draw=drawD, fill=fillD]
  \tikzstyle{A}  =[draw=drawA, fill=fillA]
  \tikzstyle{F}  =[draw=drawF, fill=fillF]
  \tikzstyle{DA} =[draw=drawDA,fill=fillDA]
  \tikzstyle{DF} =[draw=drawDF,fill=fillDF]
  \tikzstyle{AF} =[draw=drawAF,fill=fillAF]
  \tikzstyle{DAF}=[draw=drawDAF,shade,top color=topDAF,bottom color=botDAF]
  \tikzstyle{set}=[rounded corners,dashed,fill=black!10]
  \tikzstyle{setlabel}=[]
  \def\dv{6ex}
  \def\dh{8ex}
  \def\setbordershift{0.6ex}
  \def\setshift{3.8}
  \def\setfontsize{\small}
  \def\movelabelr{\hspace*{-1mm}}

  \node [class,F]   (c0000)                                   {\Type{set}{halting}{liberal}{weak}};
  \node [class,D]   (c1000) [above left=\dv and \dh of c0000]  {\Type{multiset}{halting}{liberal}{weak}};
  \node [class,A]   (c0010) [above right=\dv and \dh of c0000] {\Type{set}{stabilizing}{liberal}{weak}};
  \node [class,DF]  (c1100) [above=2*\dv of c1000]               {\Type{multiset}{halting}{liberal}{strong}};
  \node [class,DA]  (c1010) [above=2*\dv of c0000]             {\Type{multiset}{stabilizing}{liberal}{weak}};
  \node [class,AF]  (c0110) [above=2*\dv of c0010]               {\Type{set}{stabilizing}{liberal}{strong}};
  \node [class,DAF] (c1110) [above=2*\dv of c1010]               {\Type{multiset}{stabilizing}{liberal}{strong}};
   
   \node[setlabel, anchor=east] (trivial) at (\setshift,0) {\setfontsize $\clsTrivial$\movelabelr};
   \node[setlabel, anchor=east] (support) at (\setshift,\dv) {\setfontsize $\clsCutoffOne$\movelabelr}; 
   \node[setlabel, anchor=east] (InvUnderScalar) at (\setshift,2*\dv) {\setfontsize $\clsMajority \subseteq \cdot \subseteq \clsInvSM$\movelabelr};  
   \node[setlabel, anchor=east] (linspace) at (\setshift,3*\dv) {\setfontsize $\NLinSpace$\movelabelr};
   
   \draw[->,detection]
   (c0000) edge (c1000)
   (c0010) edge (c1010)
   (c0110) edge (c1110)
   ;
   
   \draw[->,acceptance]
   (c0000) edge (c0010)
   (c1000) edge (c1010)
   (c1100) edge (c1110)
   ;
   
   \draw[->,fairness]
   (c1000) edge (c1100)
   (c0010) edge (c0110)
   (c1010) edge (c1110)
   ;
   
   \begin{scope}[on background layer]
   
   \draw [set] 
   ([shift={(-\setbordershift,\setbordershift)}] c1100.north west)
   --([shift={(\setbordershift,\setbordershift)}] c1100.north east)
   --([shift={(\setbordershift,0)}] c1000.south east)
   --([shift={(\setbordershift+\setbordershift,\setbordershift)}] c0000.north west)
   --([shift={(\setbordershift,\setbordershift)}] trivial.north east)
   --([shift={(\setbordershift,-\setbordershift)}] trivial.south east)
   --([shift={(\setbordershift,-\setbordershift)}] c0000.south east)
   --([shift={(-\setbordershift,-\setbordershift)}] c0000.south west)
   --([shift={(-\setbordershift,-\setbordershift)}] c1000.south west)
   --([shift={(-\setbordershift,\setbordershift)}] c1000.north west)
   --([shift={(-\setbordershift,-\setbordershift)}] c1100.south west)
   --cycle;
   
   \draw [set] 
   ([shift={(-\setbordershift,\setbordershift)}] c0010.north west)
   --([shift={(\setbordershift,\setbordershift)}] support.north east)
   --([shift={(\setbordershift,-\setbordershift)}] support.south east)
   --([shift={(-\setbordershift,-\setbordershift)}] c0010.south west)
   --cycle;
   
   \draw [set] 
   ([shift={(-\setbordershift,\setbordershift)}] c1010.north west)
   --([shift={(\setbordershift,\setbordershift)}] InvUnderScalar.north east)
   --([shift={(\setbordershift,-\setbordershift)}] InvUnderScalar.south east)
   --([shift={(-\setbordershift,-\setbordershift)}] c1010.south west)
   --cycle;
   

	\draw [set] 
	([shift={(-\setbordershift,\setbordershift)}] c1110.north west)
	--([shift={(\setbordershift,\setbordershift)}] c1110.north east)
	--([shift={(\setbordershift,0)}] c1110.south east)
	--([shift={(\setbordershift+\setbordershift,\setbordershift)}] c0110.north west)
	--([shift={(\setbordershift,\setbordershift)}] c0110.north east)
	--([shift={(\setbordershift,\setbordershift)}] linspace.north east)
	--([shift={(\setbordershift,-\setbordershift)}] linspace.south east)
	--([shift={(-\setbordershift,-\setbordershift)}] c0110.south west)
	--([shift={(-\setbordershift,-\setbordershift)}] c1110.south west)
	--cycle;
   
   \end{scope}
\end{tikzpicture}

	\end{minipage}
	
	\caption{The seven distributed automata models of \cite{ER20}; their decision power w.r.t.\ labelling predicates for arbitrary networks, and for bounded-degree networks. $\clsInvSM$ stands for \emph{invariant under scalar multiplication}. The other complexity classes are defined in Section \ref{sec:unrestricted}.}
	\label{fig:classes}
\end{figure*}

Our first collection of results is shown in the middle of Figure~\ref{fig:classes}. We prove that all classes with halting acceptance can only decide the trivial labelling properties $\emptyset$ and $\N^\Lambda$. More surprisingly, we further prove that the computational power of \DAf, \dAf, and \dAF{} is very limited. Given a labelled graph $G$ and a number $K$, let $\Cutoff{L_G}{K}$ be the result of substituting $K$ for every component of $L_G$ larger than $K$.
The classes \DAf, \dAf{} can decide a property $\mathcal{L}$ if{}f membership of $L_G$ in $\mathcal{L}$ depends only on $\Cutoff{L_G}{1}$, and 
\dAF\ if{}f membership depends only on $\Cutoff{L_G}{K}$ for some $K \geq 1$. In particular, none of these classes can decide majority.  Finally, moving to the top class \DAF{} causes a large increase in expressive power: \DAF{} can decide exactly the labelling properties in the complexity class $\NL$, i.e.\ the properties $\mathcal{L}$ such that a nondeterministic Turing machine can decide membership of $L_G$ in $\mathcal{L}$ using logarithmic space in the number of nodes of $G$. In particular, \DAF-automata can decide majority, or whether the graph has a prime number of nodes.

In the last part of the paper, we obtain our second and most interesting collection of results. Molecules, cells, or  microorganisms typically have short-range communication mechanisms, which puts an upper bound on their number of communication partners. So we re-evaluate the decision power of the classes for bounded-degree networks, as also done in \cite{angluin2005stably} for population protocols on graphs. Intuitively, nodes know that they have at most $k$ neighbours for some fixed number $k$, and can exploit this fact to decide more properties.
Our results are shown on the right of Figure~\ref{fig:classes}. Both \DAF{} and \dAF{} boost their expressive power to $\NLinSpace$, where $n$ is the number of nodes of the graph. This is the theoretical upper limit since each node has a constant number of bits of memory. Further, the class \DAf{} becomes very interesting. While we are not yet able to completely characterise its expressive power, we show that it can only decide properties 
\emph{invariant under scalar multiplication} ($\clsInvSM$), i.e.\ labelling properties $\mathcal{L}$ such that $L_G \in \mathcal{L}$ if{}f $\lambda \cdot L_G \in \mathcal{L}$ for every $\lambda \in \N$, and that it can decide all properties satisfied by a graph $G$ if{}f $L_G$ is a solution to a system of homogeneous linear inequalities. In particular,  \DAf{} can decide majority, and we have the following surprising fact. If nodes have no information about the network, then they require stochastic-like selection to decide majority; however, if they know an upper bound on the number of their neighbours, they can decide majority even with adversarial selection. In particular, there is a synchronous majority algorithm for bounded-degree networks.

\subsubsection*{Related work.} Decision power questions have also been studied in~\cite{HJKLLLSV15} for a model similar to \Daf{},
and in \cite{ChatzigiannakisMNS13} for a graph version of the mediated population protocol model \cite{MCS11}.  The distinguishing feature of our work is the systematic study of the influence of a number of features on the decision power. 

Further, there exist numerous results about the decision power of different classes of population protocols. Recall that agents of population protocols are indistinguishable and communicate by rendez-vous; this is equivalent to placing the agents in a clique and selecting an edge at every step. Angluin \textit{et al.} show that standard population protocols compute exactly the semilinear predicates
~\cite{AngluinAER07}. Extensions with  absence detectors or cover-time services~\cite{MichailS15}, consensus-detectors~\cite{Aspnes17}, or broadcasts~\cite{BlondinEJ19} increase the power to $\NL$ (more precisely, in the case of~\cite{MichailS15} the power lies between $\clsL$ and $\NL$). Our result $\DAF=\NL$ shows that these features can be replaced by a counting capability. Further, giving an upper bound on the number of neighbours increases the decision power to $\NLinSpace$, a class only reachable by standard population protocols (not on graphs) if agents have identities, or channels have memory~\cite{MCS11,GR09}.

\subsubsection*{Structure of the paper.} Section \ref{sec:pre} recalls the automata models and the results of \cite{ER20}. Section \ref{sec:limitations} presents fundamental limitations of their decision power. Section \ref{sec:extensions} introduces a notion of simulating an automaton by another, and uses it to show that distributed automata with more powerful communication mechanisms can be simulated by standard automata.
Section \ref{sec:unrestricted} combines the results of Sections \ref{sec:limitations} and \ref{sec:extensions} to characterise the decision power of the models of \cite{ER20} on labelling properties (middle of Figure \ref{fig:classes}). Section \ref{sec:degree-bounded} does the same for bounded-degree networks (right of Figure \ref{fig:classes}).

Due to the nature of this research, we need to state and prove many results. For the sake of brevity, each section concentrates on the most relevant result; all others are only stated, and their proofs are given in the appendix.

\section{Preliminaries} \label{sec:pre}
Given sets $X, Y$,
we denote by $\PowerSet{X}$ the power set of $X$,
and by $Y^X$ the set of functions~${X \to Y}$.
We define a closed interval
$\Range[m]{n} \DefEq \SetBuilder{i \in \Integers}{m \leq i \leq n}$
and
$\Range{n} \DefEq \Range[0]{n}$,
for any $m, n \in \Integers$ such that $m \leq n$.

A \emph{multiset} over a set $X$ is an element of $\N^X$. Given a multiset $M \in \N^X$ and $\beta\in\N$,  we let $\Cutoff{M}{\beta}$ denote the multiset given by
$\Cutoff{M}{\beta}(x):=M(x)$ if $M(x)<\beta$ and $\Cutoff{M}{\beta}(x):=\beta$ otherwise. We say that $\Cutoff{M}{\beta}$ is the result of \emph{cutting off} $M$ at $\beta$,
and call the function that assigns $\Cutoff{M}{\beta}$ to $M$ the \emph{cutoff function} for $\beta$. 

Let~$\Alphabet$ be a finite set.
A \intro{($\Alphabet$-labelled, undirected) graph} is a triple
$G = \Tuple{V, E, \lambda}$,
where
$V$~is a finite nonempty set of \intro{nodes},
$E$~is a set of undirected \intro{edges} of the form
$e = \Set{u, v} \subseteq V$
such that $u \neq v$,
and
$\lambda \colon V \to \Alphabet$
is a \intro{labelling}. 

\subsubsection*{Convention} Throughout the paper, 
all graphs are labelled, have at least three nodes, and are connected.

\subsection{Distributed automata}

Distributed automata \cite{ER20} take a graph as input, and either accept or reject it.
We first define distributed machines.

\parag{Distributed machines}
Let~$\Alphabet$ be a finite set of \emph{labels}
and let $\beta \in \Positives$.
A \intro{(distributed) machine}
with \emph{input alphabet}~$\Alphabet$ and \emph{counting bound}~$\beta$
is a tuple
$M~=~\Tuple{Q, \delta_0, \delta, Y, N}$,
where
$Q$~is a finite set of \intro{states},
$\delta_0 \colon \Alphabet \to Q$
is an \intro{initialisation function},
$\delta \colon Q \times \Range{\beta}^{Q} \to Q$
is a \intro{transition function},
and $Y, N \subseteq Q$ are two disjoint sets
of \emph{accepting} and \emph{rejecting} states,
respectively. Intuitively, when $M$ runs on a graph, each node $v$ (or \textit{agent}) with label $\gamma$ is initially in state $\delta_0(\gamma)$ and uses~$\delta$ to update its state, 
depending on the number of neighbours it has in each state; however $v$ can only detect if it has $0, 1, \ldots, (\beta-1)$,
or at least $\beta$ neighbours in a given state. We call $\beta$ the \emph{counting bound} of $M$.

Transitions given by $\delta$ are called \emph{neighbourhood transitions}. We write $q,\mathcal{N}\mapsto q'$ for $\delta(q,\mathcal{N})=q'$. If $q=q'$ the transition is \emph{silent} and may not be explicitly specified in our constructions.
Sometimes $\delta_0,Y,N$ are also irrelevant and not specified, and we just write $M=(Q,\delta)$. 

\parag{Selections, schedules, configurations, runs, and acceptance}
A \emph{selection} of a graph $G = \Tuple{V, E, \lambda}$ is a set~$\Selection \subseteq V$. A
\emph{schedule} is an infinite sequence of selections
$\Schedule = (\Selection_0, \Selection_1, \Selection_2, \ldots) \in (2^V)^\omega$ such that for every~$v \in V$, there exist infinitely many $t \geq 0$ such that $v \in \Selection_t$. 
Intuitively, $\Selection_t$ is the set of nodes activated by the scheduler at time $t$, and schedules must activate every node infinitely often.

A \intro{configuration} of  $M = \Tuple{Q, \delta_0, \delta, Y, N}$ on~$G$ is a mapping
$\Configuration \colon V \to Q$.
We let 
$N_v^{\Configuration} \colon Q \to \Range{\beta}$
denote the \emph{neighbourhood function} that assigns to each $q \in Q$
the number of neighbours of $v$ in state~$q$ at configuration $C$,
up to threshold~$\beta$; in terms of the cutoff function,
$N_v^{\Configuration} = \Cutoff{M_v^C}{\beta}$, where $M_v^C(q) = \big|  \SetBuilder{u}{\Set{u, v} \in E \land \Configuration(u) = q} \big|$. 
The \intro{successor configuration}
of~$\Configuration$ via a selection~$\Selection$
is the configuration $\mathit{succ}_{\delta}(\Configuration, \Selection)$
obtained from~$\Configuration$ by letting all nodes in~$\Selection$ evaluate~$\delta$ simultaneously, and
keeping the remaining nodes idle.
Formally, $\mathit{succ}_{\delta}(\Configuration, \Selection)(v) = \delta
    \bigl(
    \Configuration(v), N_v^{\Configuration}
    \bigr)$ if $v \in S$ and $\mathit{succ}_{\delta}(\Configuration, \Selection)(v) = \Configuration(v)$ if $v \in V \setminus \Selection$.
We write $C\rightarrow C'$ if $C'=\mathit{succ}(C,S)$ for some selection $S$, and $\rightarrow^*$ for the reflexive and transitive closure of $\rightarrow$.
Given a schedule
$\Schedule = (\Selection_0, \Selection_1, \Selection_2, \ldots)$,
the \intro{run} of~$M$ on~$G$ scheduled by~$\Schedule$
is the infinite sequence
$\Tuple{\Configuration_0, \Configuration_1, \Configuration_2, \dots}$
of configurations defined inductively as follows: $C_0(v) = \delta_0(\lambda(v))$ for every node $v$, and  
$\Configuration_{t+1} = \mathit{succ}_{\delta}(\Configuration_t, \Selection_t)$.
We call $\Configuration_0$ the \textit{initial configuration}. A configuration $\Configuration$ is \intro{accepting} if $\Configuration(v) \in Y$ for every $v \in V$, and \intro{rejecting} if $\Configuration(v) \in N$ for every $v \in V$.  
A run~$\Run = \Tuple{\Configuration_0, \Configuration_1, \Configuration_2, \dots}$ of~$M$ on~$G$ is \intro{accepting} resp. \intro{rejecting} if there is  $t \in \Naturals$ such that $\Configuration_{t'}$ is accepting resp.\ rejecting for every $t' \geq t$. This is called acceptance by \emph{stable consensus} in \cite{AADFP06}.

\parag{Distributed automata} A \emph{scheduler} is a pair $\Scheduler = (\SelectionConstr,\FairnessConstr)$, where $\SelectionConstr$ is a \intro{selection constraint} that assigns to every graph  $G=\Tuple{V, E, \lambda}$ a set~$\SelectionConstr(G) \subseteq 2^V$ of \emph{permitted selections} such that every node $v \in V$ occurs in at least one selection $\Selection \in \SelectionConstr(G)$, and $\FairnessConstr$ is a \emph{fairness constraint} that assigns to every graph $G$ a set $\FairnessConstr(G) \subseteq \SelectionConstr(G)^\omega$ of \intro{fair schedules} of $G$.
We call the runs of a machine with schedules in $\FairnessConstr(G)$ \emph{fair runs}
(with respect to~$\Scheduler$).

A \emph{distributed automaton} is a pair $A=(M,\Scheduler)$, where $M$ is a machine 
and $\Scheduler$ is a scheduler satisfying the \emph{consistency condition}: for every graph $G$, either all fair runs of $M$ on $G$ are accepting, or all fair runs of $M$ on $G$ are rejecting.
Intuitively, whether $M$ accepts or rejects $G$ is independent of the scheduler's choices.
$A$ \intro{accepts} $G$ if some fair run of $A$ on $G$ is accepting, and \intro{rejects} $G$ otherwise. The language $L(A)$ of $A$ is the set of graphs it recognises. The \emph{property decided} by $A$ is the predicate $\varphi_A$ on graphs such that $\varphi_A(G)$ holds if{}f $G \in L(A)$.  Two automata are equivalent if they decide the same property.

\subsection{Classifying distributed automata.} 

Esparza and Reiter classify automata according to four criteria: detection capabilities, acceptance condition,  selection, and fairness. The first two concern the distributed machine, and the last two the scheduler.  

\parag{Detection} Machines with counting bound $\beta = 1$ or $\beta \geq 1$ are called \emph{non-counting} or \emph{counting}, respectively (abusing language, non-counting is considered a special case of counting).

\parag{Acceptance}  A machine is \emph{halting} if its transition function does not allow nodes to leave accepting or rejecting states, i.e.\ $\delta(q, P) = q$ for every $q \in Y \cup N$ and every 
$P \in \Range{\beta}^{Q}$. Intuitively, a node that enters an accepting/rejecting state cannot change its mind later. Halting acceptance is a special case of acceptance by stable consensus. 

\parag{Selection} A scheduler  $\Scheduler = (\SelectionConstr,\FairnessConstr)$ is \emph{synchronous} if $\SelectionConstr(G) = \Set{V}$ for every $G = \Tuple{V, E, \lambda}$ (at each step all nodes make a move); \emph{exclusive} if $\SelectionConstr(G) = \{ \{v\} \mid v \in V \}$
(at each step exactly one node makes a move); and  \emph{liberal} if $\SelectionConstr(G) = 2^V$ (at every step some set of nodes makes a move). 

\parag{Fairness} 
A schedule $\Schedule = (\Selection_0, \Selection_1, \ldots) \in \SelectionConstr(G)^\omega$ of a graph $G$ is \emph{pseudo-stochastic} if for every finite sequence $(\Selection[2]_0, \ldots, \Selection[2]_n) \in \SelectionConstr(G)^*$ there exist infinitely many $t \geq 0$ such that $(\Selection[1]_t, \ldots, \Selection[1]_{t+n}) = (\Selection[2]_0,  \ldots, \Selection[2]_n)$.  Loosely speaking, every possible finite sequence of selections is scheduled infinitely often. 
A scheduler $\Scheduler =(\SelectionConstr,\FairnessConstr)$ is \emph{adversarial}  if for every graph $G$, the set $\FairnessConstr(G)$ contains all schedules of  $\SelectionConstr(G)^\omega$ (i.e. we only require every node to be selected infinitely often), and \emph{pseudo-stochastic} if it contains precisely the pseudo-stochastic schedules. 

Whether or not a schedule $\sigma$ of a graph $G=\Tuple{V, E, \lambda}$ is pseudo-stochastic depends on $\SelectionConstr(G)$. For example,  if $\SelectionConstr(G) = \{ V \}$, i.e.\ if the only permitted selection is to select all nodes, 
then the synchronous schedule $V^\omega$ is pseudo-stochastic, but if $\SelectionConstr(G) = 2^V$, i.e.\ if all selections are permitted, then it is not. 

\smallskip This classification yields 24 classes of automata (four classes of machines and six classes of schedulers). 
It was shown in \cite{ER20} that the decision power of a class is independent of the selection type of the scheduler
(liberal, exclusive, or synchronous).  This leaves 8 classes, which we denote using the following scheme:

\begin{center}
  \begin{tabular}{l@{\hspace{5ex}}l@{\hspace{5ex}}l@{\hspace{5ex}}}
    \emph{Detection}     & \emph{Acceptance}      & \emph{Fairness} \\
    \midrule
    \DetectionType{set}: non-counting  & \AcceptanceType{halting}: halting       & \FairnessType{weak}: adversarial scheduling \\
    \DetectionType{multiset}: counting & \AcceptanceType{stabilizing}: stable consensus    & \FairnessType{strong}: pseudo-stochastic scheduling
  \end{tabular}
\end{center}

\smallskip\noindent Intuitively, the uppercase letter corresponds to the more powerful variant. Each class of automata is denoted by a string
$xyz \in
\Set{\DetectionType{set},\DetectionType{multiset}}
\times
\Set{\AcceptanceType{halting},\AcceptanceType{stabilizing}}
\times
\Set{\FairnessType{weak},\FairnessType{strong}}$. Finally, it was shown in \cite{ER20} that $\daf$ and $\daF$ have the same decision power, yielding the seven classes on the left of Figure \ref{fig:classes}.

In the rest of the paper, we generally assume that selection is exclusive (exactly one node is selected at each step). Since for synchronous automata there is only one permitted selection, adversarial and pseudo-stochastic scheduling coincide, and we therefore denote synchronous classes by strings $xy$\texttt{\$}; for example, we write \DAZf.

\section{Limitations} \label{sec:limitations}

Our lower bounds on the decision power of the seven classes follow from several lemmata 
proving limitations of their \emph{discriminating} power, i.e.\  of their ability to distinguish two graphs by accepting the one and rejecting the other.
We present four limitations. We state the first three, and prove the last one, a non-trivial limitation of \dAF-automata.
Recall that $\varphi_A$ denotes the property decided by the automaton $A$.

\subsubsection*{Automata with halting acceptance cannot discriminate cyclic graphs.}
Automata with halting acceptance necessarily accept all graphs containing a cycle, or reject all graphs containing a cycle.
Intuitively, given two graphs $G$ and $H$ with cycles, if one is accepted and the other rejected, one can construct a larger graph in which some nodes
behave as if they were in $G$, others as if they were in $H$. This makes some nodes accept and others reject, contradicting that for every graph the automaton accepts or rejects.

\begin{restatable}{lemma}{DasFLimitation}
Let $A$ be a \DasF-automaton. For all graphs $G$ and $H$ containing a cycle, $\varphi_A(G) = \varphi_A(H)$. \label{lem:trivial-on-cycles}
\end{restatable}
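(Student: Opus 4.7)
The plan is to prove this by contradiction. Suppose $A$ accepts a cyclic graph $G$ and rejects a cyclic graph $H$; I will derive a graph $G'$ on which $A$ cannot be consistent.

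First, I would extract witness selection sequences from halting. Since $M$ is halting, the sets of accepting and rejecting states are absorbing for $\delta$, and since $A$'s pseudo-stochastic scheduler realises every finite selection sequence, $\varphi_A(G) = \text{true}$ forces the existence of a finite $\sigma_G \in (2^{V(G)})^*$ that takes the initial configuration of $A$ on $G$ to an all-accepting configuration. Symmetrically, $\varphi_A(H) = \text{false}$ yields a finite $\sigma_H$ driving $H$ to an all-rejecting configuration. By the halting assumption, once reached these configurations are stable under any further schedule.

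Second, I would construct a connected graph $G'$ by gluing $G$ and $H$ along their cycles. The idea is cycle surgery: pick an edge $\{x_G, y_G\}$ on a cycle of $G$ and $\{x_H, y_H\}$ on a cycle of $H$, delete them, and add cross-edges $\{x_G, x_H\}$ and $\{y_G, y_H\}$. The cycles keep each side internally connected after edge deletion, the cross-edges tie the two sides together, and every vertex retains its original degree. The four bridge vertices have slightly shifted label-neighbourhoods, however, so one must refine the construction further -- for instance by inserting auxiliary padding gadgets along each bridge -- to ensure that the $\beta$-truncated neighbourhood function $N_v^C$ of every bridge vertex agrees with its value in the original graph, using crucially the counting threshold~$\beta$ of $M$.

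Third, I would exhibit an inconsistent fair run on $G'$. Pseudo-stochastic fairness admits a schedule that first replays $\sigma_G$ on $V(G) \subseteq V(G')$ while leaving $V(H)$-vertices idle (hence in initial states), and then replays $\sigma_H$ on $V(H)$ with the $V(G)$-side already frozen. During the first phase every $V(G)$-vertex sees exactly the same $\beta$-truncated neighbourhood counts as during the original run of $A$ on $G$, so this phase mirrors the $\sigma_G$-run and ends with all $V(G)$-vertices in $Y$; halting freezes them. The second phase similarly drives $V(H)$-vertices into $N$. The resulting configuration has vertices in both $Y$ and $N$, both absorbing, so no extension of the schedule can ever reach an all-accepting or all-rejecting configuration, contradicting the consistency of $A$.

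The main obstacle is step two: engineering the gluing so that the four bridge vertices' $\beta$-truncated neighbourhood counts coincide with those they had in $G, H$ throughout the relevant phases. Because halting acceptance forbids any runtime synchronisation between the $G$- and $H$-parts, the match must be purely structural; this is precisely where the counting ability (the uppercase \texttt{D} in \DasF) is essential, since saturating the bridge counts at $\beta$ is what makes the extra bridge structure invisible to $M$.
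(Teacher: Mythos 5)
There is a genuine gap, and it sits exactly where you flag your ``main obstacle'': the gluing step cannot be repaired by static padding. After deleting $\{x_G,y_G\}$ and adding $\{x_G,x_H\}$, the neighbourhood that $x_G$ must see at time $t$ is determined by the \emph{state} of $y_G$ at time $t$ in the original run on $G$ --- a dynamically evolving quantity. A padding gadget is a fixed piece of graph whose vertices evolve according to their own neighbourhoods in $G'$; nothing forces it to track the state trajectory of the deleted neighbour $y_G$. Saturating counts at the threshold $\beta$ only hides a discrepancy when the count in question is already at least $\beta$ without the deleted neighbour, which fails in general (e.g.\ when $y_G$ is $x_G$'s unique neighbour in some state $q$ at some step, $x_G$ sees $N^C_{x_G}(q)=1$ in $G$ but $0$ in $G'$, and no structural saturation can mask a $0$-versus-$1$ difference). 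The same problem recurs, worse, in your second phase: while $\sigma_H$ is replayed, $x_H$ and $y_H$ see frozen $G$-side vertices in accepting states rather than each other, so their runs deviate immediately, and the deviation then propagates into the rest of $V(H)$ with nothing to stop it before any vertex halts.

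The paper's proof closes this gap with two ideas you are missing. First, instead of cross-wiring one copy of $G$ to one copy of $H$, it chains $2g+1$ copies of $G$ and $2h+1$ copies of $H$ (where $g,h$ are the halting times), replacing each deleted endpoint by the \emph{identically labelled} vertex of the next copy; then every junction vertex except the four at the extreme ends of the chain sees exactly its original initial neighbourhood. Second, and crucially, it does not attempt to make even those four vertices behave correctly: it uses a finite-speed-of-information argument. A perturbation can propagate only one edge per step, the middle copies $G^g$ and $H^h$ lie at distance at least $g$ (resp.\ $h$) from every perturbed vertex, and by the halting assumption their nodes are frozen in accepting (resp.\ rejecting) states by time $g$ (resp.\ $h$) --- before the corruption can reach them. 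That distance argument, not neighbourhood-matching at the seam, is the heart of the proof, and without it (or an equivalent substitute) your construction does not go through.
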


\subsubsection*{Automata with adversarial selection cannot discriminate a graph and its covering.}
Given two graphs $G = \Tuple{V_G, E_G, \lambda_G}$ and $H = \Tuple{V_H, E_H, \lambda_H}$, we say that $H$ \emph{covers} $G$ if there is a \emph{covering map} $f \colon V_H \rightarrow V_G$, i.e.\  a surjection that preserves labels and neighbourhoods by mapping the neighbourhood of each $v$ in $H$ bijectively onto the neighbourhood of $f(v)$ in $G$. Automata with adversarial selection cannot discriminate a graph from another one covering it. Intuitively, if $H$ covers $G$ then a node $u$ of $H$ and the node $f(u)$ of $G$ visit the same sequence of states in the synchronous runs of $A$ on $G$ and $H$. Since these runs are fair for adversarial selection, both nodes accept, or both reject.

\begin{restatable}{lemma}{DAfCovering}
Let $A$ be a \DAsf-automaton. For all graphs $G$ and $H$,  if \( H\) is a covering of \(G\), then $\varphi_A(G) = \varphi_A(H)$. \label{lem:cannot-distinguish-covering}
\end{restatable}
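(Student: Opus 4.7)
The plan is to use that, by the main result of \cite{ER20}, the selection criterion does not affect the decision power, so we may assume $A$ is synchronous. In that setting the unique permitted selection is $V$ at every step, so the only schedule is $V^{\omega}$, which is trivially fair under the adversarial condition. Thus $A$ accepts $G$ iff its unique synchronous run on $G$ is accepting, and likewise for $H$.

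Let $f \colon V_H \to V_G$ be the covering map. I would prove by induction on $t$ that $\Configuration^H_t(u) = \Configuration^G_t(f(u))$ for every $u \in V_H$, where $\Configuration^G_t$ and $\Configuration^H_t$ denote the configurations after $t$ steps of the synchronous run of $A$ on $G$ and $H$ respectively. The base case follows because $f$ preserves labels and $\Configuration_0(v) = \delta_0(\lambda(v))$. For the step case, the key point is that since $f$ maps the neighbourhood of $u$ in $H$ bijectively onto the neighbourhood of $f(u)$ in $G$, and by the induction hypothesis corresponding nodes are in the same state, the two multisets of neighbour states coincide, and hence so do their cutoffs at $\beta$. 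Applying $\delta$ with the same current state and the same neighbourhood function yields the same successor state, proving the invariant.

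From the invariant it follows that for every $t$ and every $u \in V_H$, the state $\Configuration^H_t(u)$ lies in $Y$ (respectively $N$) iff $\Configuration^G_t(f(u))$ does. Since $f$ is surjective, $\Configuration^H_t$ is an accepting (resp.\ rejecting) configuration iff $\Configuration^G_t$ is. Therefore the synchronous run of $A$ on $H$ is accepting iff the synchronous run on $G$ is, so $\varphi_A(G) = \varphi_A(H)$. Finally, the consistency condition on distributed automata ensures that the outcome computed via this synchronous witness agrees with the outcome of every fair run under any permitted scheduler, so the conclusion transfers to arbitrary \DAsf-automata.

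The only mild subtlety I anticipate is making precise the reduction to the synchronous case: one must invoke the equivalence between the liberal, exclusive and synchronous variants of \texttt{DAf} proved in \cite{ER20} before arguing about the schedule $V^\omega$, since otherwise the lemma is not immediately about a synchronous automaton. Once that reduction is in place, the inductive step is routine because the covering map preserves neighbourhood multisets exactly, making the cutoff at $\beta$ a non-issue.
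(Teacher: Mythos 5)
Your proof is correct and follows essentially the same route as the paper: both arguments reduce to the synchronous run and show by induction on $t$ that the covering map preserves states, using that it maps neighbourhoods bijectively so the neighbourhood functions (and their cutoffs at $\beta$) coincide. The only cosmetic difference is how the synchronous run is justified as a witness --- the paper simply observes that under adversarial fairness the synchronous schedule is itself a fair schedule of the (liberal) scheduler and invokes the consistency condition, whereas you route through the selection-independence result of \cite{ER20}; both are fine.
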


Let $L_G \colon \Alphabet \rightarrow \N$ assign to each label 
$\ell \in \Alphabet$ the number of nodes $v \in V$ such that $\lambda(v) = \ell$. We call $L_G$ the 
\emph{label count} of $G$. Recall that a labelling property depends only on the label count of a graph, not on its structure. Based on the existence of a $\lambda$-fold covering graph for every $G$ and $\lambda\in\N$, we immediately get the following.

\begin{restatable}{corollary}{DAfScalarMultiplication}
Let $A$ be a \DAsf-automaton deciding a labelling property. For all graphs $G$ and $H$, if $L_H =\lambda L_G$  for some  \( \lambda \in \N_{>0} \), then \( \varphi_A(G)=\varphi_A(H)\). This also holds when restricting to $k$-degree-bounded graphs.
\label{cor:closed-under-scalar-multiplication}
\end{restatable}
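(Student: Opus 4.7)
The plan is to reduce to Lemma~\ref{lem:cannot-distinguish-covering} by constructing, for every graph $G$ and every $\lambda\in\N_{>0}$, a $\lambda$-fold covering $G^{\lambda}$ of $G$ whose label count is exactly $\lambda L_G$. Once $G^{\lambda}$ is available, the corollary is immediate: Lemma~\ref{lem:cannot-distinguish-covering} gives $\varphi_A(G)=\varphi_A(G^{\lambda})$, while $L_{G^{\lambda}}=\lambda L_G=L_H$ together with $A$ deciding a labelling property gives $\varphi_A(G^{\lambda})=\varphi_A(H)$, and chaining yields $\varphi_A(G)=\varphi_A(H)$.

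For the construction itself I would take $V_{G^{\lambda}} \DefEq V_G \times \Range[0]{\lambda-1}$ with labelling $(v,i)\mapsto \lambda_G(v)$, so that each original label class is blown up by a factor of $\lambda$. Putting $\{(u,i),(v,i)\}$ in $E_{G^{\lambda}}$ for every edge $\{u,v\}\in E_G$ and every layer $i$ already makes the projection $(v,i)\mapsto v$ a covering map, but produces $\lambda$ disjoint copies of $G$. To obtain a connected covering (as the paper's convention on graphs requires), I would fix a spanning tree of $G$, keep the parallel edges along tree edges, and twist a single non-tree edge $\{u,v\}$ into $\{(u,i),(v,(i+1)\bmod\lambda)\}$ for every $i$. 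One checks that the projection remains a covering (each node still has a bijective correspondence between its neighbours and those of its image) and that a cycle through the twisted edge threads all $\lambda$ layers, so $G^{\lambda}$ is connected. The bounded-degree version of the corollary is then free, because each node $(v,i)$ has the same degree as $v$ in $G$, so $G^{\lambda}$ is $k$-degree-bounded whenever $G$ is.

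The main obstacle is the case where $G$ is itself a tree: then there is no non-tree edge to twist, and indeed the only connected covering of a tree in the topological sense is the tree itself. I would bypass this by allowing $G^{\lambda}$ to be disconnected in that case: the synchronous-run argument underlying Lemma~\ref{lem:cannot-distinguish-covering} applies independently to each connected component, so the disjoint union of $\lambda$ copies of $G$ serves as a valid witness. This permissive reading is consistent with the paper's phrasing ``the existence of a $\lambda$-fold covering graph for every $G$ and $\lambda\in\N$'' and finishes the corollary, including the bounded-degree refinement.
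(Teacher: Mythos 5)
Your reduction to Lemma~\ref{lem:cannot-distinguish-covering} is the right idea, and your twisted-layer construction of a connected $\lambda$-fold cover is correct whenever $G$ has a non-tree edge. The genuine gap is the tree case, and your proposed escape hatch does not close it. The paper's standing convention is that all graphs are connected, so Lemma~\ref{lem:cannot-distinguish-covering} is only stated (and its proof only justified) for connected $H$; more importantly, the hypothesis ``$A$ decides a labelling property'' only constrains $A$ on graphs satisfying the convention, so your step $\varphi_A(G^{\lambda})=\varphi_A(H)$ from $L_{G^{\lambda}}=L_H$ is unjustified when $G^{\lambda}$ is a disjoint union of $\lambda$ copies of $G$ --- nothing in the definitions ties $A$'s behaviour on a disconnected graph to its behaviour on a connected graph with the same label count. (The informal sentence in the paper about ``the existence of a $\lambda$-fold covering graph for every $G$'' is loose; trees indeed have no connected nontrivial covers, as you observe.)

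The fix is both simpler and is exactly what the paper's proof does: use the labelling-property hypothesis \emph{first}, not last. Since $\varphi_A$ depends only on $L_G$, you may replace $G$ by the cycle $C$ carrying the labels of $L_G$ in some fixed order, and replace $H$ by the cycle $C'$ carrying that label sequence repeated $\lambda$ times. Then $C'$ is a connected $\lambda$-fold cover of $C$ (the winding map), Lemma~\ref{lem:cannot-distinguish-covering} gives $\varphi_A(C)=\varphi_A(C')$, and both cycles have degree $2$, so the $k$-degree-bounded refinement is immediate. This eliminates the need for spanning trees, edge twists, and the tree case entirely.
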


\subsubsection*{Automata with adversarial selection  and non-counting automata cannot discriminate beyond a cutoff.}

Our final results show that for every  \DAsf- or \dAsF-automaton deciding a labelling property there is a number $K$ such that whether the automaton accepts a graph $G$ or not depends only on $\Cutoff{L_G}{K}$, and not on the ``complete'' label count $L_G$. In such a case we say that the property admits a cutoff.  For \DAf-automata, the cutoff $K$ is simply $\beta +1$, where $\beta$ is the counting bound.

\begin{restatable}{lemma}{DAfCutoff}
Let $A$ be a \DAsf-automaton with counting bound $\beta$ that decides a labelling property. For all graphs $G$ and $H$,  if \(\Cutoff{L_G}{\beta+1}=\Cutoff{L_H}{\beta+1}\) then $\varphi_A(G) = \varphi_A(H)$, i.e. \(\varphi_A\) admits a cutoff. \label{lem:finite-cutoff}
\end{restatable}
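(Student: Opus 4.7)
The plan is to use that $\varphi_A$ is a labelling property to replace $G$ and $H$ by complete graphs realising the same label counts, and then analyse the synchronous run on such a symmetric graph, observing that its trace depends on the label count only through $\Cutoff{\cdot}{\beta+1}$. Concretely, I would WLOG take $G = K_n^{L_G}$ and $H = K_{n'}^{L_H}$ (the complete graphs with the prescribed label counts), and invoke the selection-invariance result of \cite{ER20} so as to analyse $A$ under synchronous execution on these complete graphs. By the permutational symmetry of $K_n$, an easy induction on $t$ shows that at every time $t$ any two nodes with the same label are in the same state, so the run is fully captured by a single function $C_t \colon \Lambda \to Q$ with $C_0(\ell') = \delta_0(\ell')$.

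The heart of the argument is to show that the evolution of $C_t$ depends on $L$ only through $\Cutoff{L}{\beta+1}$. For an $\ell'$-labelled node in $K_n^L$, the raw count of state $q$ in its neighbourhood equals $\sum_{\ell''} L(\ell'')\,[C_t(\ell'')=q] \;-\; [C_t(\ell')=q]$, which is cut off at $\beta$ before being fed to $\delta$. Under the hypothesis $\Cutoff{L_G}{\beta+1} = \Cutoff{L_H}{\beta+1}$, any label $\ell''$ on which $L_G$ and $L_H$ disagree must satisfy $L_G(\ell''), L_H(\ell'') \geq \beta+1$. If some such $\ell''$ has $C_t(\ell'')=q$, then each raw count is already $\geq (\beta+1)-1 = \beta$, so both cutoffs equal $\beta$; otherwise all differing terms vanish from the sum, so the raw counts coincide exactly. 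Either way the cutoff counts agree, the transition rule for $C_t$ is the same in the two graphs, and the trajectories match.

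Stable-consensus acceptance then holds iff eventually $C_t(\ell') \in Y$ for every $\ell'$ in the support of $L$; since the cutoff hypothesis preserves the support (a label has count $\geq 1$ on the left iff on the right), the two runs accept or reject together, giving $\varphi_A(G) = \varphi_A(H)$. The main obstacle I anticipate is the very first reduction: justifying cleanly that we may analyse $A$ under synchronous execution in a way that preserves the counting bound $\beta$. If the construction from \cite{ER20} underlying selection invariance were to enlarge $\beta$, one would only obtain a weaker cutoff; so one would either have to verify that the standard construction is counting-bound-preserving, or argue directly that the synchronous runs of $A$'s machine on $K_n$ suffice to determine $\varphi_A$, via the consistency condition of the distributed automaton together with the uniqueness of the synchronous schedule.
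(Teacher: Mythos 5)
Your proposal is correct and follows essentially the same route as the paper's proof: reduce to cliques using that $\varphi_A$ is a labelling property, observe that the synchronous schedule is a fair run of the (liberal, adversarial) automaton so the consistency condition lets you read off acceptance from it, and induct on time showing the collapsed configuration evolves identically, with the key $\beta+1$-to-$\beta$ step coming from a node not counting itself. The only cosmetic difference is that you track the state of each label class while the paper tracks state counts cut off at $\beta+1$; and your anticipated obstacle is resolved exactly by your own second suggestion, which is what the paper does --- no appeal to the selection-invariance construction of \cite{ER20} is needed, since the synchronous run is itself a permitted fair run.
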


The proof that  \dAF-automata also cannot discriminate beyond a cut-off is more involved, and the cutoff value $K$ is a complex function of the automaton. The proof technique is similar to that of Theorem 39 of \cite{AngluinAER07}.

\begin{restatable}{lemma}{dAFCutoff}
\label{lem:dAFCutoff}
Let $A$ be a \dAF-automaton that decides a labelling property. There exists $K \geq 0$ such that for every graph $G$ and $H$, if \(\Cutoff{L_G}{K}=\Cutoff{L_H}{K}\) then $\varphi_A(G) = \varphi_A(H)$, i.e. \( \varphi_A\) admits a cutoff. \label{lem:finite-cutoff-DAsf}
\end{restatable}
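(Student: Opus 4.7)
The plan is a pumping argument modelled on Theorem 39 of \cite{AngluinAER07}. Two features of \dAF-automata would be exploited. First, since $A$ is non-counting, each node's transition depends only on the \emph{set} $\{C(u) : \{u,v\} \in E\} \in 2^Q$ of states present in its neighbourhood; adding a new neighbour in a state already represented does not alter the transition. Second, under pseudo-stochastic fairness with stable consensus, the consistency condition on $A$ is equivalent to a reachability statement: $A$ accepts $G$ iff every configuration reachable from $C_0^G$ can reach an accepting configuration and no reachable configuration can reach a rejecting one. Using the labelling-property hypothesis, I would fix for each label count $L \in \N^\Lambda$ a canonical graph $G_L$ (e.g.\ with a standardised spanning-tree skeleton), so that $\varphi(L) \DefEq \varphi_A(G_L)$ is a well-defined function $\N^\Lambda \to \{0,1\}$, and the goal reduces to showing that $\varphi$ factors through some cutoff $\Cutoff{\cdot}{K}$.

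The key step is the pumping. To each label count $L$ I would associate a finite abstract signature recording, for each "local type" $(q, S) \in Q \times 2^Q$ that appears in some reachable configuration of $A$ on $G_L$, only whether zero, one, or many nodes realise that type, together with the set of signature-level reachability transitions between such types. Since $Q$ is finite, the signature lives in a fixed finite set, independent of $L$. By Dickson's lemma on sequences of label counts in $\N^\Lambda$, combined with the pigeonhole principle on this finite signature space, the signature must stabilise once every component of $L$ exceeds some $K$ depending only on $|Q|$ and the structure of $\delta$. This $K$ is the desired cutoff.

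The main obstacle is correctness of the pumping: showing that two label counts with the same signature yield the same $\varphi$. The natural strategy is to construct $G_{L'}$ from $G_L$ by adding each extra node as a "twin" of an existing node of the same label, and to exhibit a forwards-backwards simulation between the reachability graphs of $A$ on $G_L$ and $G_{L'}$ via the map that forgets the twins. Non-countingness makes the forgetting map compatible with transitions, since extra twins in states already represented at a neighbour are invisible there. The subtle point is that pseudo-stochastic fairness allows the scheduler to activate twins independently, producing reachable configurations on $G_{L'}$ in which a twin has diverged from its original; to match such configurations by a preimage on $G_L$, the cutoff $K$ must be large enough that every local type needed by the dynamics is already realised in $G_L$. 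Verifying this carefully in both directions, and transferring the reachability characterisation of acceptance through the simulation, is the technical heart of the argument that would require the pumping calculations of \cite{AngluinAER07}-style.
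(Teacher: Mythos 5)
Your plan points at the right reference and the right two ingredients (non-countingness and the reachability characterisation of stable consensus under pseudo-stochastic fairness), but two of its load-bearing steps have genuine gaps. First, the choice of canonical graph matters much more than your ``e.g.\ a standardised spanning-tree skeleton'' suggests: the paper's proof works with \emph{star} graphs, and the twin argument really depends on that. A new node twinned with an existing leaf of a star sees exactly the same neighbourhood as its original (just the centre), so by scheduling the twin immediately after the original the two stay in lockstep forever, and the non-counting centre cannot tell one $q$-leaf from two. For a general tree your twin must either be attached to all neighbours of its original (destroying the tree, and changing those neighbours' local views once the twin diverges) or to only one of them (in which case the twin and the original see different neighbourhoods from step one and desynchronise immediately). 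The paper explicitly remarks that the argument does not extend beyond stars, so your forwards--backwards simulation ``in both directions'' would fail at exactly the point you flag as the technical heart.

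Second, your signature recording ``zero, one, or many'' occurrences of each local type is too coarse to certify that the pumped configuration is still stably rejecting, and your appeal to Dickson's lemma does not by itself produce a uniform bound $K$. The missing idea is a monotonicity property: on stars, stable rejection is \emph{downward closed} (removing a duplicate leaf from a stably rejecting configuration keeps it stably rejecting), equivalently the set of configurations that can reach a non-rejecting configuration is upward closed; Dickson's lemma then shows this set has finitely many minimal elements, yielding a constant $m$ such that stable rejection depends only on the $m$-cutoff of the configuration --- not on the $2$-cutoff your signature would record. With $m$ in hand one still needs a pigeonhole over $Q$: among $|Q|(m-1)+1$ leaves that start in the same state $q$, some state $r$ is occupied by at least $m$ of them in the target stably rejecting configuration $D$, and only then does adding one more twin preserve $\Cutoff{D}{m}$ and hence stable rejection of the configuration reached on the enlarged star. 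Without the downward-closure lemma and this quantitative pigeonhole step, the pumping calculation you defer cannot be completed.
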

\begin{proof}[\Proofsketch]
\newcommand{\upcl}[1]{\lceil #1 \rceil}
\newcommand{\ctr}[1]{#1_{\text{ctr}}}
\newcommand{\sco}[1]{#1_{\text{sc}}}

Let $A$ be a \dAF-automaton, and let $Q$ be its set of states. In this proof we consider the class of \emph{star graphs}. A star is a graph in which a node called the \emph{centre} is connected to an arbitrary number of nodes called the \emph{leaves}, and no other edges exist. Importantly, for every graph \(G\), there is a star \(G'\) with the same label count. We consider labelling properties (which do not depend on the graph), so if the property has a cutoff for star graphs, then the property has a cutoff in general. A configuration of a star graph $G$ is completely determined by the state of the centre and the number of leaves in each state. So in the rest of the proof we assume that such a configuration is a pair $C= (\ctr{C}, \sco{C})$, where $\ctr{C}$ denotes the state of the centre of $G$, and $\sco{C}$ is the \emph{state count} of $C$, i.e.\  the mapping that assigns to each $q \in Q$ the number $\sco{C}(q)$ of leaves of $G$ that are in state $q$ at $C$. We denote the \emph{cutoff} of $C$ at a number \(m\) as $\Cutoff{C}{m}:=(\ctr{C}, \Cutoff{\sco{C}}{m})$.

Given a configuration $C$ of $A$, recall that $C$ is rejecting if all nodes have rejecting states. We say that $C$ is \emph{stably rejecting} if $C$ can only reach configurations which are rejecting. Given an initial configuration $C_0$, it is clear that $A$ must reject if it can reach a stably rejecting configuration $C$ from $A$. Conversely, if it cannot reach such a $C$, then $A$ will not reject $C_0$, as there is a fair run starting at $C_0$ which contains infinitely many configurations that are not rejecting.

In the appendix we now use Dickson's Lemma to show that there is a constant $m$ s.t.\ a configuration $C$ of $A$ on a star is stably rejecting iff $\Cutoff{C}{m}$ is. For this it is crucial that for stars stable rejection is \emph{downwards closed} in the following sense: if such a $C$ is stably rejecting and has at least two leaves in a state $q$, then the configuration $C'$ that results from removing one of these leaves is still stably rejecting.

Now, let $C=(\ctr{C},\sco{C})$ denote a configuration of $A$ on a star $G=(V,E)$, and let $q$ denote a state with $\sco{C}(q)\ge |Q|(m-1)+1$. We will show: if $A$ rejects $C$ then it must also reject the configuration $C'=(\ctr{C'},\sco{C'})$ which results from adding a leaf $v_\mathit{new}$ in state $q$ to $G$, i.e.\ $\ctr{C'}:=\ctr{C}$, $\sco{C'}(q):=\sco{C}(q)+1$, and $\sco{C'}(r):=\sco{C}(r)$ for states $r\ne q$.

We know that $A$ rejects $C$, so there is some stably rejecting configuration $D$ reachable from $C$. Our goal is to construct a configuration $D'$ reachable from $C'$ which fulfils $\Cutoff{D}{m}=\Cutoff{D'}{m}$, implying that $D'$ would also be stably rejecting. For this, let $S\subseteq V$ denote the leaves of $G$ which are in state $q$ in $C$. There are $|Q|$ states and $(m-1)|Q|+1$ nodes in $S$, so by the pigeonhole principle there is a state $r\in Q$ s.t. in configuration $D$ at least $m$ nodes in $S$ are in state $r$. Let $v_\mathit{old}$ denote one of these nodes.

To get $D'$, we construct a run starting from $C'$, where $v_\mathit{new}$ behaves exactly as $v_\mathit{old}$, until $D'$ is reached. 
Afterwards, the nodes may diverge because of the pseudo-stochastic scheduler. However, this does not matter as $D'$ is stably rejecting.

Let $\rho=(v_1,...,v_\ell)\in V^*$ denote a sequence of selections for $A$ to go from $C$ to $D$. We construct the sequence $\sigma\in V^*$ by inserting a selection of $v_\mathit{new}$ after every selection of $v_\mathit{old}$, and define $D'$ as the configuration which $A$ reaches after executing $\sigma$ from $C'$. We claim that $D'$ is the same as $D$, apart from having an additional leaf in the same state as $v_\mathit{old}$.

This follows from a simple induction: $v_\mathit{old}$ and $v_\mathit{new}$ start in the same state and see only the root node. As they are always selected subsequently, they will remain in the same state as each other. For the centre we use the property that $A$ cannot count: it cannot differentiate between seeing just $v_\mathit{old}$, or seeing an additional node in the same state. We remark that $G$ being a star is crucial for this argument, which does not extend to e.g.\ cliques.

To summarise, we have shown that for every rejected star $G$ and state $q$ with $L_G(q)\ge (m-1)|Q|+2$ (note the centre), the input $H$ obtained by adding a node with label $q$ to $G$ is still rejected. An analogous argument shows that the same holds for acceptance, and by induction we find that $K:=m(|Q|-1)+2$ is a valid cutoff.
\end{proof}

Since the majority property does not admit a cutoff, in particular we obtain:

\begin{corollary}
No \DAsf{}- or \dAsF-automaton can decide majority.  
\end{corollary}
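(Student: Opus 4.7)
The plan is to derive the corollary directly from Lemmas~\ref{lem:finite-cutoff} and~\ref{lem:finite-cutoff-DAsf}. The key observation is that majority is a labelling property whose truth flips on a unit-sized difference between two potentially very large coordinates of $L_G$, while a cutoff at any finite $K$ erases precisely such distinctions. There is essentially no substantive obstacle: once this mismatch is made precise, a single pair of witness graphs yields the contradiction.

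First I would fix a two-letter alphabet $\Lambda = \{a,b\}$ and formalise majority as the labelling property $\mathcal{L} := \{ L \in \N^\Lambda : L(a) > L(b) \}$. Suppose for contradiction that some automaton $A$ from either class decides $\mathcal{L}$. Since $\mathcal{L}$ is a labelling property, Lemma~\ref{lem:finite-cutoff} (for \DAsf-automata, giving $K := \beta + 1$ where $\beta$ is the counting bound) or Lemma~\ref{lem:finite-cutoff-DAsf} (for \dAsF-automata) yields a constant $K \geq 1$ such that $\varphi_A(G) = \varphi_A(H)$ whenever $\Cutoff{L_G}{K} = \Cutoff{L_H}{K}$.

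Next I would exhibit a witness pair falsifying this consequence. Take any two connected labelled graphs $G$ and $H$ over $\Lambda$ whose label counts are $L_G = (K+1,\, K+2)$ and $L_H = (K+2,\, K+1)$; since each must have $2K+3 \geq 5$ nodes, the paper's standing convention (connected, at least three nodes) is easily met by, e.g., labelled paths. Both coordinates of $L_G$ and $L_H$ exceed $K$, so $\Cutoff{L_G}{K} = (K,K) = \Cutoff{L_H}{K}$, and the cutoff conclusion forces $\varphi_A(G) = \varphi_A(H)$. Yet $L_G \notin \mathcal{L}$ while $L_H \in \mathcal{L}$, contradicting the assumption that $A$ decides $\mathcal{L}$. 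Hence no such $A$ exists, establishing the corollary.
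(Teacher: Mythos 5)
Your proof is correct and follows essentially the same route as the paper, which derives the corollary immediately from Lemmas~\ref{lem:finite-cutoff} and~\ref{lem:finite-cutoff-DAsf} by observing that majority does not admit a cutoff. You simply spell out the witness pair $(K{+}1,K{+}2)$ versus $(K{+}2,K{+}1)$ explicitly, which the paper leaves implicit.
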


\section{Extensions} \label{sec:extensions}

We introduce automata with more powerful communication mechanisms, and show that they can be simulated by standard automata with only neighbourhood transitions. We first present our notion of simulation (Definitions \ref{def:sim1}-\ref{def:sim3}), and then in Sections \ref{subsec:wb}-\ref{subsec:rendez-vous} extend automata with weak versions of broadcast (a node sends a message to all other nodes) and absence detection (a node checks globally if there exists a node occupying a given state), and with communication by rendezvous transitions (two neighbours change state simultaneously). 

\newcommand{\Runs}{\mathrm{Run}}
\newcommand{\Last}{\operatorname{\mathsf{last}}}

\begin{definition}
\label{def:sim1}
Let \(G=(V,E,\lambda)\) be a labelled graph and let $Q,Q'$ denote sets of states, with $Q\subseteq Q'$. For configurations $C_1,C_2:V\rightarrow Q'$ we define the relation $\sim_Q$ as $C_1\sim_QC_2$ iff $C_1(v)=C_2(v)$ for all $v$ with $C_1(v)\in Q$ and $C_2(v)\in Q$. Let $\pi,\pi'$ denote runs over states $Q$ and $Q'$, respectively. We say that $\pi'$ is an \emph{extension} of $\pi$ if there exists a monotonically increasing $g:\N\rightarrow\N$ with $\pi(i)=\pi'(g(i))$ for all $i\in\N$, and $\pi'(j)\sim_Q\pi'(g(i))$ or $\pi'(j)\sim_Q\pi'(g(i+1))$ for all $g(i)\le j\le g(i+1)$.
\end{definition}

To implement complicated transitions in an automaton without extensions, we decompose them into multiple standard neighbourhood transitions. Instead of performing, say, a broadcast atomically in one step, agents perform a sequence of neighbourhood transitions, moving into intermediate states in the process. 
As mentioned in Section~\ref{sec:pre}, the results of \cite{ER20} allow us to use liberal or exclusive selection without changing the decision power. We assume that selection is exclusive, unless stated otherwise.

\begin{definition}
\label{def:sim2}
Let \(G=(V,E,\lambda)\) be a labelled graph. Let $\pi,\pi'$ denote runs of an automaton induced by schedules $s,s'\in V^\omega$, respectively. Let $I,I'$ denote the set of indices where $\pi$ or $\pi'$, respectively, execute non-silent transitions, i.e.\ $I:=\{i \colon \pi_i\ne\pi_{i+1}\}$. We say that $\pi'$ is a \emph{reordering} of $\pi$ if there exists a bijection $f \colon I\rightarrow I'$ s.t.\ $s(i)=s'(f(i))$ for all $i\in\N$, and $f(i)<f(j)$ for all $i<j$ where the nodes $s(i)$ and $s(j)$ are adjacent or identical. If that is the case, we also write $\pi_f:=\pi'$ for the reordering induced by $f$.
\end{definition}

While an extension of a run can execute a single complicated transition in many steps instead of atomically, steps of different transitions, or of different phases of a transition, should not ``interfere''. Ideally all the neighbourhood transitions simulating, say, a broadcast, should be executed before any of the transitions simulating the next one. However, in distributed automata this cannot be guaranteed. This is where we make use of reorderings: We will guarantee that every run can be reordered into an equivalent run in which transitions do not ``interfere''. We will only allow reordering of nodes that are not adjacent, thus ensuring that the reordered run yields the same answer as the original one. 

Lastly, we now introduce a generic model encompassing all of our extended automata, which allows us to define our notion  of simulation for all extensions simultaneously.

\begin{definition}
\label{def:sim3}
We say that $P=(Q,\Runs,\delta_0,Y,N)$ is a \emph{generalised graph protocol}, where $Q$ are states, $\delta_0,Y,N$ are initialisation function, accepting states and rejecting states, respectively, and $\Runs$ is a function mapping every labelled graph \(G=(V,E,\lambda)\) over a given alphabet \(\Alphabet\) to a subset \(\Runs(G) \subseteq(Q^V)^\omega\) of fair runs. We define accepting/rejecting runs and the statement “$P$ decides a predicate $\varphi$” analogously to distributed automata. Further, let $\Automaton$ be an automaton with states $Q'\supseteq Q$. We say that $\Automaton$ \emph{simulates} $P$, if for every fair run $\pi'$ of $\Automaton$ there is a reordering $\pi'_f$ of $\pi'$ and a fair run $\pi\in\Runs$ of $P$, s.t.\ $\pi'_f$ is an extension of $\pi$. If $\Automaton$ simulates $P$, we refer to the states in $Q'\setminus Q$ as \emph{intermediate} states.
\end{definition}

We will apply this general definition to simulate broadcast, absence\nobreakdash-detection, and rendezvous transitions by automata with only neighbourhood transitions. In the appendix we show that if $\pi'$ is a reordering of $\pi$ and $v$ is the node satisfying $s(i)=v=s'(f(i))$, then the neighbourhood of $v$ in $\pi$ at time $i$ and the neighbourhood of $v$ in $\pi'$ at time $f(i)$ coincide. 
Furthermore, we show that an automaton $\Automaton$ that simulates $P$ can be easily transformed into an automaton $\Automaton'$ that also simulates $P$ \emph{and} is equivalent to $P$, i.e., decides the same property as $P$.  

\begin{restatable}{lemma}{restateSimulateEquivalent}\label{lem:simulate-equivalent}
Let $P=(Q,\Runs,\delta_0,Y,N)$ denote a generalised graph protocol deciding a predicate $\varphi$, and $\Automaton$ an automaton simulating $P$. Then there is an automaton $\Automaton'$ simulating $P$ which also decides $\varphi$.
\end{restatable}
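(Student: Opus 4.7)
The plan is to augment $\Automaton$'s state space with a one-bit memory tag in $\Set{y,n}$ per node, arranged so that whenever the node's first component sits in a $P$-state $q$, the tag equals $n$ iff $q\in N$, and to redefine acceptance/rejection in $\Automaton'$ purely via this tag. Intuitively, $\Automaton$ already simulates $P$ faithfully, so once each node carries $P$'s verdict on its sleeve, $\Automaton'$ will decide the same predicate as $P$ without disturbing the simulation.

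Concretely, I would take $Q'' \DefEq Q' \times \Set{y,n}$ with the embedding $Q \hookrightarrow Q''$ given by $q \mapsto (q, \mathrm{tag}(q))$, where $\mathrm{tag}(q) \DefEq n$ if $q \in N$ and $\mathrm{tag}(q) \DefEq y$ otherwise. Set $\delta''_0(\ell) \DefEq (\delta_0(\ell), \mathrm{tag}(\delta_0(\ell)))$. The transition function mirrors $\delta'$ on the first component (reading the first components of the neighbours and ignoring their tags), while on the tag it overwrites to $\mathrm{tag}(q')$ whenever the new first component $q'$ lies in $Q$, and leaves the tag unchanged otherwise, so intermediate states preserve the tag last set by a $Q$-visit. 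Finally, take $Y'' \DefEq Q' \times \Set{y}$ and $N'' \DefEq Q' \times \Set{n}$.

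For the simulation claim I would project every fair run $\pi''$ of $\Automaton'$ onto its first component to obtain a fair run $\pi'$ of $\Automaton$, and then lift the reordering and extension guaranteed by the hypothesis for $\pi'$ back to $\pi''$. The snapshot equality $\pi''_f(g(i))=\pi(i)$ holds literally under the embedding above because whenever a node enters a $Q$-state $q$ its tag is overwritten to $\mathrm{tag}(q)$, exactly the value the embedding demands. For the decision claim, pick any fair run $\pi''$ of $\Automaton'$ and reorder to $\pi''_f$ extending some fair run $\pi \in \Runs(G)$ of $P$. If $\varphi(G)$ holds then $\pi$ is accepting, so $\pi(i) \in Y^V$ for all $i \geq T$; hence $\pi''_f(g(i)) \in (Y \times \Set{y})^V$ after $g(T)$, and by the $\sim_Q$ clause of Definition~\ref{def:sim1} no node's first component can slip into $N$ between two such $Y$-snapshots. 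So no tag is ever overwritten to $n$ after $g(T)$, and $\pi''_f$ is eventually trapped in $(Y'')^V$. Because reorderings preserve per-node neighbourhoods (the remark right after Definition~\ref{def:sim2}), each node's local first-component trajectory and therefore its tag trajectory are identical in $\pi''$ and $\pi''_f$, so $\pi''$ is also accepting. The rejecting case is symmetric, and the consistency condition for $\Automaton'$ is inherited from that of $P$.

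The main obstacle I expect is getting the tag-update rule just right: the tag at a $Q$-snapshot must depend only on the current $P$-state (not on history) so that $\pi''_f(g(i))=\pi(i)$ is honoured literally under the chosen embedding, yet the rule must also dovetail with the $\sim_Q$ analysis across intermediate excursions and with arbitrary reorderings. The overwrite-on-entry/freeze-on-intermediate rule above is tailored precisely to meet both demands, and I expect the formal work to go mainly into verifying this invariant along arbitrary reordered fair runs of $\Automaton'$ (together with the routine check that $\Automaton'$ can recover the cutoff-$\beta$ counts in $\Automaton$'s first-component state set from its own tagged neighbourhood counts).
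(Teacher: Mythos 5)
Your proposal is correct and follows essentially the same route as the paper: the paper's proof also has each node remember its last non-intermediate state and redefines the accepting/rejecting sets of $\Automaton'$ as those states whose remembered $Q$-component lies in $Y$ (resp.\ $N$), then transfers eventual acceptance from the fair run of $P$ through the extension and reordering back to the run of $\Automaton'$. Your only deviations are cosmetic --- storing a one-bit verdict instead of the full last $Q$-state, and defaulting the tag to $y$ on states outside $Y\cup N$, which is harmless since the consistency condition forces every fair run of $P$ to be accepting or rejecting --- and you in fact spell out more of the details (the cutoff-count recovery, the per-node transfer from $\pi''_f$ back to $\pi''$) than the paper does.
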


The automaton \(P''\) constructed in the proof of this lemma is basically \(P'\), except that nodes remember the last state \(q \in Q\) they visited, in addition to their current state \(q' \in Q'\). This allows us to define the accepting/rejecting states of $Q''$ as the pairs $(q', q) \in Q' \times Q$ such that $q$ is an accepting/rejecting state of \(P\). 

\subsubsection*{Notation.} Because of Lemma \ref{lem:simulate-equivalent}, in simulation proofs we often leave out the  accepting and rejecting states of generalised graph protocols and automata. 

\newcommand{\IS}{\mathcal{I}}
\subsection{Weak Broadcasts}
\label{subsec:wb}

Intuitively, a broadcast transition $q\mapsto r,f$ models that an agent in state $q$, called the \emph{initiating} agent or  \emph{initiator}, sends a signal to the other agents, and moves to state $r$;  the other agents react to the signal by moving to new states, determined by their current state and by $f$, a mapping from states to states. Broadcasts are weak, meaning that multiple broadcasts can occur at the same time. When this happens, all initiators send their signals and move to their new states, and for every other agent the scheduler decides which signal it receives and reacts to. It is only guaranteed that every non-initiator receives exactly one signal, and that this signal has been sent.

\begin{definition}\label{def:weakbroadcasts}
A \emph{distributed machine with weak broadcasts} is defined as a tuple $M=(Q,\delta_0,\delta,Q_B,B,Y,N)$, where $(Q,\delta_0,\delta,Y,N)$ is a distributed machine, $Q_B\subseteq Q$ is a set of \emph{broadcast-initiating} states, and $B \colon Q_B\rightarrow Q\times Q^Q$ describes a set of \emph{weak broadcast transitions}, one for each state of $Q_B$. In particular, $B$ maps a state $q$ to a pair $(q',f)$, where $q'$ is a state and $f \colon Q\rightarrow Q$ is a \emph{response function}. We write broadcast transitions as $q\mapsto r,f$, where $f$ is usually given as a set $\{r\mapsto f(r):r\in Q\}$. (Mappings $r\mapsto r$, and silent transitions $q\mapsto q,\operatorname{id}$, $\operatorname{id}$ being the identity function, may be omitted.) Given a configuration $C$ and a selection $S\subseteq V$ of initiators such that $C(v)\in Q_B$ for every $v \in S$, the machine can move to any configuration $C'$ satisfying the following conditions:
\begin{itemize}
\item If $v \in S$, then $C'(v)=q'$, where $q'$ is the state such that $B(C(v))=(q',f)$. 
\item If $v\notin S$, then $C'(v)=f(C(v))$, where 
$B(C(u))=(q',f)$ for some $u \in S$, i.e., $f$ is the response function of an initiator $u$.
\end{itemize}
A valid selection is a nonempty independent set of nodes of $V$.
The set of valid selections is denoted $\IS$. A \emph{schedule} of $M$ is a sequence $\sigma\in(\{n,b\}\times\IS)^\omega$; intuitively, $\sigma(i) = (n, S)$ means that at time $i$ the scheduler asks the agents of $S$ to perform a neighbourhood transition, and $\sigma(i) = (b, S)$ that it asks the agents of $S$ to initiate weak broadcasts. Given a schedule $\sigma$, we generate a \emph{run} $\pi=(C_0,C_1,...)$ as follows. For each step $i\ge1$ either $\sigma(i)=(n,S)$ for $S\subseteq V$ and we execute a neighbourhood transition for $S':=S\setminus C_i^{-1}(Q_B)$, or $\sigma(i)=(b,S)$ for $S\subseteq V$ and we execute a weak broadcast transition on $S':=S\cap C_i^{-1}(Q_B)$. (In either case, if $S'$ is empty we set $C_{i+1}:=C_i$ instead.)

A schedule $\sigma$ is \emph{adversarial} if there are infinitely many $i$ with $\sigma(i)=(b,S)$ for some $S$, or for all $v\in V$ there are infinitely many $i$ with $\sigma(i)=(n,S)$ and $v\in S$. It is \emph{pseudo-stochastic}, if every finite sequence of selections $w\in(\{n,b\}\times\IS)^*$ appears infinitely often in $\sigma$. Given $xyz\in\{\texttt{d},\texttt{D}\}\times\{\texttt{a},\texttt{A}\}\times\{\texttt{f},\texttt{F}\}$, an \emph{$xyz$-automaton with weak broadcasts} is a tuple $(M,\Sigma)$ defined as for an $xyz$-automaton, except that $M$ is a distributed machine with weak broadcasts. In particular, we extend the definitions of fair runs, consensuses, and acceptance to automata with weak broadcasts.

A \emph{strong broadcast protocol} is a tuple $P=(Q,\delta_0,B,Y,N)$ that is defined analogously to a $\dAF$-automaton with weak broadcasts $(Q,\delta_0,\emptyset,Q,B,Y,N)$, except that the set of valid selections is $\IS:=\{\{v\}:v\in V\}$. In other words,  only one agent can broadcast at a given time. This model corresponds to the broadcast consensus protocols of~\cite{BlondinEJ19}.\footnote{The protocols of \cite{BlondinEJ19} also contain rendez-vous transitions, but they can be removed without affecting expressive power.}
\end{definition}


Additionally, to simplify our proofs we assume that all selections $(n,S)$ satisfy $|S|=1$, i.e.\ at each step the scheduler selects one single agent to execute a neighbourhood transition. Observe that we can assume $|S|=1$ without loss of generality. Indeed, since $S$ is an independent set by definition, it only contains non-adjacent nodes, and so after the agents of $S$ execute a neighbourhood transition, be it simultaneously or sequentially, they reach the same states.

\begin{figure*}
\def\svgwidth{147mm}\import{fig/}{sampleexecutions.pdf_tex}
\caption{(a) Prefix of a run of the automaton of Example \ref{ex:dAF}. on a line with exactly five nodes. (b) An extension of the same run, where $\scriptstyle\blacksquare$ denotes intermediate states; only the first $12$ steps are shown. (c) A reordering of the run of (b);  only four steps are shown.}
\label{fig:sampleexecutions}
\end{figure*}
\begin{example}
\label{ex:dAF}
Consider a \dAF-automaton $\ExtendedAutomaton$ with states $\{a, b, x\}$, a neighbourhood transition $x,N\mapsto a$ for every neighbourhood $N \colon Q\rightarrow[1]$ with $N(a)>0$ (i.e.\ an agent moves from $x$ to $a$ if it has at least one neighbour in $a$), and weak-broadcast transitions 
\[a\mapsto a,\{x\mapsto a\} \qquad \text{and} \qquad b\mapsto b,\{b\mapsto a, a\mapsto x\} \ . \]
Figure~\ref{fig:sampleexecutions} shows sample runs of $\ExtendedAutomaton$ on the graph consisting of a line with five nodes. Note that the simultaneous broadcasts at both ends of the line are executed simultaneously, and are received by three and two nodes, respectively. However, the next (and last) broadcast, which is initiated by the bottom node, reaches all nodes.
The reordering depicted in (c) shows the interleaving of two different transitions: while the two ends have already initiated broadcasts, the information has not reached the middle node, and it can execute a neighbourhood transition.
\end{example}

Of course, our model of weak broadcasts would be of limited use if we were not able to simulate it. For this we use a  construction similar to the three-phase protocol of Awerbuch's alpha-synchroniser~\cite{Awerbuch85}.
Instead of simply using it to synchronise, we will propagate additional information, allowing the agents to perform the local update necessary to execute the broadcast.

\newcommand{\Empty}{\square}
\begin{restatable}{lemma}{simulationWeakBroadcast}
Every automaton with weak broadcasts is simulated by some automaton of the same class without weak broadcasts. \label{lem:simulate-weak-broadcast}
\end{restatable}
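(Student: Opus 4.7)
The plan is to construct an automaton $\Automaton'$ whose state space enlarges $Q$ with intermediate phase states, and which implements each weak broadcast transition $q \mapsto r, f$ as a cascade of ordinary neighbourhood transitions inspired by Awerbuch's three-phase alpha-synchroniser. When an agent in a broadcast state $q \in Q_B$ is selected, it does not move directly to $r$; instead it enters an intermediate ``token-carrying'' state that encodes the pair $(r,f)$ (which is finite information, since $B$ is finite). Its neighbours, upon seeing such a state, copy this token and simultaneously apply $f$ to their underlying $Q$-component. This wavefront propagates outward through neighbourhood transitions until every agent has been reached, after which the initiator collapses to $r$ and the remaining tokens are cleaned up, leaving all agents in states of $Q$.

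Concretely I would use three phases per broadcast. In the \emph{search} phase the wavefront sweeps out from the initiator, and each freshly reached non-initiator records which response function captured it and applies $f$ to its underlying state. In the \emph{answer} phase the wavefront stalls when no further neighbour can be captured, and that information propagates back. In the \emph{confirm} phase the initiator moves to $r$ and the cleanup of intermediate markers proceeds. Simultaneous broadcasts are handled by the weak-broadcast semantics itself: the set of initiators is an independent set of valid selections, so their waves cannot collide at an initiator, and when two wavefronts meet at a non-initiator, that agent has already applied one response function; the competing wave is simply absorbed and contributes no further update. This guarantees that each non-initiator applies exactly one response function, as required by Definition~\ref{def:weakbroadcasts}.

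Given a fair run $\pi'$ of $\Automaton'$, I would define the corresponding run $\pi$ of $P$ by coarsening: each completed cascade of micro-steps is collapsed into a single broadcast macro-step, while ordinary neighbourhood transitions of $\Automaton'$ outside any cascade remain themselves. The monotone function $g$ of Definition~\ref{def:sim1} picks out, for each macro-step, the first configuration of the corresponding cascade; the condition $\pi'(j)\sim_Q \pi'(g(i))$ or $\pi'(j)\sim_Q \pi'(g(i+1))$ is met because intermediate configurations only differ from one of the two endpoints on nodes currently carrying a token (hence outside $Q$). The reordering of Definition~\ref{def:sim2} is produced by commuting micro-steps of non-adjacent nodes: a neighbourhood transition of an agent disjoint from an active wavefront can be permuted past the cascade, and two cascades running in disjoint parts of the graph can be serialised. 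Fairness lifts correctly: every fair schedule of $\Automaton'$ in the adversarial or pseudo-stochastic sense yields a fair schedule of $P$ in the corresponding sense, since each macro-step is realised by a finite micro-sequence. Finally, Lemma~\ref{lem:simulate-equivalent} upgrades the construction to one whose decided property agrees with that of the original weak-broadcast automaton.

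The main obstacle I anticipate is showing that the required reordering genuinely exists for arbitrary adversarial schedules of $\Automaton'$. In particular, one must argue that when the scheduler interleaves micro-steps of overlapping cascades in a pathological order, the resulting run can still be rearranged so that each cascade is contiguous, and that every swap only exchanges steps of non-adjacent nodes so as to preserve the neighbourhood seen by each transition. This will rely on a careful induction over prefixes of $\pi'$, using that within any cascade the set of agents with non-$Q$ state forms a connected region; any ordinary neighbourhood transition either occurs outside this region and can be commuted out of the cascade, or occurs inside it and can be shown to be redundant or to be part of the cascade itself. A parallel argument handles counting versus non-counting machines and both acceptance modes, ensuring that $\Automaton'$ lies in the same class as the original automaton.
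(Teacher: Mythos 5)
Your construction is essentially the paper's: the three\nobreakdash-phase Awerbuch-style synchroniser whose intermediate states carry the response function, with the initiator and each captured neighbour applying the local update as they enter the wave. The gap is in how you argue the simulation relation holds. You propose to collapse ``each completed cascade'' into one broadcast macro-step and to obtain the reordering by making each cascade contiguous, serialising cascades in disjoint parts of the graph. That is the wrong decomposition: under the weak-broadcast semantics a single macro-step consists of a whole independent set of initiators firing \emph{simultaneously}, and two cascades whose wavefronts meet can neither be serialised (that would apply a second response function to agents that already responded) nor kept as separate macro-steps. Worse, for asynchronously started cascades it is not even clear which initiators should be grouped into the same macro-step. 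The paper resolves this not cascade-by-cascade but globally: it defines a phase count $\Pc(v,i)$, proves that adjacent nodes' phase counts differ by at most one, and reorders the run by the lexicographic order on $(\Pc(v_i,i),\Pc(v_i,i+1),i)$, so that in the reordered run all nodes complete phase $0\to1\to2\to0$ in lockstep; one full phase cycle — containing \emph{all} the initiators that fired transition (2) during it, which form an independent set by construction — is one weak-broadcast step. Your ``careful induction with commutations'' would have to reinvent exactly this invariant, and your stated plan does not.

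Two further points you leave unaddressed. First, progress: under adversarial scheduling you must show the three-phase automaton is \emph{nonblocking}, i.e.\ that whenever two agents are in different phases some agent can still advance (the paper's Lemma on nonblockingness, checked against the transition guards); without it a ``macro-step realised by a finite micro-sequence'' is not guaranteed and the run of $P$ you extract could be finite or unfair. Second, fairness lifting in the pseudo-stochastic case is not the one-liner you give: one must show that every finite sequence of weak-broadcast selections (including arbitrary independent sets of initiators) occurs infinitely often in the extracted run, which the paper does by exhibiting, for each such selection sequence, a concrete micro-schedule from a recurring configuration. Also, minor: the backward propagation you describe in your ``answer'' phase is unnecessary for broadcasts — phase advancement is purely local (advance when no neighbour is in the previous phase) — and conflates this construction with the absence-detection one.
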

\begin{proof}[\Proofsketch]

Let $\ExtendedAutomaton=(Q,\delta_0,\delta,Q_B,B)$ denote an automaton with weak broadcasts. We define an automaton $\Automaton=(Q',\delta_0', \delta')$ simulating $\ExtendedAutomaton$. The automaton $\Automaton$ has three phases, called 0,1, and 2. A node moves to the next phase (modulo 3) only if every neighbour is in the same phase or in the next. The states of $\Automaton$ are $Q':=Q \cup Q\times\{1,2\}\times Q^Q$. Intuitively, an agent of $\Automaton$ in state $q \in Q$ is in phase 0, and simulates an agent of $\ExtendedAutomaton$ in state $q$;  an agent of $\Automaton$ in state $(q,i,f) \in Q\times\{1,2\}\times Q^Q$ is in phase $i$, and simulates an agent executing $\ExtendedAutomaton$ in state $q$, and initiating or responding to a broadcast with response function $f$. 

Let $\beta$ denote the counting bound of $\ExtendedAutomaton$. To specify the transitions, for a neighbourhood $N:Q'\rightarrow[\beta]$ we write $N[i]:=\sum_{q,f}N((q,i,f))$ for $i\in\{1,2\}$ and $N[0]:=\sum_{q\in Q}N(q)$ to denote the number of adjacent agents in a particular phase, and choose a function $g(N)\in Q^Q\cup\{\Empty\}$ s.t.\ $g(N)=f\ne\Empty$ implies $N((q,1,f))>0$, and $g(N)=\Empty$ implies $N[1]=0$. The function $g$ is used to select which broadcast to execute, if there are multiple possibilities. We define the following transitions for $\delta'$, for all states $q\in Q$ and neighbourhoods $N:Q\rightarrow[\beta]$.
\setcounter{equation}{0}
\begin{align}
q,N&\mapsto\delta(q,N)&&\text{if $q\notin Q_B$ and $N[0]=|N|$} \\
q,N&\mapsto(q',1,f)&&\text{if $q\in Q_B$ and $N[0]=|N|$, with $(q',f):=B(q)$} \\
q,N&\mapsto(f(q),1,f)&&\text{if $g(N)=f\ne\Empty$} \\
(q,1,f),N&\mapsto(q,2,f)&&\text{if $N[0]=0$} \\
(q,2,f),N&\mapsto q&&\text{if $N[1]=0$}
\end{align}

If all neighbours are in phase $0$, the agent either executes a neighbourhood transition via (1) or it initiates the broadcast in (2), depending on the state of the agent. For the latter, the agent immediately performs the local update. Once there is a phase $1$ neighbour, the agent instead executes the broadcast of one of its neighbours via (3) (if there are multiple, $g$ is used to select one). Note that (2) and (3) are indeed well-defined, as $N[0]=|N|$ holds iff $g(N)=\Empty$. Finally, transitions (4) and (5) move agents to the next phase, once all of their neighbours are in the same or the next phase.
\end{proof}

\subsection{Weak Absence Detection}
\label{subsec:wad}
Absence detection, introduced in \cite{MichailS15}, enables agents to determine the \emph{support} of the current configuration, defined as the set of states currently populated by at least one agent. More precisely, an agent that executes an absence-detection transition moves to a new state that depends on the current support. We consider a weaker mechanism where, as for weak broadcasts, multiple absence-detection transitions may occur at the same time. In this case, each agent executing an absence-detection transition moves according to the support of a \emph{subset} of the agents. However, it is ensured that every agent belongs to at least one of these subsets.

While it is possible to define and implement a more general model involving absence-detection, we limit ourselves to a special case to simplify our proofs. In particular, we define a model in which scheduling is synchronous. Further, we  implement a simulation only for graphs of bounded degree.

\begin{definition}\label{def:weak-absence-detection}
A \emph{distributed machine with weak absence-detection} is defined as a tuple $(Q,\delta_0,\delta,Q_A,\AdTrans,Y,N)$, where $(Q,\delta_0,\delta,Y,N)$ is a distributed machine, $Q_A$ is a set of \emph{initiating} states or \emph{initiators}, and $\AdTrans \colon Q_A\times 2^Q\rightarrow Q$ a set of \emph{(weak) absence-detection transitions}. Given a configuration $C$, a selection  $S\subseteq V$ of initiators such that $C(v) \in Q_A$ for every $v \in S$, and a set $S_v\subseteq V$ for every $v\in S$ satisfying $v\in S_v$ and $\bigcup_{v \in S} S_v=V$, the machine can move to any configuration $C'$ with $C'(v):=\AdTrans(v,C(S_v))$ for $v\in S$ and $C'(v):=C(v)$ for $v\notin S$. (Notice that the $S_v$ need not be pairwise disjoint.) We write $q,S\mapsto q'$ to denote  that $\AdTrans(q,S)=q'$ for $q\in Q_A$, $q'\in Q$ and $S\subseteq Q$.

We use the synchronous scheduler, so the only valid selection is $V$. A step at a configuration $C$ is performed by having each agent execute a neighbourhood transition simultaneously, moving to $C'$, followed by an absence-detection with $S:=C^{-1}(S_A)$ as set of initiators, to go from $C'$ to $C''$. If $S$ is empty, the computation hangs, and we instead set $C'':=C$. A \emph{\DAZf-automaton with (weak) absence-detection} is defined analogously to a \DAZf-automaton.
\end{definition}

As for broadcasts, absence detection is implemented using a three phase protocol. To allow the information to propagate back, we use a distance labelling that effectively embeds a rooted tree for each initiating agent.

\begin{restatable}{lemma}{simulationWeakAbsence}\label{lem:simulate-weakabsence}
Every \DAZf-automaton with weak absence detection is simulated by some \DAf-automaton, when restricted to bounded-degree graphs.
\end{restatable}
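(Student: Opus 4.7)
The plan is to extend the three-phase synchronization protocol from Lemma \ref{lem:simulate-weak-broadcast} so that each macro-step of the simulator performs both a synchronous neighborhood transition and one round of weak absence detection. The three-phase counter keeps every agent aligned even though the constructed \DAf-automaton is scheduled adversarially, so the whole graph advances lock-step through macro-steps. During the first sub-phase of each macro-step, every agent computes its new simulated state via $\delta$; during the second sub-phase, agents whose simulated state lies in $Q_A$ initiate an absence-detection round by growing a tree rooted at themselves and collecting the support of that tree.

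The tree construction uses the bounded-degree assumption crucially. Each initiator enters a \emph{root} intermediate state. A phase-0 neighbor of a root adopts a \emph{child} intermediate state that records its original simulated state together with an encoding of which neighbor is its parent; this is possible because a bounded-degree neighborhood has at most $k$ entries and can be encoded in a finite state. The labelling propagates outward, with ties broken by a fixed deterministic rule whenever several trees compete for the same agent. Once a node has no unlabelled neighbors remaining, it reports up to its parent the union of its own simulated state together with the supports reported by its children; these reports are subsets of $Q$ and have constant size, so aggregating them fits in the simulator's finite state. When an initiator has received reports from every child, it applies $\AdTrans$ to obtain its new simulated state and then triggers a dismantling wave that returns every agent of its tree to phase 0, ready for the next macro-step.

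The main obstacle will be establishing correctness in the sense of Definition \ref{def:sim3}: every fair run of the simulator must admit a reordering that is an extension of some fair run of the original absence-detection protocol. Since the three-phase protocol keeps adjacent agents at most one phase apart, non-silent transitions of two adjacent agents necessarily belong to the same macro-step, so one can construct a legal reordering in the sense of Definition \ref{def:sim2} that groups the transitions of each macro-step together. The remaining correctness requirement is that the constructed forest supplies valid sets $S_v$ for the weak absence-detection semantics, namely that $v \in S_v$ (trivially, because the root is in its own tree) and that $\bigcup_v S_v = V$; this will follow from adversarial fairness together with the BFS growth, which guarantees that every agent eventually joins some initiator's tree before the dismantling wave of that tree reaches its root.
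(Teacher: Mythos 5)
Your overall architecture --- a three-phase synchroniser as in Lemma~\ref{lem:simulate-weak-broadcast}, with each macro-step performing one synchronous neighbourhood transition followed by one absence-detection round whose support information is aggregated up a forest rooted at the initiators --- is exactly the plan of the paper's proof. However, there is a genuine gap in how you build that forest: you have each non-initiator record ``which neighbour is its parent'', justified by the degree bound. In this model that is not implementable. Nodes are anonymous and have no port numbers; an agent observes only the \emph{multiset} of its neighbours' states, cut off at the counting bound. Bounding the degree by $k$ does not let an agent name, index, or otherwise distinguish two neighbours that happen to be in the same state, so ``my parent is \emph{that} neighbour'' cannot be stored in the state, and consequently neither ``report to my parent'' nor ``I have received reports from every child'' is locally detectable. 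Without a state-observable, provably acyclic orientation, the upward aggregation can deadlock (two adjacent agents each waiting for the other to report first), and your coverage claim $\bigcup_v S_v=V$ has nothing to rest on.

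The paper closes precisely this gap with a distance labelling over $\Z_{2k+1}\cup\{\mathrm{root}\}$ in place of parent pointers: initiators take the root label, and a non-initiator entering phase~1 adopts a label $d$ such that some neighbour already carries $d-1$ while no neighbour carries $d+1$. A pigeonhole argument exploiting that $2k+1$ is odd and that at most $k$ labels are visible in a neighbourhood shows such a $d$ always exists, and the side condition ``no neighbour carries $d+1$'' guarantees the labels never close a cycle. The parent relation is then encoded purely in states --- your parent is any neighbour labelled $d-1$ --- so an agent can locally detect that all its children have reported by checking that no neighbour still sits in phase~1 with label $d+1$, and the support sets flow to the roots along strictly decreasing labels; the three-phase discipline (no agent enters the reporting phase while a neighbour is still in phase~0) is what ensures every agent joins the forest before any root completes its $\AdTrans$ transition. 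If you replace your parent pointers with such a labelling, the rest of your argument --- the reordering in the sense of Definitions~\ref{def:sim2} and~\ref{def:sim3} and the verification that the resulting sets $S_v$ cover $V$ --- goes through essentially as in the paper.
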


\subsection{Rendez-vous transitions} \label{subsec:rendez-vous}
In rendez-vous transitions two neighbours interact and move to new states according to a joint transition function. They are the communication mechanism of population protocols~\cite{AADFP06}. In fact, population protocols on graphs have also been studied previously \cite{angluin2005stably}, and we use exactly the same model. 

A rendez-vous transition $p,q\mapsto p',q'$ allows two neighbouring nodes $u$ and $v$ in states $p$ and $q$ to interact and change their states to $p'$ and $q'$, respectively. Like neighbourhood transitions, rendez-vous transitions are local, i.e., they only involve adjacent nodes. They are useful to model transactions such as transferring a token from one node to another. A population protocol on graphs, or graph population protocol, is a pair $(Q,\delta)$ where $q$ is a set of states and $\delta \colon Q^2 \rightarrow Q^2$ is  a set of rendez-vous transitions, and $p,q\mapsto p',q'$ denotes $\delta(p,q)=(p',q')$. The formal definition can be found in the appendix.

\begin{restatable}{lemma}{simulationRendezVous}
Every graph population protocol is simulated by some \DAF-automaton. \label{lem:simulate-rendezvous}
\end{restatable}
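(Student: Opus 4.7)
The plan is to construct, given a graph population protocol $P=(Q_P,\delta_P)$, a \DAF-automaton $\Automaton$ whose state set extends $Q_P$ with intermediate states encoding the progress of rendez-vous handshakes, and to show it simulates $P$ in the sense of Definition~\ref{def:sim3}. The core idea is to implement each rendez-vous transition $p,q\mapsto p',q'$ as a multi-phase handshake between two adjacent agents that relies on the counting capability (\texttt{D}) to ensure that each participant has a unique partner before committing, and on pseudo-stochastic fairness (\texttt{F}) to guarantee that such handshakes eventually succeed whenever a transition is enabled.

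Concretely, I would introduce for every transition $t=(p,q\mapsto p',q')\in\delta_P$ intermediate states $(p,\mathtt{init},t)$, $(q,\mathtt{ack},t)$, and $(p',\mathtt{done},t)$. The handshake proceeds in four phases. In phase~1, an agent in state $p$ that sees at least one neighbour in state $q$ may move to $(p,\mathtt{init},t)$. In phase~2, an agent in state $q$ that sees \emph{exactly one} neighbour in state $(p,\mathtt{init},t)$ and no other conflicting intermediate state in its neighbourhood moves to $(q,\mathtt{ack},t)$. In phase~3, an agent in $(p,\mathtt{init},t)$ that sees exactly one neighbour in $(q,\mathtt{ack},t)$ commits to $(p',\mathtt{done},t)$. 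In phase~4, an agent in $(q,\mathtt{ack},t)$ that sees exactly one neighbour in $(p',\mathtt{done},t)$ moves to $q'$, and subsequently the agent in $(p',\mathtt{done},t)$ collapses back to $p'$ once its partner has left the ack state. If at any phase the uniqueness check fails, or if a participant detects a conflict (e.g. a responder sees two initiators with the same $t$), the participant aborts back to its base state. The counting bound needed is $\beta=2$, which suffices to distinguish $0$, $1$, and $\geq 2$ neighbours in each relevant state.

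For correctness I would verify two directions. First, every successful handshake in $\Automaton$ corresponds to a single rendez-vous transition of $P$, since at the moments where the base state of a node changes (init$\to$done, or ack$\to q'$) the uniqueness conditions guarantee that exactly one partner was involved. Second, pseudo-stochastic scheduling ensures fairness: since every finite sequence of selections is scheduled infinitely often, for every configuration in which some rendez-vous is enabled, the specific sequence of selections that completes a handshake (with the right uniqueness holding) occurs infinitely often, so no fair run gets stuck forever in intermediate states.

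The main obstacle will be verifying the simulation property of Definition~\ref{def:sim3}: given a fair run $\pi'$ of $\Automaton$, one must exhibit a reordering $\pi'_f$ that is an extension of a fair run of $P$. The delicate point is that the reordering may only swap steps involving non-adjacent nodes, so one cannot freely group all steps of a successful handshake together if, between them, independent handshakes are happening at distant parts of the graph; however, steps at non-adjacent nodes commute by definition of neighbourhood transitions, so the transitions of any single handshake (which involve two adjacent nodes and their neighbourhoods only) can be pushed adjacent to one another without violating the adjacency constraint of Definition~\ref{def:sim2}. The remaining technical work is to show that aborted/partial handshakes appear, after reordering, as runs of intermediate states that respect $\sim_{Q_P}$ (leaving the projection onto $Q_P$ unchanged), so that the projected run is an honest fair run of $P$. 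Lemma~\ref{lem:simulate-equivalent} then promotes $\Automaton$ to an equivalent automaton deciding the same predicate as $P$, which completes the simulation result.
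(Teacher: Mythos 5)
Your overall architecture is the same as the paper's: a multi-phase handshake through intermediate states, counting bound $\beta=2$ to check that a participant has a unique partner, abort-on-conflict, a reordering that groups the two base-state changes of a handshake together, and pseudo-stochastic fairness for progress. The difference lies in the details of the handshake, and there is a genuine safety gap there. Your phase-1 rule lets an agent in state $p$ enter $(p,\mathtt{init},t)$ as soon as it sees \emph{some} neighbour in state $q$, with no requirement that its other neighbours be idle. Consider a responder $v$ that acks an initiator $u_1$; a second agent $u_2$, adjacent to both $v$ and to some other node still in state $q$, may now also enter $(p,\mathtt{init},t)$. Both $u_1$ and $u_2$ then see exactly one neighbour in $(q,\mathtt{ack},t)$ (namely $v$) and both commit to $(p',\mathtt{done},t)$, while $v$ sees two \texttt{done} neighbours and must abort: two initiator-side base-state changes with no matching responder-side change. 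Relatedly, your commit order is inverted with respect to the paper's: you have the initiator change its base state first (at \texttt{init}$\to$\texttt{done}) and the responder follow, so you must prove that a responder in \texttt{ack} can never abort once its partner has committed --- and your abort rule (``abort on any detected conflict'') does not guarantee this.

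The paper's protocol closes both holes by (i) requiring the searching node to see \emph{all} neighbours in waiting status before initiating, and having every participant fall back to waiting the moment it sees two or more non-waiting neighbours, so that no second handshake can be superposed on a node already engaged in one; and (ii) placing the first irreversible base-state change at the \emph{responder} (the answering node applies $\delta_2$ upon seeing its unique confirming partner), after which all neighbours of the confirming node are waiting and frozen, so that node is forced to complete $\delta_1$ at its next selection. Your reordering and fairness arguments are otherwise in line with the paper's; if you tighten the initiation guard and fix the commitment order (or prove the responder cannot abort past the point of no return), the proof goes through.
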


\section{Unrestricted Communication Graphs} \label{sec:unrestricted}

\newcommand{\Any}{\,\cdot\,}
\newcommand{\TraNs}{tra}
\newcommand{\TraName}[1]{\tag*{⟨\textsf{#1}⟩}\label{\TraNs:#1}}
\newcommand{\TraRef}[1]{\text{\ref{\TraNs:#1}}}
\newcommand{\Token}[1]{#1_\mathrm{token}}
\newcommand{\Step}[1]{#1_\mathrm{step}}
\newcommand{\Resad}[1]{#1_\mathrm{reset}}
\newcommand{\Enable}[1]{#1_\mathrm{enable}}

We prove the characterisation of the decision power of the different classes as presented in the introduction. The classes are defined as follows. For a labelling property $\varphi \colon \N^\Lambda\rightarrow\{0,1\}$ we have
\begin{itemize}
\item $\varphi\in\clsTrivial$ iff $\varphi$ is either always true or always false,
\item $\varphi\in\clsCutoffOne$ iff \( \varphi(L)=\varphi(\Cutoff{L}{1})\) for all multisets \(L\in\N^\Lambda\),
\item $\varphi\in\clsCutoff$ iff there exists a \(K\in \N\) s.t.\ \( \varphi(L)=\varphi(\Cutoff{L}{K})\) for all \(L\in\N^\Lambda\), and
\item \(\varphi\in\NL\) iff $\varphi$ is decidable by a non-deterministic log-space Turing machine.
\end{itemize}

The proof proceeds in the following steps:

\begin{enumerate}
\item \DasF\ and therefore all automata-classes with weak acceptance have an upper bound of $\clsTrivial$ and thus decide exactly $\clsTrivial$. This proof also works when restricted to degree-bounded graphs.
\item \DAsf\ and therefore also \dAsf\ can decide at most $\clsCutoffOne$.
\item \dAsf\ and therefore also \DAsf\ can decide at least $\clsCutoffOne$.
\item \dAsF\ can decide exactly $\clsCutoff$.
\item \DAsF\ can decide exactly the labelling propertis in $\NL$.
\end{enumerate}

In this section we sketch the hardest proof, the characterisation for \DAF. All other proofs can be found in the appendix. We start with some conventions and notations.

\subsubsection*{Conventions and notations.}\label{sssec:notation}
When describing automata of a given class (possibly with weak broadcasts or weak absence detection) we specify only the machine; the scheduler is given implicitly by the fairness condition and selection criteria of the class. Further, when the initialisation function and the accepting/rejecting states are straightforward, which is usually the case, we only describe the sets of states and transitions. So, for example, we speak of the automaton $(Q, \delta)$, or the automaton with weak broadcasts $(Q, \delta, Q_B, B)$. We even write $(Q,\delta)+B$; in this case $Q_B$ is implicitly given as the states of $B$ initiating non-silent broadcasts, i.e.\ $Q_B:=\{q:B(q)\ne(q,\operatorname{id})\}$.

Given an automaton $\ExtendedAutomaton$ (possibly with weak broadcasts or absence detection) with set of states $Q$ and a set $Q'$, we let $\ExtendedAutomaton \times Q'$ denote the automaton with set of states $Q \times Q'$ whose transitions leave the $Q'$ component of the state untouched. In other words, if a transition makes a node move from state $(q,q')$ to state $(p, p')$, then $q' = p'$. The definition of the transitions is straightforward, and we omit it. 

We often combine the two notations above. Given an automaton $\Automaton$, we write for example $\ExtendedAutomaton = \Automaton \times Q' + B$ to denote the automaton with weak broadcasts obtained by first constructing $\Automaton \times Q'$, and then adding the set $B$ of weak broadcast transitions.

\begin{restatable}{lemma}{restateDAFequalsNL}
\DAF-automata decide exactly the labelling properties in $\NL$. \label{lem:DAsF-compute-NL}
\end{restatable}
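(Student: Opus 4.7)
The plan is to prove the two inclusions \DAF $\subseteq$ \NL and \NL $\subseteq$ \DAF on labelling properties separately. The first direction will be handled by a star-graph reduction; the second will chain the \NL-completeness of broadcast consensus protocols from \cite{BlondinEJ19} with the simulation results of Section~\ref{sec:extensions}.

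For \DAF $\subseteq$ \NL, I would fix a \DAF-automaton $A$ deciding a labelling property $\mathcal{L}$. For any label count $L$, let $G_L$ denote a star whose labelling realises $L$; since $\mathcal{L}$ depends only on $L_G$, membership of $L$ in $\mathcal{L}$ is equivalent to acceptance of $G_L$ by $A$. A configuration of $A$ on $G_L$ is completely determined by the state of the centre together with the leaf-state-count vector in $\N^Q$, which fits in $O(|Q|\log n) = O(\log n)$ bits, and one step of $A$ is computable in logspace. Because $A$ has pseudo-stochastic fairness and stable-consensus acceptance, $A$ accepts $G_L$ iff the initial configuration can reach a \emph{stably accepting} configuration $C$---one from which no rejecting configuration is reachable. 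An \NL\ machine therefore guesses $C$ and the path to it in logspace; verifying stable acceptance of $C$ is a co-reachability question, which lies in co-\NL\ and hence in \NL\ by Immerman--Szelepcs\'enyi.

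For \NL $\subseteq$ \DAF, the starting point will be \cite{BlondinEJ19}: every \NL\ predicate on label counts is computed by a broadcast consensus protocol (BCP), i.e.\ a population protocol extended with strong broadcasts on a clique. My plan is to turn a BCP into a \DAF-automaton with weak broadcasts on arbitrary graphs, and then invoke Lemma~\ref{lem:simulate-weak-broadcast} to remove the broadcast extension. Global broadcasts reach every node regardless of topology, and rendez-vous transitions compile into broadcasts plus local updates (in the spirit of Lemma~\ref{lem:simulate-rendezvous}), so the clique assumption will not pose a real obstacle. The real mismatch is between the BCP's \emph{strong} broadcasts (one initiator per step) and our \emph{weak} broadcasts (an independent set of initiators per step). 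To bridge it, I would pair the BCP simulation with an auxiliary token-election sub-protocol, also built from weak broadcasts, that designates a single node as the unique BCP-initiator; whenever two broadcasts collide, the election resets and the simulated BCP rolls back by advancing a generation counter.

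The main obstacle will be the correctness of this reset-and-restart construction. On the one hand, the simulation must never commit to a stable consensus based on a truncated BCP execution, which I plan to enforce by requiring all nodes to share the same generation before the consensus can be read off. On the other hand, pseudo-stochastic fairness must guarantee that, after finitely many collisions, a reset-free tail appears along which the BCP of \cite{BlondinEJ19} drives the configuration to stable consensus; this follows because every finite selection sequence---in particular every sufficiently long single-initiator prefix---occurs infinitely often along every pseudo-stochastic schedule. Tying these two facts together against the concrete transition rules of the BCP is where the bulk of the technical work of the lower bound will go.
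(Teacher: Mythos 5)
Your overall architecture coincides with the paper's on both directions: for the upper bound you collapse the input to a canonical graph determined by the label count and simulate the resulting counting abstraction in logarithmic space (the paper uses the clique rather than the star and then simply cites Proposition~4 of \cite{BlondinEJ19}); for the lower bound you start from the $\NL$-completeness of strong broadcast protocols \cite{BlondinEJ19} and reduce strong broadcasts to weak ones via a leader election with collision-triggered resets, which is exactly the paper's plan. One point in the upper bound needs repair: your acceptance criterion for the $\NL$ machine is stated incorrectly. Reaching a configuration ``from which no rejecting configuration is reachable'' does not suffice, because configurations can be neither accepting nor rejecting, and a bottom strongly connected component mixing accepting with such neutral configurations yields a fair run that never stabilises to an accepting consensus. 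The correct test is reachability of a configuration from which \emph{only accepting} configurations are reachable; with that fix your Immerman--Szelepcs\'enyi argument goes through.

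More seriously, the heart of the lower bound is \emph{how collisions between would-be leaders are detected}, and your sketch delegates this to a ``token-election sub-protocol, also built from weak broadcasts.'' Under the weak-broadcast semantics of Definition~\ref{def:weakbroadcasts}, when two initiators fire simultaneously each non-initiator receives and reacts to exactly one signal, and the initiators learn nothing about each other, so a simultaneous double broadcast is not locally observable by anyone at the time it happens; a ``generation counter'' cannot repair this in a finite-state model, since it must be bounded and therefore wraps around. The paper sidesteps the problem by running the election through \emph{rendez-vous} token passing (Lemma~\ref{lem:simulate-rendezvous}): tokens wander via $(0,L)\mapsto(L,0)$, two tokens that become adjacent annihilate via $(L,L)\mapsto(0,\bot)$, and pseudo-stochastic fairness guarantees that distinct tokens eventually meet, so each reset strictly decreases the token count while leaving at least one survivor. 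The surviving $\bot$-agent then broadcasts a full restart that restores every agent's \emph{stored initial input} (an extra $Q$-component of the state), rather than rolling anything back. Your worry about ``committing to a consensus based on a truncated execution'' is a non-issue: acceptance is by stable consensus, so transient wrong outputs before the final reset are harmless, and no generation-agreement mechanism is needed.
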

\begin{proof}[\Proofsketch]
First, we argue why \DAF-automata can decide only labelling properties in $\NL$. Let \(\ExtendedAutomaton\) be a \DAF-automaton deciding a labelling property $\varphi$. We exhibit a log-space Turing machine that given a labelled graph \(G=(V, E, \lambda)\) decides whether \(\ExtendedAutomaton\) accepts $G$. Since $\varphi$ is a labelling property, $\varphi(G)=\varphi(\hat{G})$ for the unique clique $\hat{G}$ with set of nodes $V$ and labelling $\lambda$. The Turing machine therefore ignores \(G\) and simulates $M$ on $\hat{G}$. A configuration of $\hat{G}$ is completely characterized up to isomorphism by the \emph{number} of agents in each state; in particular, it can be stored using logarithmic space. In \cite[Proposition~4]{BlondinEJ19} it is shown that any class of automata whose configurations have this property, and whose step relation is in $\NL$ (i.e., there is a log-space Turing machine that on input $(C, C')$ decides if the automaton can move from $C$ to $C'$), can only decide properties in $\NL$. Since the step relation of \DAF-automata on cliques is certainly in $\NL$, the result follows.

Now we show the other direction. 
It is known that strong broadcast protocols decide exactly the predicates in $\NL$~\cite[Theorem~15]{BlondinEJ19}.
Therefore, it suffices to show that for every strong broadcast protocol there is an equivalent \DAF-automaton.
By Lemma \ref{lem:simulate-weak-broadcast} \DAF-automata can simulate weak broadcasts, and so, loosely speaking, the task is to simulate strong broadcasts with weak ones.

Let $P=(Q,\delta,I,O)$ be a strong broadcast protocol. 
We start with a graph population protocol  $\Token{P}:=(\Token{Q},\Token{\delta})$, with states $\Token{Q}:=\{0,L,L',\bot\}$ and rendez-vous transitions $\Token{\delta}$ given by
\begin{equation}
(L,L)\mapsto(0,\bot),\quad(0,L)\mapsto(L,0),\quad(L,0)\mapsto(L',0) \TraName{token}
\end{equation}
\noindent Now we construct a \DAF-automaton $\Token{\Automaton}=(\Token{Q}',\Token{\delta}')$ simulating $\Token{P}$ using Lemma~\ref{lem:simulate-rendezvous}, and combine it with $P$ by setting $\Step{\ExtendedAutomaton}:=\Token{\Automaton}\times Q+\TraRef{step}$ 
, where \TraRef{step} is a weak broadcast defined as
\begin{equation}
(L',q)\mapsto(L,q'),\{(t,r)\mapsto(t,f(r)):(t,r)\in \Token{Q}'\times Q\} \TraName{step}
\end{equation}
for each broadcast $q\mapsto q',f$ in $\delta$. Finally, let $\Step{\Automaton}=(\Step{Q}',\Step{\delta}')$ be a \DAsF-automaton simulating $\Step{\ExtendedAutomaton}$, which exists by Lemma~\ref{lem:simulate-weak-broadcast}. 

Intuitively, agents in states $L,L'$ have a \emph{token}.  
If we could ensure that initially there is only one token in $L, L'$, then we would be done. Indeed, in this case at each moment only the agent with the token can move; if in $L$,  it initiates a (simulated) rendez-vous transition, and if in $L'$, a weak broadcast. Since no other agent is executing a weak broadcast at the same time, the weak broadcast is received by all agents, and has the same effect as a strong broadcast. 

We cannot ensure that initially there is only one token, but 
if the computation starts with more than one, then two tokens eventually meet using transition \TraRef{token} and an agent moves into the \emph{error state }$\bot$. We design a mechanism to restart the computation after this occurs, now with fewer agents in state $(L,\Any)$, guaranteeing that eventually the computation is restarted with only one token. For this we again add an additional component to each state and consider the protocol $\Resad{\ExtendedAutomaton}:=\Step{\Automaton}\times Q+\TraRef{reset}$, where \TraRef{reset} are the following broadcast transitions, for each $q,q_0\in Q$.
\begin{equation}
((\bot,q),q_0)\mapsto((L,q_0),q_0),\{(r,r_0)\mapsto((0,r_0),r_0):r\in\Step{Q}',r_0\in Q\} \TraName{reset}
\end{equation}
For $\Resad{\ExtendedAutomaton}$ we define the input mapping $\Resad{I}(x):=((L,I(x)),I(x))$ and the set of accepting states $\Resad{O}:=\{((r,q),q_0):q\in O,q_0\in Q,r\in\{0,L\}\}$. Using Lemma~\ref{lem:simulate-weak-broadcast} (and Lemma~\ref{lem:simulate-equivalent}) we get a \DAF-Automaton equivalent to $\Resad{\ExtendedAutomaton}$, so it suffices to show that $\Resad{\ExtendedAutomaton}$ is equivalent to $P$.


In the appendix, we show that a run of $\Resad{\ExtendedAutomaton}$ starting with more than one token will eventually reset and restart the computation with strictly fewer tokens, until only one token is left. After this moment, \TraRef{reset} is never executed again, and so we are left with a run of $\Step{\ExtendedAutomaton}$, which stabilises to a correct consensus.
\end{proof}

\section{Bounded-degree Communication Graphs} \label{sec:degree-bounded}

We characterise the decision power of the models when the degree of the input graphs is at most \(k\) for some constant \( k \in \N\). Many results for the unrestricted set of graphs continue to hold, in particular Corollary~\ref{cor:closed-under-scalar-multiplication}, proving that \DAf-automata can only compute properties invariant under scalar multiplication (called $\clsInvSM$ in Figure~\ref{fig:classes}), as well as the result that automata with halting acceptance can only decide trivial properties. The new results are:

\begin{enumerate}
\item For every $k \geq 3$ the expressive power of \dAsf{} is precisely $\clsCutoffOne$.
\item \DAsF- and \dAsF-automata decide exactly the labelling properties in $\NLinSpace$.
\item \DAsf{} can decide all homogeneous threshold predicates, in particular majority. This is a proper subset of $\clsInvSM$ (the latter contains e.g.\ the divisibility predicate $\varphi(x,y)\Leftrightarrow x\vert y$), so there is a gap between our upper and lower bounds for \DAf.

\end{enumerate}

We describe the \DAf-automata for homogeneous threshold predicates. All other proofs can be found in the appendix.

\renewcommand{\TraNs}{tra2}
\NewCommand{\BoundForMaj}{E}
\newcommand{\Cancel}[1]{#1_\mathrm{cancel}}
\newcommand{\No}{\square}
\newcommand{\Detect}[1]{#1_\mathrm{detect}}
\newcommand{\Bc}[1]{#1_\mathrm{bc}}
\newcommand{\Riset}[1]{#1_\mathrm{reset}}

\subsection{\DAf{} decides all homogeneous threshold predicates on bounded-degree graphs}
Let $\varphi:\N^l\rightarrow\{0,1\}$, $\varphi(x_1,...,x_l)\Leftrightarrow a_1x_1+...+a_lx_l\ge 0$ denote an arbitrary homogeneous threshold predicate, with $a_1,...,a_l\in\Z$, and let $k$ denote the maximum degree of the communication graph.

\parag{Local Cancellation}
We first define a protocol that performs local updates. Each agent stores a (possibly negative) integer contribution. If the absolute value of the contribution is large, then the agent will try to distribute the value among its neighbours. In particular, if a node $v$ has contribution $x$ with $x>k$, then it will “send” one unit to each of its neighbours with contribution $y\le k$. Those neighbours increment their contribution by $1$, while $v$ decrements its contribution accordingly. (This happens analogously for $x<-k$, where $-1$ units are sent.) Agents may receive multiple updates in a single step, or may simultaneously send and receive updates.

We define a \DAZf-automaton with weak absence detection $\Cancel{\ExtendedAutomaton}$ $:= (\Cancel{Q},\Cancel{\delta},\emptyset,\emptyset)$, but use only neighbourhood transitions for the moment. We use states $\Cancel{Q}:=\{-\BoundForMaj,...,\BoundForMaj\}$. Here $\BoundForMaj:=\max\{|a_1|,...,|a_l|\}\cup\{2k\}$ is the maximum contribution an agent must be able to store: any agent with contribution $x$ s.t.\ $|x|\le k$ may receive an increment or decrement from up to $k$ neighbours, so $\BoundForMaj \ge k+k$. The transitions $\Cancel{\delta}$ are
\begin{equation}
\begin{aligned}
&x,N\mapsto x-N[-\BoundForMaj,-k{-}1]+N[k{+}1,\BoundForMaj] &\quad&\text{for $x=-k,...,k$}\\
&x,N\mapsto x-N[-\BoundForMaj,k] &&\text{for $x=k+1,...,\BoundForMaj$}\\
&x,N\mapsto x+N[-k,\BoundForMaj] &&\text{for $x=-\BoundForMaj,...,-k-1$} 
\end{aligned} \TraName{cancel}
\end{equation}
Here we write $N[a,b]:=\sum_{i=a}^bN(i)$ for the total number of adjacent agents with contribution in the interval $[a:b]$. As we use the synchronous scheduler, at each step all agents make a move. It is thus easy to see that \TraRef{cancel} preserves the sum of all contributions $\sum_vC(v)$ for a configuration $C$, and that it does not increase $\sum_v|C(v)|$.

We can now show that the above protocol converges in the following sense:
\begin{restatable}{lemma}{restatelocalcancellingtwo}\label{lem:localcancelling2}
Let $\pi=(C_0,C_1,...)$ denote a run of $\Cancel{\ExtendedAutomaton}$ with $\sum_vC_0(v)<0$. Then there exists $i \geq 0$ such that either all configurations $C_i,C_{i+1},...$ only have states in $\{-\BoundForMaj,...,-1\}$, or they only have states $\{-k,...,k\}$.
\end{restatable}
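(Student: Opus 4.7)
The argument centres on the potential $\Phi(C) \DefEq \sum_v |C(v)|$, together with the easy observation that $\sum_v C_i(v)$ is preserved by every synchronous step (each unit a node with contribution $>k$ emits is received by a neighbour, and dually for contributions $<-k$). A vertex-by-vertex computation of $\Delta\Phi$, using the triangle inequality for medium vertices (contribution in $[-k,k]$), yields the sharper bound $\Phi(C_{i+1}) - \Phi(C_i) \le -2\,e_{LH}(C_i)$, where $e_{LH}$ counts edges between vertices with contribution $\le -k-1$ (set $L$) and those with contribution $\ge k+1$ (set $H$). Since $\Phi$ is non-negative and non-increasing it stabilises at some value $\Phi^*$; from that point on $L$ and $H$ are never adjacent. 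Moreover the configuration space is finite and the dynamics deterministic, so the run eventually enters a periodic cycle, on which $\Phi \equiv \Phi^*$.

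\textbf{Structural invariant and partition.} I would then examine which vertices make the bound on $\Delta\Phi$ tight, extracting a structural invariant valid at every configuration of the cycle: every vertex $v$ with $C(v) \in [-k,k]$ satisfies $\deg_L(v)=0$ if $C(v)>0$, $\deg_H(v)=0$ if $C(v)<0$, and at least one of the two if $C(v)=0$. Combined with the transition rules this partitions the vertex set, in the cycle, into three classes closed under the dynamics --- $V^+$ (always $C>0$), $V^-$ (always $C<0$), and $M_0$ (always $C=0$) --- and each $M_0$-vertex has no $L$- or $H$-neighbour at any step. In particular, any vertex that visits $H$ at some cycle step belongs to $V^+$ and has all its graph-neighbours in $V^+$, and dually for $L$.

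\textbf{Spanning argument and conclusion.} I would then show that, in the cycle, either $H$ and $L$ are empty at every step (yielding the target $\{-k,\ldots,k\}$) or every vertex lies in $V^-$ (yielding the target $\{-\BoundForMaj,\ldots,-1\}$). Suppose some vertex $v$ visits $H$ at some step $j$. Each graph-neighbour $u$ of $v$ lies in $V^+$; moreover $u$ must itself visit $H$, for otherwise $u$ stays in $[1,k]$ throughout the cycle, and then the periodicity identity $\sum_{j'}\Delta C(u,j')=0$ together with $\Delta C(u,j')=\deg_H(u,j')\ge 0$ would force $\deg_H(u,\cdot)\equiv 0$, contradicting that $u$ is adjacent to $v \in H$ at step $j$. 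Iterating and using connectedness of $G$, the set of vertices visiting $H$ spans $V$, whence $\sum_v C(v) > 0$ --- contradicting $\sum_v C_0(v)<0$. A symmetric argument for $L$ shows that if any vertex ever enters $L$ in the cycle then every vertex lies in $V^-$, so every cycle configuration has all contributions strictly negative. I expect the main technical obstacle to be the extraction of the structural invariant from $\Delta\Phi=0$: it requires a careful case split on the sign of $C(v)$ and on whether $C(v)+\deg_H(v)-\deg_L(v)$ preserves that sign, with the boundary case $C(v)=0$ needing extra care. Once this invariant is in place, the cycle partition and the spanning argument follow cleanly.
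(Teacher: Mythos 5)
Your proposal is correct, but it takes a genuinely different route from the paper. The paper proves the statement via an auxiliary lemma (Lemma~\ref{lem:localcancelling1}) asserting that states in $\{k{+}1,\ldots,\BoundForMaj\}$ and states in $\{-\BoundForMaj,\ldots,0\}$ can coexist only finitely often; its proof fixes the set $S_0$ of nodes with nonpositive contribution, layers the remaining nodes by graph distance to $S_0$, and shows that the lexicographically ordered tuple of layer-sums strictly increases at every step, contradicting finiteness of the configuration space. The lemma is then applied twice (once after flipping all signs) and combined with conservation of the negative sum $\sum_v C(v)$. You instead take the $\ell_1$-potential $\Phi(C)=\sum_v|C(v)|$ --- which the paper observes is non-increasing but never exploits --- sharpen the monotonicity to $\Delta\Phi\le -2e_{LH}$, and then analyse the eventually periodic orbit of the (deterministic, synchronous) run: tightness of the triangle inequality at $\Delta\Phi=0$ yields your structural invariant, the sign classes $V^+,V^-,M_0$ are closed under the dynamics and hence constant along the cycle, and the spanning argument over the connected graph reduces both alternatives to the conserved negative sum. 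I checked the key steps: the bound $\Delta\Phi\le-2e_{LH}$ follows from the edge-counting you describe (the $H$- and $L$-terms contribute $-(e_{HL}+e_{HM})$ and $-(e_{LM}+e_{LH})$ exactly, the medium terms at most $e_{ML}+e_{MH}$); the tightness case split for $|x+(\deg_H-\deg_L)|=|x|+\deg_H+\deg_L$ gives exactly your invariant; a vertex in $H$ has $x'\ge k+1-k\ge1$ so positivity is indeed preserved, and periodicity upgrades ``eventually always positive'' to ``always positive on the cycle''; and the neighbour-of-$H$ argument via $\sum_{j'}\Delta C(u,j')=0$ is sound. What each approach buys: yours avoids the paper's somewhat delicate decomposition of a synchronous step into pairwise ``transactions'' and the distance-layer bookkeeping, at the price of needing periodicity of the run and the careful equality analysis; the paper's argument needs no periodicity and localises the progress measure around $S_0$, which is why it splits cleanly into the two symmetric applications of Lemma~\ref{lem:localcancelling1}. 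Both hinge on the same two facts --- conservation of $\sum_v C(v)$ and finiteness of the configuration space --- so they are of comparable strength and length.
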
 

\parag{Convergence and Failure Detection}
The overall protocol waits until $\Cancel{\ExtendedAutomaton}$ converges, i.e.\ either all agents have “small” contributions, or all contributions are negative. In the latter case, we can safely reject the input, as the total sum of contributions is negative. In the former case we perform a broadcast, doubling all contributions. As we only double once all contributions are small, each agent can always store the new value. This idea of alternating cancelling and doubling phases has been used extensively in the population protocol literature~\cite{AngluinAE08a,BerenbrinkEFKKR18,BilkeCER17,KosowskiU18}.

To detect whether $\Cancel{\ExtendedAutomaton}$ has already converged, and to perform the doubling, we elect a subset of agents as leaders. A ``true'' leader election, with only one leader at the end, is impossible due to weak fairness, but we can elect a “good enough” set of leaders: whenever two leaders disagree, we can eliminate one of them and restart the computation with a non-empty, proper subset of the original set of leaders.

We use weak absence-detection transitions to determine whether $\Cancel{\ExtendedAutomaton}$ has converged. Set $Q_L:=\{0,L,L_\mathrm{double},L_\No\}$ and let $(Q,\delta):=\Cancel{\ExtendedAutomaton}\times Q_L$. (Recall the notation from Section~\ref{sssec:notation}.) We define $\Detect{\ExtendedAutomaton}:=(Q\cup\{\bot,\No\},\delta,\Cancel{Q}\times\{L\},\AdTrans)$, where $\AdTrans$ are the following absence\nobreakdash-detection transitions, for $x\in\Cancel{Q},s\subseteq Q\cup\{\bot,\No\}$.
\begin{equation}
\begin{aligned}
(x,L),s&\mapsto \bot &&\text{if $\No\in s$} \\
(x,L),s&\mapsto (x,0) &&\text{if $\bot\in s$} \\
(x,L),s&\mapsto (x,L_\mathrm{double}) &&\text{if $s\subseteq\{-k,...,k\}\times\{0\}$}\\
(x,L),s&\mapsto (x,L_\No) &&\text{if $s\subseteq\{-\BoundForMaj,...,-1\}\times\{0\}$}
\end{aligned} \TraName{detect}
\end{equation}
Intuitively, $\bot$ and $\Cancel{Q}\times\{L,L_\mathrm{double},L_\No\}$ are leader states, and $\No$ is the (only) rejecting state. State $\bot$ is an error state: an agent in that state will eventually restart the computation. Via Lemma~\ref{lem:simulate-weakabsence} we get a \DAf-automaton $\Detect{\Automaton}=(\Detect{Q}',\Detect{\delta}')$ simulating $\Detect{\ExtendedAutomaton}$.

We want our broadcasts to interrupt any (simulated) absence-detection transitions of $\Detect{\Automaton}$, by moving agents in intermediate states $\Detect{Q}'\setminus\Detect{Q}$ to their last “good” state in $\Detect{Q}$. To this end, we introduce the mapping $\Last:\Detect{Q}'\rightarrow\Detect{Q}$, which fulfils $\Last(C_{i}(v))\in\{\Last(C_{i-1}(v)),C_{i}(v)\}$ for all runs $\pi=C_0C_1...$ of $\Detect{\Automaton}$ and $i>0$, where $C_0$ has only states of $\Detect{Q}$.
It is, of course, not true that $\Last$ exists for \emph{any} simulation $\Detect{\Automaton}$ of $\Detect{\ExtendedAutomaton}$. However, one can extend any simulation which does not, by having each agent “remember” its last state in $\Detect{Q}$.

We construct a \DAsf-automaton with weak broadcasts $\Bc{\ExtendedAutomaton}$ by adding the following transitions to $\Detect{\Automaton}$. 
\begin{equation}
\begin{aligned}
(x,L_\mathrm{double})\mapsto (2x,L),\big(&\{(y,0)\mapsto(2y,0):y\in\{-k+1,...,k-1\}\} \\
&\cup\{q\mapsto\bot:q\in \Cancel{Q}\times\{L,L_\mathrm{double},L_\No\}\}\big)\circ\Last
\end{aligned} \TraName{double}
\end{equation}
\begin{equation}
\begin{aligned}
(x,L_\No)\mapsto\No,\big(&\{(y,0)\mapsto\No:y\in\{-\BoundForMaj,...,-1\}\}  \\
&\cup\{q\mapsto\bot:q\in \Cancel{Q}\times\{L,L_\mathrm{double},L_\No\}\}\big)\circ\Last
\end{aligned} \TraName{reject}
\end{equation}
These transitions are written somewhat unintuitively. Recall that we write a weak broadcast transition as $q\mapsto q',f$, where $q,q'\in\Detect{Q}'$ are states and $f:\Detect{Q}'\rightarrow\Detect{Q}'$ is the transfer function. Usually, we specify $f$ as simply a set of mappings $\{r\mapsto f(r):r\in\Detect{Q}'\}$. Here, our transition essentially is $q\mapsto q',(f\circ\Last)$, where $\circ$ denotes function composition, and $f$ is given as a set of mappings. This means that  broadcasts first move all agents to their last state in $\Detect{Q}$, and then apply the other mappings as specified.

Before extending $\Bc{\ExtendedAutomaton}$ with resets that restart the computation from an error state, we analyse the behaviour of $\Bc{\ExtendedAutomaton}$ in more detail. To talk about accepting/rejecting runs, we define the set of rejecting states as $\{\No\}$. (All other states are accepting.)
Let $\pi:=(C_0,C_1,...)$ denote a fair run of $\Bc{\ExtendedAutomaton}$ starting in a configuration $C_0$ where all agents are in states $\{-\BoundForMaj,...,\BoundForMaj\}\times\{0,L\}$, and at least one agent is in a state $(\Any,L)$. We refer to the agents starting in $(\Any,L)$ as \emph{leaders}. Note that it is not possible to enter a state in $Q\times\{L,L_\mathrm{double},L_\No\}\cup\{\bot\}$ without being a leader. We usually disregard the first component (if any) while referring to states of leaders.

To argue correctness, we state two properties of $\Bc{\ExtendedAutomaton}$. First, it is not possible for \emph{all} leaders to enter $\bot$, which ensures that a reset restarts the computation with a proper subset of the leaders. Second, $\Bc{\ExtendedAutomaton}$ works correctly if no agent enters an error state. Here, $L_G:X\rightarrow\N$ denotes the label count of the input graph, i.e.\ $L_G(x_i)=|C_0^{-1}(a_i)|$ for $i=1,...,l$.
\begin{restatable}{lemma}{restateDAfmajorityerrors}\label{lem:DAfmajorityerrors}
Assuming that no agent enters state $\bot$, $\pi$ is accepting iff $\varphi(L_G)=1$. Additionally, $\pi$ cannot reach a configuration with all leaders in state $\bot$.
\end{restatable}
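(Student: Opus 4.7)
My plan is to (a) identify an invariant — the sum of contributions — whose sign coincides with $\varphi(L_G)$, (b) handle Property~2 by tracing every way a leader can enter $\bot$, and (c) handle Property~1 by a two-sided argument, the harder direction being ``$\varphi(L_G) = 1 \Rightarrow \pi$ accepts''. I would first define $\sigma_i := \sum_{v \in V} c_i(v)$, where $c_i(v)$ denotes the contribution component of $v$'s state at step $i$ (setting $c_i(v) := 0$ for $v \in \{\bot, \No\}$ for bookkeeping). By inspection, \TraRef{cancel} preserves $\sigma$ under the synchronous scheduler (each unit ``sent'' by a large agent is ``received'' by exactly one small neighbour), \TraRef{detect} only touches the leader component, and \TraRef{double} multiplies $\sigma$ by $2$ up to a bounded correction at $\pm k$. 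Since $\sigma_0 = \sum_j a_j L_G(x_j)$, this gives $\sigma_i \geq 0$ iff $\varphi(L_G) = 1$ throughout $\pi$.

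For Property~2 I would case-split on the transition by which a leader first enters $\bot$ in a hypothetical ``all leaders in $\bot$'' configuration. The only possibilities are (i)~a detect transition whose view contains $\No$, or (ii)~the leader being a non-initiator in a leader state during a broadcast. In case~(ii) the broadcast has at least one initiator, which moves to $L$ (via \TraRef{double}) or to $\No$ (via \TraRef{reject}), both distinct from $\bot$. In case~(i), $\No$ was placed by an earlier reject whose initiator is now absorbed in $\No \neq \bot$. Either way some initial leader sits outside $\bot$.

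For the direction $\varphi(L_G) = 0 \Rightarrow \pi$ rejects I would combine Lemma~\ref{lem:localcancelling2} with doubling: with $\sigma_i < 0$ throughout, cancellation converges to either all-small or all-negative, successive doublings scale $|\sigma|$ unboundedly, and once $|\sigma_i| > k|V|$ the all-small alternative is impossible (it would force $|\sigma| \leq k|V|$). Thus eventually all contributions lie in $\{-\BoundForMaj,\dots,-1\}$, the leader's detect enables $L_\No$, and by fairness a reject broadcast fires and places at least one agent in the absorbing state $\No$, forcing a rejecting consensus.

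The hard part will be the direction $\varphi(L_G) = 1 \Rightarrow \pi$ accepts. I would show that no reject broadcast ever fires under the ``no $\bot$'' hypothesis. A reject requires every leader currently in a leader state to be simultaneously in $L_\No$, otherwise a non-initiator would become $\bot$, contradicting the hypothesis. A leader enters $L_\No$ only when its detect view $s$ satisfies $s \subseteq \{-\BoundForMaj,\dots,-1\} \times \{0\}$, so it sees only negative non-leader contributions. I plan to leverage the covering condition $\bigcup_v S_v = V$ of weak absence detection together with $\sigma_i \geq 0$ to exhibit at every step a non-negative contribution visible to some leader, blocking all leaders from reaching $L_\No$ simultaneously. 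The subtlety will be that leader contributions themselves can compensate for many negative non-leaders; handling this requires combining the bound $|c(v)| \leq \BoundForMaj$ with the unbounded growth of $|\sigma|$ under repeated doublings, which eventually forces a non-negative non-leader contribution into some leader's view. Once this is established, no reject fires, no $\No$ is ever created, and $\pi$ is accepting.
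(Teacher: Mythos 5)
Your sign-of-$\sigma$ invariant and your argument for the second claim are essentially the paper's: the initiator of any broadcast lands in $L$ or $\No$, and a detect transition can only produce $\bot$ once a reject has already absorbed some leader into the sink state $\No$. The synchronisation observation --- that under the no-$\bot$ hypothesis a reject can only fire when every leader is simultaneously in $L_\No$ --- is also the crux of the paper's proof. But the paper then closes the argument differently: it reads the guard of the $L_\No$ detect transition as constraining the contributions of \emph{all} agents in the covering views, the leaders' own first components included, so that all leaders entering $L_\No$ at once immediately yields a configuration with every contribution in $\{-\BoundForMaj,\dots,-1\}$, hence $\sigma<0$; correctness of rejection is then immediate, and the accepting direction is the contrapositive. (The paper organises this as an induction over the first broadcast, showing it is either a reject by all leaders from an all-negative configuration or a double by all leaders from an all-small one.)

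The genuine gap is your repair of the subtlety you flag yourself. If the $L_\No$ guard really ignored the leaders' own contributions, the statement would be \emph{false}, so no argument can patch it: take three leaders with contribution $+1$ and two non-leaders with contribution $-1$, so $\sigma=1\ge0$; every leader's view contains only negative non-leaders, all leaders enter $L_\No$ at the very first detect, the reject fires with all leaders as initiators (creating no $\bot$), and every agent moves to $\No$ --- a rejecting consensus on a positive instance. Your proposed fix via unbounded growth of $|\sigma|$ cannot exclude this, because the reject fires before any doubling has occurred, and because for $\sigma=0$ (which must be accepted) doubling never increases $|\sigma|$ at all. The resolution is that the guard must be read as covering the leaders' contributions too, which is exactly what the paper's step ``a configuration had only states in $\Last^{-1}(\{-\BoundForMaj,\dots,-1\}\times Q_L)$'' uses. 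Two smaller points: one agent entering $\No$ does not yet force a rejecting \emph{consensus} --- you need that the reject, firing when all contributions are negative, sends every agent to $\No$; and if doubling really incurred a ``bounded correction at $\pm k$'' then the sign of $\sigma$ would not be invariant for small $|\sigma|$, so you must take the doubling to be exact on all contributions present when it fires.
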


\parag{Resets}
Finally, we can add resets to the protocol, to restart the computation in case of errors. We use Lemma~\ref{lem:simulate-weak-broadcast} to construct a \DAf-computation  $\Bc{\Automaton}=(\Bc{Q}',\Bc{\delta}')$ simulating $\Bc{\ExtendedAutomaton}$, and then set $\Riset{P}:=\Bc{\Automaton}\times\Cancel{Q}+\TraRef{reset}$, where the broadcasts are defined as follows, for $q_0\in\Cancel{Q}$.
\begin{equation}
(\bot,q_0)\mapsto((q_0,L),q_0), \{(r,r_0)\mapsto((r_0,0),r_0):(r,r_0)\in\Bc{Q}'\times\Cancel{Q}\}\TraName{reset}
\end{equation}

To actually compute $\varphi$, we add the initialisation function $I(x_i):=((a_i,L),a_i)$ and the set of rejecting states $N:=\{\No\}$ to $\Riset{\ExtendedAutomaton}$ (all other states are accepting).

\begin{restatable}{proposition}{restateDAfmajorityiscorrect}\label{lem:DAfmajorityiscorrect}
For every predicate $\varphi:\N^l\rightarrow\{0,1\}$ such that $\varphi(x_1,...,x_l)\Leftrightarrow a_1x_1+...+a_lx_l\ge 0$ with $a_1,...,a_l\in\Z$ there is a bounded-degree \DAsf-automaton computing $\varphi$.
\end{restatable}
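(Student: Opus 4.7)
The plan is to verify that $\Riset{\ExtendedAutomaton}$ decides $\varphi$; since Lemma~\ref{lem:simulate-weak-broadcast} yields a \DAf-automaton simulating it and Lemma~\ref{lem:simulate-equivalent} converts this into an equivalent \DAf-automaton, the proposition follows. Two structural invariants of $\Riset{\ExtendedAutomaton}$ are immediate from the product construction $\Bc{\Automaton}\times\Cancel{Q}+\TraRef{reset}$ together with the definitions of $I$ and \TraRef{reset}: the $\Cancel{Q}$-component of each agent never changes after initialisation (so it permanently encodes $L_G$ via the labels $a_i$), and the number of leaders---agents whose first component lies in $\Cancel{Q}\times\{L,L_\mathrm{double},L_\No\}\cup\{\bot\}$---cannot increase except through a reset.

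The central claim I would establish is that every firing of \TraRef{reset} strictly decreases the number of leaders. A reset requires at least one initiator in state $(\bot,q_0)$, and by Lemma~\ref{lem:DAfmajorityerrors}, applied to the sub-run since the previous reset (or since the start, in the case of the first reset), at least one leader is \emph{not} in state $\bot$ at the moment the reset fires. The response function of \TraRef{reset} sends every non-initiator $(r,r_0)$ to the non-leader state $((r_0,0),r_0)$, while the initiators, all in $\bot$, become leaders in $L$. Hence the new leader set equals the set of reset-initiators, a subset of the $\bot$-leaders and therefore a proper subset of the previous leader set. Because the initial leader count is $|V|$, any fair run of $\Riset{\ExtendedAutomaton}$ contains only finitely many resets.

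Now fix a fair run $\pi$ of $\Riset{\ExtendedAutomaton}$ from the initial configuration determined by $I$. After its last reset the suffix $\pi'$ is a fair run of $\Bc{\ExtendedAutomaton}$ (ignoring the now-passive $\Cancel{Q}$ layer) starting from a valid initial configuration: the first components have the form required by the hypothesis of Lemma~\ref{lem:DAfmajorityerrors}, and the leader set is nonempty. Moreover no agent reaches $\bot$ during $\pi'$, since otherwise fairness would eventually schedule another reset, contradicting the choice of $\pi'$. Lemma~\ref{lem:DAfmajorityerrors} then tells us that $\pi'$ is accepting iff $\varphi(L_G)=1$. Because acceptance by stable consensus depends only on any suffix, $\pi$ accepts iff $\varphi(L_G)=1$.

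The main obstacle I expect is bookkeeping across the simulation layer: the analysis above is phrased at the level of the weak-broadcast automaton $\Riset{\ExtendedAutomaton}$, whereas the actual \DAf-automaton obtained from Lemma~\ref{lem:simulate-weak-broadcast} threads through intermediate states that could blur what counts as a ``leader'' or a ``completed reset.'' I would address this by observing that extensions and reorderings preserve the $\Cancel{Q}$-component (it is never touched by simulating transitions), and that each completed reset in the simulating run corresponds, via the reordering guaranteed by the simulation, to an application of \TraRef{reset} in $\Riset{\ExtendedAutomaton}$; Lemma~\ref{lem:simulate-equivalent} then transfers the verdict on $\pi'$ back to the simulating automaton, completing the argument.
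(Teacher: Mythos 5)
Your proposal is correct and follows essentially the same route as the paper: finitely many resets because each reset strictly shrinks the (nonempty) leader set via Lemma~\ref{lem:DAfmajorityerrors}, fairness forcing a reset whenever some agent enters $\bot$ (since such states are broadcast-initiating), and then applying the error-free case of Lemma~\ref{lem:DAfmajorityerrors} to the suffix after the last reset, with Lemmata~\ref{lem:simulate-weak-broadcast} and~\ref{lem:simulate-equivalent} transferring the result to an actual \DAf-automaton.
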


\section{Conclusion} \label{sec:conclusion}

We have characterised the decision power of the weak models of computation studied in~\cite{ER20} for properties depending only on the labelling of the graph, not on its structure. For arbitrary networks,the initially twenty-four classes of automata collapse into only four; further, only \DAF{} can decide majority. For bounded-degree networks (a well-motivated restriction in a biological setting, also used in e.g.\ in~\cite{angluin2005stably,BournezL13}), the picture becomes more complex. Counting and non-counting automata become equally powerful, an interesting fact because biological models are often non-counting. Further, the class \DAf{}, which uses adversarial scheduling, substantially increases its power, and becomes able to decide majority. So, while majority algorithms require (pseudo-)random scheduling to work correctly for arbitrary networks, they can work correctly under adversarial scheduling for bounded-degree networks. In particular, there exist a synchronous deterministic algorithm for majority in bounded-degree networks.

\bibliographystyle{plainurl}
\bibliography{references}

\appendix

\section{Proofs of Section~\ref{sec:limitations}}
\label{app:limitations}

\begin{definition}
For every labelled graph \(G=(V,E,\lambda)\) over the finite set of Labels \(\mathcal{L}\) we write \(L_G\) for the multiset of labels occurring in \(G\), i.e.\ \(L_G:\mathcal{L}\rightarrow \N, L_G(x)=|\{v\in V| \lambda(v)=x\}|\) for all labels \(x\). We call \(L_G\) the \emph{label count} of \(G\).

A graph property \(\varphi\) is called a \emph{labelling property} if for all labelled graphs \(G,G'\) with \(L_G=L_{G'}\) we have \(\varphi(G)=\varphi(G')\). In such a case we also write \(\varphi(L_G)\) instead of \(\varphi(G)\).
\end{definition}

\DasFLimitation*

\begin{proof}
Assume there exist cyclic graphs \(G\) and \( H\) such that
$A$ accepts $G$ and rejects $H$. We construct a graph $GH$ and a run of $A$ on \(GH\) such that at least one node of \(GH\) halts in an accepting state, and at least one node of \(GH\) halts in a rejecting state. This contradicts the assumption that $A$ satisfies the consistency condition.

Let \(\rho_G\) and \( \rho_H\) be fair runs of $A$ on \(G\) and \(H\), and let \(g\) and \(h\) be the earliest times at which all nodes of \(G\) and \(H\) have already halted.

Fix edges \(e_G=\{u_G,v_G\}\) and \( e_H=\{u_H,v_H\}\) belonging to cycles of \(G\) and \(H\). 
We construct the graph \(GH\) in three steps. First, we put $2g+1$ copies of $G$ and $2h+1$ copies of $H$ side by side.  Let $G^i, H^i$ denote the $i$-th copy of $G$ and $H$, and let $w_G^i$ and $w_H^i$ denote the copy of a node $w_G$ in $G^i$ or $w_H$ in $H^i$. Second, we remove the edges  $\{u_G^0, v_G^0\}, \ldots, \{u_G^{2g}, v_G^{2g}\},\{u_H^0, v_H^0\}, \ldots, \{u_H^{2h}, v_H^{2h}\}$. Third, we add the edges 
$$\{v_G^0, u_G^1\}, \ldots, \{v_G^{2g-1}, u_G^{2g}\}, \{v_G^{2g}, u_H^{0}\}, \{v_H^0, u_H^1\}, \ldots, \{v_H^{2h-1}, u_H^{2h}\}$$
\noindent The construction is depicted in Figure~\ref{fig:trivial-on-cycles}. Observe that, since \(e_G\) and \( e_H\) belong to cycles of $G$ and $H$, the graph $GH$ is connected.  
\begin{figure}[htb]
	\resizebox{\textwidth}{!}{
\begin{tikzpicture}[thick, auto, on grid]
\tikzstyle{vertex}=[draw,circle,inner sep=0ex,minimum size=4ex]
\tikzstyle{vertexG}=[vertex,fill=black!20]
\tikzstyle{vertexH}=[vertex,fill=white]
\tikzstyle{phantom}=[]
\tikzstyle{graph}=[draw,semithick,cloud,cloud ignores aspect,cloud puff arc=80]
\tikzstyle{graphG}=[graph,inner xsep=-0.7ex,inner ysep=-0.3ex,label={[xshift=-5ex,yshift=-2ex]center:#1},
cloud puffs=10]
\tikzstyle{graphH}=[graph,inner xsep=-0.5ex,inner ysep=+0.5ex,label={[yshift=-\din]center:#1},fill=black!20,
cloud puffs=14]
\tikzstyle{replaced}=[dashed,dash pattern={on 2pt off 1.5pt}]
\tikzstyle{replacedG}=[replaced,black!40!white,line width=1.4pt]
\tikzstyle{replacedH}=[replaced,black!60!white,line width=1.4pt]
\tikzstyle{connection}=[niceblue,line width=1.1pt]
\tikzstyle{changed}=[draw=nicered,fill=nicered!20!white]
\tikzstyle{changed2}=[draw=nicered,fill=nicered!10!white]
\def\dh{10ex}
\def\dv{4ex}
\def\din{7ex}

\node[vertexG,changed] (uG0) {$u^G_0$};
\node[vertexG] (wG0) [right=\din of uG0] {$v^G_0$};
\node[vertexG] (aG0) [below right=0.8*\din and 0.5*\din of uG0] {};
\draw
(uG0) edge[replacedG] (wG0)
(uG0) edge (aG0)
(aG0) edge (wG0)
;

\node[vertexG] (uG1) [right=\dh of wG0] {$u^G_1$};
\node[vertexG] (wG1) [right=\din of uG1] {$v^G_1$};
\node[vertexG] (aG1) [below right=0.8*\din and 0.5*\din of uG1] {};
\draw
(uG1) edge[replacedG] (wG1)
(uG1) edge (aG1)
(aG1) edge (wG1)
;

\node[phantom,connection] (pG)  [above right=\dv and 0.65*\dh of wG1] {$\dots$};
\node[phantom] (pG2)  [below =\dv + 0.5*\din of pG] {$\dots$};
\node[vertexG] (uG2t1) [below right=\dv and 0.65*\dh of pG] {$u^G_{2g}$};
\node[vertexG,changed] (wG2t1) [right=\din of uG2t1] {$v^G_{2g}$};
\node[vertexG] (aG2t1) [below right=0.8*\din and 0.5*\din of uG2t1] {};
\draw
(uG2t1) edge[replacedG] (wG2t1)
(uG2t1) edge (aG2t1)
(aG2t1) edge (wG2t1)
;

\node[vertexH,changed2] (uH0) [right=\dh of wG2t1] {$u^H_0$};
\node[vertexH] (wH0) [right=\din of uH0] {$v^H_0$};
\node[vertexH] (aH0) [below=\din of uH0] {};
\node[vertexH] (bH0) [below=\din of wH0] {};
\draw
(uH0) edge[replacedH] (wH0)
(uH0) edge (aH0)
(aH0) edge (wH0)
(aH0) edge (bH0)
;

\node[vertexH] (uH1) [right=\dh of wH0] {$u^H_1$};
\node[vertexH] (wH1) [right=\din of uH1] {$v^H_1$};
\node[vertexH] (aH1) [below=\din of uH1] {};
\node[vertexH] (bH1) [below=\din of wH1] {};
\draw
(uH1) edge[replacedH] (wH1)
(uH1) edge (aH1)
(aH1) edge (wH1)
(aH1) edge (bH1)
;

\node[phantom,connection] (pH)  [above right=\dv and 0.65*\dh of wH1] {$\dots$};
\node[phantom] (pH2)  [below =\dv + 0.5*\din of pH] {$\dots$};
\node[vertexH] (uH2t2) [below right=\dv and 0.65*\dh of pH] {$u^H_{2h}$};
\node[vertexH,changed2] (wH2t2) [right=\din of uH2t2] {$v^H_{2h}$};
\node[vertexH] (aH2t2) [below=\din of uH2t2] {};
\node[vertexH] (bH2t2) [below=\din of wH2t2] {};
\draw
(uH2t2) edge[replacedH] (wH2t2)
(uH2t2) edge (aH2t2)
(aH2t2) edge (wH2t2)
(aH2t2) edge (bH2t2)
;

\draw
(wG0) edge[connection,bend left=60] (uG1)
(wG1) edge[connection,bend left=30] (pG)
(pG) edge[connection,bend left=30] (uG2t1)
(wG2t1) edge[connection,bend left=60] (uH0)
(wH0) edge[connection,bend left=60] (uH1)
(wH1) edge[connection,bend left=30] (pH)
(pH) edge[connection,bend left=30] (uH2t2)
;

\node[vertexG] (uG) [above=6*\dv of uG2t1] {$u^G$};
\node[vertexG] (wG) [right=\din of uG] {$v^G$};
\node[vertexG] (aG) [below right=0.8*\din and 0.5*\din of uG] {};
\draw
(uG) edge (wG)
(uG) edge (aG)
(aG) edge (wG)
;

\node[vertexH] (uH) [above=6*\dv of uH0] {$u^H$};
\node[vertexH] (wH) [right=\din of uH] {$v^H$};
\node[vertexH] (aH) [below=\din of uH] {};
\node[vertexH] (bH) [below=\din of wH] {};
\draw
(uH) edge (wH)
(uH) edge (aH)
(aH) edge (wH)
(aH) edge (bH)
;

\node (arrow) [above left=2.5*\dv and 0.5*\dh of uH0] {\Huge \color{niceblue}$\downarrow$};

\begin{scope}[on background layer]
\node[graphG=$G_0$,fit=(uG0) (wG0) (aG0)] (G0) {};
\node[graphG=$G_1$,fit=(uG1) (wG1) (aG1)] (G1) {};
\node[graphG=$G_{2g}$,fit=(uG2t1) (wG2t1) (aG2t1)] (G2t1) {};
\node[graphH=$H_0$,fit=(uH0) (wH0) (aH0) (bH0)] (H0) {};
\node[graphH=$H_1$,fit=(uH1) (wH1) (aH1) (bH1)] (H1) {};
\node[graphH=$H_{2h}$,yshift=-0.55ex,fit=(uH2t2) (wH2t2) (aH2t2) (bH2t2)] (H1) {};
\node[graphG=$G$,fit=(uG) (wG) (aG)] (G) {};
\node[graphH=$H$,fit=(uH) (wH) (aH) (bH)] (H) {};
\end{scope}

\end{tikzpicture}
} 

	\vspace{-1cm}
	\caption{Construction in proof of Lemma~\ref{lem:trivial-on-cycles}. The dashed edges are removed and replaced by the blue edges. Only the four red states can initially detect the change.}
	\label{fig:trivial-on-cycles}
\end{figure}

Let \(\rho_{GH}\) be any fair run of $A$ on $GH$ that during the first $\max\{g, h\}$ steps selects exactly the copies 
of the nodes selected at the corresponding steps of \(\rho_G\) and \( \rho_H\).  (Notice that \(\rho_{GH}\) exists, because whether a run is fair or not does not depend on any finite prefix of the run.) Initially, every node of $GH$ except $u_G^0$, $v_G^{2g}$, $u_H^{0}$, and $v_H^{2h}$ ``sees'' the same neighbourhood as its corresponding node in $G$ or $H$ (i.e.\  the same number of neighbours in the same states). Therefore, after the first step of \(\rho_{GH}\) all nodes of $GH$, except possibly these four, are in the same state as their corresponding nodes in $G$ or $H$ after one step of $\rho_G$ or $\rho_h$. 
Since the nodes of $G^g$ are at distance at least $g$  from $u_G^0$, $v_G^{2g}$, $u_H^{0}$, and $v_H^{2h}$,  during the first $g$ steps of \(\rho_{GH}\) any node $w_G^g$ of $G^g$ visits the same sequence of states as the node $w_G$ of $G$ during the first $g$ steps of  \(\rho_G\). Since all nodes of $G$ halt after at most $g$ steps by definition, all nodes of $G^g$ halt in accepting states. Similarly, after $h$ steps all nodes of $H^h$ halt in rejecting states. 
\end{proof}

\DAfCovering*

\begin{proof}
Let \(A\) be a \DAZf-automaton accepting \( \varphi\). Let \( f \colon  V_H \rightarrow V_G\) be a covering map respecting the 
labelling, i.e.\ fulfilling \( \lambda_H=\lambda_G \circ f\). We prove that \(A\) accepts \(G\) if{}f it accepts \(H\). 

Let \( \rho_G=(C_0,C_1,\dots) \) be the synchronous run of \(A\) on \(G\), and let \( \rho_H=(C_0',C_1',\dots)\) be the synchronous run of \(A\) on \(H\). Observe that, since selection is adversarial, the synchronous runs are fair runs. Since $A$ satisfies the consistency condition, it suffices to show that $\rho_G$ accepts 
\(G\) if{}f it accepts \(H\). For this we prove by induction on $t$ that \(C_t(v)=C_t(f(v))\) holds for every node $v$ of $H$ and $t \geq 0$.  For $t=0$ this follows from the fact that \(f\) respects the labelling.
For $t > 0$, assume $C_{t}(v)=C_{t}(f(v))$ we prove $C_{t+1}(v)=C_{t+1}(f(v))$. Pick an arbitrary node $u$. Since $C_{t}(u)=C_{t}(f(u))$, both $u$ and $f(u)$ occupy the same state in $C_t$. Since the run is synchronous, both are selected. 
Since the restriction of the covering $f$ to the neighbourhoods of $u$ and $f(u)$ is a bijection, and $C_{t}(v)=C_{t}(f(v))$ holds for all $v$, in particular for all neighbours of $u$, both $u$ and $f(u)$ move to the same states. 
So $C_{t+1}(u)=C_{t+1}(f(u))$
\end{proof}

\DAfScalarMultiplication*

\begin{proof}
Let \(L\) be a multiset of labels and enumerate it as \( L=(\lambda_1, \lambda_2,\dots, \lambda_{|L|})\). Enumerate \( \lambda \cdot L\) by repeating this sequence \( \lambda\) times. Since \(\varphi\) is a labelling property, the underlying graph does not influence whether the property holds. We consider the following graphs: The cycle \(G\) labelled with \(L\) in the order we established, and the cycle \(G'\) labelled with \(\lambda \cdot L\) in the order above. We have that \(G'\) covers \(G\), and therefore using Lemma~\ref{lem:cannot-distinguish-covering} obtain \( \varphi(L)=\varphi(\lambda \cdot L)\). Since the graphs \(G\) and \(G'\) are 2-degree-bounded, this statement holds also when restricting to k-bounded-degree.
\end{proof}

\DAfCutoff*

\begin{proof}
Let $A$ be a \DAZf-automaton  with counting bound $\beta$ that decides \(\varphi\). 

Since \( \varphi\) is a labelling property, $A$ accepts a graph $G$ if{}f it accepts the unique clique $G'$ (up to isomorphism) such that $L_G = L_{G'}$. Therefore, it suffices to prove $\varphi(G)=\varphi(H)$ for the case in which $G$ and $H$ are \emph{cliques} satisfying \(\Cutoff{L_G}{\beta+1}=\Cutoff{L_H}{\beta+1}\). 

Since $A$ is an automaton with adversarial selection, the synchronous runs \( \rho_G=(C_{G0},C_{G1},\dots) \)  of \(A\) on \(G\), and \( \rho_H=(C_{H0},C_{H1},\dots)\) of \(A\) on \(H\) are fair runs of $A$. Since $A$ satisfies the consistency condition, $A$ accepts $G$ if{}f $\rho_G$ is an accepting run, and similarly for $H$. So it suffices to show that $\rho_G$ is an accepting run if{}f $\rho_H$ is. 

Let $Q_{Gt}\colon Q \rightarrow \N$ be the mapping that assigns to each state $q$ of $A$
the number of nodes of $G$ that are in state $q$ at time $t$. Define $Q_{Ht}$ analogously. We claim: 
$\Cutoff{Q_{Gt}}{\beta+1} = \Cutoff{Q_{Ht}}{\beta+1}$  for every $t \geq 0$. 
The proof is by induction on $t$. For the base case $t=0$, let $q$ be a state. By definition, $Q_{G0}(q)$ is the number 
of nodes of $G$ that are initially at state $q$. Let $\Alphabet_q$ be the set of labels that are mapped to $q$
by the initialisation function of $A$. Then we have $Q_{G0}(q) = \sum_{\ell \in \Alphabet_q} L_G(\ell)$.
Since \(\Cutoff{L_G}{\beta+1}=\Cutoff{L_H}{\beta+1}\), we get  \(\Cutoff{Q_{G0}}{\beta+1}=\Cutoff{Q_{H0}}{\beta+1}\).

For the induction step, assume $\Cutoff{Q_{Gt}}{\beta+1} = \Cutoff{Q_{Ht}}{\beta+1}$. We prove $\Cutoff{Q_{G(t+1)}}{\beta} = \Cutoff{Q_{H(t+1)}}{\beta}$. It suffices to show that if two nodes $u$ and $v$ of $G \cup H$ are in the same state at time $t$, then they are also in the same state (possibly a different one) at time $t+1$. For this, observe first that, since $G$ and $H$ are cliques, all nodes of $G$ respectively $H$ are neighbours. So, since $\Cutoff{Q_{Gt}}{\beta+1} = \Cutoff{Q_{Ht}}{\beta+1}$,
the nodes $u$ and $v$ see the same neighbourhood up to $\beta$ at time $t$ (i.e.\  they see the same number of nodes in 
each state up to bound $\beta$; we go from $\beta+1$ to $\beta$ because the neighbourhood of the node does not contain the node itself). In other words, $N_u^{C_{Gt}} = N_v^{C_{Gt}}$ holds. Since the runs $\rho_G$ and $\rho_H$ are synchronous, both $u$ and $v$ are selected at time $t$ to make a move. Since $N_u^{C_{Gt}} = N_v^{C_{Gt}}$, they move to the same state, and the claim is proved.

Assume that $\rho_G$ is an accepting run of $A$. Then there is a time $t$ such that for every $j \geq 0$ all nodes of $G$ are at accepting states in $C_{G(t+j)}$. By the claim, the same holds for $C_{H(t+j)}$. So $\rho_H$ is an accepting run of $A$. The other direction is analogous.
\end{proof}

\dAFCutoff*

\begin{proof}
\newcommand{\upcl}[1]{\lceil #1 \rceil}
\newcommand{\ctr}[1]{#1^{\text{ctr}}}
\newcommand{\sco}[1]{#1^{\text{sc}}}

We start by repeating some notation of the proof sketch. Let $A$ be the \dAF-automaton, and let $Q$ be its set of states. We first gather some properties of $A$ on \emph{star graphs}. A star is a graph in which a node called the \emph{center} is connected to an arbitrary number of nodes called the \emph{leaves}, and no other edges exist. Since we consider graphs up to isomorphism, a configuration of a star graph is completely determined by the state of the center and the number of nodes in each state. So in the rest of the proof we assume that a configuration of a star graph $G$ is a pair $C= (\ctr{C}, \sco{C})$, where $\ctr{C}$ denotes the state of the center of $G$, and $\sco{C}$ is the \emph{state count} of $C$, i.e.\  the mapping that assigns to each $q \in Q$ the number $\sco{C}(q)$ of nodes of $G$ that are in state $q$ at $C$. We denote the \emph{cutoff} of $C$ as $\Cutoff{C}{m}:=(\ctr{C}, \Cutoff{\sco{C}}{m})$.

Given a configuration $C$ of $A$, recall that $C$ is rejecting if all states are rejecting. We say that $C$ is \emph{stably rejecting} if $C$ can only reach configurations which are rejecting. Given an initial configuration $C_0$, it is clear that $A$ must reject if it can reach a stably rejecting configuration $C$ from $A$. Conversely, if it cannot reach such a $C$, then $A$ will not reject $C_0$, as there is a fair run starting at $C_0$ which contains infinitely many configurations which are not rejecting.

The statement missing in the proof sketch is the following: There exists a number \(m\in \N\) such that a star configuration \(C\) is stably rejecting if and only if \(\Cutoff{C}{m}\) is stably rejecting.

To define \(m\), we have to first define some additional concepts. Given two configurations $C, D$ of $A$ on stars $G$ and $H$, we say that  $C \preceq D$ holds if (a) $\ctr{C}= \ctr{D}$, (b) $\sco{C} \geq \sco{D} $, and (c) $\sco{D}(q)=0$ implies $\sco{C}(q)=0$ for every $q \in Q$. It is easy to see that
$\preceq$ is a partial order. Further, if $C \preceq D$ then $C$ is accepting (rejecting) if{}f $D$ is accepting (rejecting). A set $\mathcal{C}$ of configurations is \emph{upward closed} if $C \in \mathcal{C}$ and $D \succeq C$ implies $D \in \mathcal{C}$.

Let $C \rightarrow^* D$ denote that $A$ can reach the configuration $D$ from $C$ in zero or more steps.
Given a set of configurations $\mathcal{C}$, let \(Pre^{\ast}(\mathcal{C})\) be the set of configurations $C$ such that 
$C \rightarrow^* D$ for some $D \in \mathcal{C}$. The following two claims will finally allow us to define \(m\): 

\begin{itemize}
\item[(1)] If $C \rightarrow^* D$ and $C' \succeq C$, there exists $D' \succeq D$ such that $C' \rightarrow^* D'$. \\
Since \(C'\succeq C\), we can obtain \(C'\) from \(C\) by adding leaves in states which already occur. Similar to the last argument given in the proof sketch, we let every one of these extra leaves copy one of the leaves from \(C\) which starts in the same state. Formally, let \(v_{\mathit{new},1},\dots,v_{\mathit{new},n}\) be the extra leaves. For every extra leaf \(v_{\mathit{new},i}\) let \(v_{\mathit{old},i}\) be some leaf starting in the same state, not necessarily distinct for \(i_1\neq i_2\). Now let $\rho=(v_1,...,v_\ell)\in V^*$ denote a sequence of selections for $A$ to go from $C$ to $D$. We construct a sequence $\sigma\in V^*$ by inserting a selection of $v_{\mathit{new},i}$ after every selection of $v_{\mathit{old},i}$, if multiple \(i\) have the same \(v_{\mathit{old},i}\), insert all of them after \(v_{\mathit{old},i}\) in some order. Define $D'$ as the configuration which $A$ reaches after executing $\sigma$ from $C'$. We claim that $D'$ is the same as $D$, apart from having additional leaves in the same states as $v_\mathit{old}$. This follows from a simple induction: $v_{\mathit{old},i}$ and $v_{\mathit{new},i}$ start in the same state and see only the root node. As they are always selected without the root being selected in between, they will remain in the same state as each other. For the centre we use the property that $A$ cannot count: it cannot differentiate between seeing just $v_{\mathit{old},i}$, or seeing one (or maybe more) additional nodes in the same state.
\item[(2)] For every upward-closed set of configurations $\mathcal{C}$, the set \(Pre^{\ast}(\mathcal{C})\) has finitely many minimal configurations w.r.t. $\preceq$. We denote this finite set by \(MinPre^{\ast}(\mathcal{C})\). \\
Assume for contradiction that \(Pre^{\ast}(\mathcal{C})\) has infinitely many minimal configurations. We can enumerate its elements to obtain an infinite sequence \(C_1, C_2,\dots \) of configurations of star graphs. By the pigeonhole principle,
there exists an infinite subsequence \(C_{i_1}, C_{i_2},\dots \) such that $\ctr{C_{i_j}} = q$ for some state $q$ and every $j \geq 0$, and $\sco{C_{i_j}}(q)=0$ if{}f  $\sco{C_{i_k}}(q)=0$ for every 
$j , k \geq 1$ and every state $q$. By Dickson's Lemma (for every infinite sequence of vectors \(v_1, v_2, \dots \in \N^k\), there exist two indices \(i<j\) such that \(v_i \leq v_j\) with respect to the pointwise partial order),
there exist $j, k$ such that $C_{i_j} \leq C_{i_k}$. But then $C_{i_j}$ and $C_{i_k}$ satisfy conditions (a)-(c) of the definition of $\preceq$, and so $C_{i_j} \preceq C_{i_k}$, which is a contradiction.
\end{itemize}

Now we can define \(m\). Consider the set $\mathcal{C}$ of non-rejecting configurations of $A$ on all star graphs, with any number of leaves. It is easy to see that $\mathcal{C}$ is upward closed. By the claim the set \(MinPre^{\ast}(\mathcal{C})\), i.e.\  the set of smallest configurations from which it is possible to reach a non-rejecting configuration, is finite. Let $m$ be the number of nodes of the largest star such that some configuration of it belongs to 
\(MinPre^{\ast}(\mathcal{C})\). In other words: for every star with more than $m$ nodes, and for every configuration $C$ of this star that can reach a non-rejecting configuration, i.e.\ is not stably rejecting, there is a configuration $C' \prec C$ of another star that can also reach a non-rejecting configuration, i.e.\ is not stably rejecting. Combining this with the fact that \(MinPre^{\ast}(\mathcal{C})\) is upward-closed, we obtain that for every configuration \(C\), \(C\) is not stably rejecting if and only if \( \Cutoff{C}{m}\) is not stably rejecting. By contraposition, this implies the statement we wanted to prove.
\end{proof}

\section{Proofs of Section~\ref{sec:extensions}} \label{appendix:extensions}
The main goal of this section as a whole is to prove that the different models with weak broadcasts, weak absence detection as well as rendezvous transitions can be simulated. We will proceed as follows.

\begin{enumerate}
\item We start by proving Lemmata~\ref{lem:reordering-neighbourhood} and~\ref{lem:simulate-equivalent}, which are general properties of reorderings.
\item In subsection \ref{appendix:3-phase-automata} we show a general lemma concerning reorderings of three-phase protocols, which are used for both weak broadcasts and weak absence detection. In particular, we prove that there is a reordering where all nodes move in lock step.
\item This will dramatically shorten the proofs that weak broadcasts and weak absence detection can be simulated, which make up the next two subsections.
\item At last, we prove that rendezvous transitions can be simulated by \DAF-automata.
\end{enumerate}

\begin{lemma}
Let \(G=(V,E,\lambda)\) be a labelled graph. Let $\pi=(C_0,C_1,...)$ denote a run on \(G\), $\pi_f=(C_0',C_1',...)$ a reordering of $\pi$, and $v$ a node. For all $t\in\N_0$ where $v$ is selected for a non-silent transition and all nodes $u$ adjacent or identical to $v$ we have $C_t(u)=C_{f(t)}'(u)$. \label{lem:reordering-neighbourhood}
\end{lemma}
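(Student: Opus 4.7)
The plan is to prove by strong induction on $t \in I$ the slightly strengthened claim that, for every $t \in I$ and every $w$ adjacent or identical to $s(t)$, one has $C_t(w) = C'_{f(t)}(w)$. The lemma is the instance $s(t)=v$. The underlying observation is that, since selection is exclusive, the state of a node $u$ changes only at indices $i \in I$ with $s(i)=u$; hence $C_t(u) = C_{j_u+1}(u)$, where $j_u < t$ is the largest index in $I$ with $s(j_u)=u$ (or $C_t(u)=C_0(u)$ if no such $j_u$ exists), and the analogous identity holds for $C'_{f(t)}(u)$ in $\pi'$.

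The inductive step splits into two sub-steps. First, I identify the matching index: the latest non-silent transition at $u$ in $\pi'$ strictly before $f(t)$ occurs precisely at $f(j_u)$. Every such transition has the form $f(k)$ with $k \in I$, $s(k)=u$ and $f(k) < f(t)$; since $s(k)=u$ and $s(t)=v$ are adjacent or identical, the reordering property forces $k < t$, and applying the same property to the pair $(u,u)$ (trivially identical) shows that $f$ preserves the order of $u$-selections, so $f(j_u)$ is indeed the largest such image. Second, I invoke induction on $j_u$: the hypothesis yields $C_{j_u}(x) = C'_{f(j_u)}(x)$ for every $x$ adjacent or identical to $u$, so both the state of $u$ and the neighbourhood functions $N_u^{C_{j_u}}$ and $N_u^{C'_{f(j_u)}}$ coincide. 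Feeding equal arguments into $\delta$ then gives $C_{j_u+1}(u) = C'_{f(j_u)+1}(u)$, and by the observation from the first paragraph this equals $C_t(u) = C'_{f(t)}(u)$. The case where no $j_u$ exists is symmetric: any hypothetical $f(k) < f(t)$ with $s(k)=u$ would, by the same reordering argument, yield $k < t$ and contradict the absence of $j_u$, so both sides reduce to $C_0(u)$.

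The main difficulty is purely bookkeeping, rather than depth: one must carefully apply the reordering property twice — to pairs $(u,u)$ to preserve the internal order of $u$-selections, and to pairs $(u,v)$ with $u$ adjacent or identical to $v$ to preserve the cutoff at time $t$ between $\pi$ and $\pi'$. Once this correspondence is established, computing the new state of $u$ becomes literally the same calculation on both sides, so no further combinatorial obstacle arises.
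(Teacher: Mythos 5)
Your proof is correct and follows essentially the same route as the paper's: induct over (non-silent) selection times, locate the latest prior selection $j_u$ of each relevant node $u$, apply the induction hypothesis there to equate the transition's arguments, and use the reordering property both on pairs of selections of $u$ itself and on pairs involving the adjacent node $s(t)$ to show $u$ is not selected in $\pi'$ between $f(j_u)$ and $f(t)$. The only cosmetic difference is that the paper removes silent transitions up front whereas you index directly over $I$; the argument is the same.
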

\begin{proof}
Write the node sequence \(\pi\) as \( (v_0,v_1,v_2,\dots) \). Write the neighbourhood of a node \(v\) as \(N(v)\). Write the reordered run as \(\pi'=(C_0'=C_0,C_1',C_2',\dots) \). We assume wlog that all silent transitions were removed, unless no non-silent transition is enabled. The proof will proceed by induction on $t$.

For the induction basis \(t=0\) we have to prove that before time \(f(0)\), no neighbour of \(v_0\) has changed state yet in the reordered run. Assume for contradiction that some neighbour has changed state already, i.e.\ we have \(f(i)<f(0)\) for some $i$ with \( v_i \in N(v_0)\). Since \(i>0\) and \( \{v_i,v_0\}\in E\), we would have \( f(i)>f(0)\), contradicting the definition of a reordering.

For the induction step, let \(v \in N(v_t)\). We have to prove \( C_t(v)=C_{f(t)}'(v)\). Consider the latest time \(s<t\) where \(v\) has been selected. We obtain \( C_t(v)=C_{s+1}(v)\). By induction hypothesis, we have \( C_s(N(v))=C_{f(s)}'(N(v))\). This implies \(C_{s+1}(v)=C_{f(s)+1}'(v)\). Since \(v\) is a neighbour of \(v_t\), we have \( f(s)<f(t)\). We claim that \(v\) has not been selected between time \(f(s)\) and \(f(t)\) in the reordered run. Assume for contradiction that \(v\) has been selected at time \( f(s) < t' < f(t) \). Let \(m\) be such that \( f(m)=t'\), which exists because we removed silent transitions. Since \(f(s) < f(m)< f(t) \) and \( \{v_m, v_t\} \in E\), we have \( m < t\). We similarly obtain \( s < m\). Therefore \(s\) would not have been the latest time before \(t\) where \(v\) moved, yielding a contradiction. Therefore we have \( C_{f(s)+1}'(v)=C_{f(t)}'(v)\).
\end{proof}

\restateSimulateEquivalent*
\begin{proof}
For $P''$ we reuse $\delta_0$ as initialisation function. We change $P'$ so that each agent remembers its last non-intermediate state. We define $Y'$ as the set of states where the last non-intermediate state is in $Y$, and define $N'$ analogously. Then we use $Y',N'$ as accepting/rejecting states for $P''$. Any run $\pi$ of $P''$ starting in an initial configuration has a reordering $\pi_f$ which is an extension of a run $\tau$ of $P$. If $P$ accepts, then every node $v$ is only finitely often in a state $Q\setminus Y$ in $\tau$, which then also holds for $\pi_f$ and $\pi$. Thus, $v$ will eventually remain in $Y'$, and $\pi$ accepts as well. Similarly, $P''$ will reject if $P$ does.
\end{proof}

This lemma also explains why we treat silent transitions separately: To simulate weak broadcasts, we use a three-phase protocol and want to prove that we can reorder the run such that all nodes move to phase 1, then all to phase 2 and so on. However, Lemma~\ref{lem:reordering-neighbourhood} shows that if some node $v$ observes a neighbour who is behind a phase and a neighbour who is ahead, then this would have to be reflected at some point in time in the reordered run, making it impossible for all nodes to be at most one phase apart. However, in this case our protocols have $v$ do nothing, so removing silent transitions resolves the issue.

\subsection{Reorderings in three-phase automata}\label{appendix:3-phase-automata}
As we want to make a general statement about three-phase protocols, we start by formally introducing the notion. The idea is that each state belongs to one of three phases, that agents may not move directly to the previous phase, and that an agent does nothing, unless all its neighbours are in the same or the next phase. Further, we require that an agent either moves to the next phase or does nothing, if it has a neighbour in the next phase. The last condition is rather technical, it will later allow us construct a reordering which executes the transitions that do not move agents into the next phase first, before executing the other transitions.

\begin{definition}\label{def:threephase}
Let $P=(Q,\delta_0,\delta,Y,N)$ denote an automaton with counting bound $\beta$. We say that $P$ is a \emph{three-phase automaton} if $Q=Q_0\cup Q_1\cup Q_2$, for some pairwise disjoint $Q_0,Q_1,Q_2$, and for all states $q\in Q_i$ and neighbourhoods $N:Q\rightarrow[\beta]$ we have
\begin{enumerate}
\item $\delta(q,N)=q$ if $N(r)>0$ for some $r\in Q_{i-1}$,
\item $\delta(q,N)\in Q_i\cup Q_{i+1}$, and
\item $\delta(q,N)\in \{q\}\cup Q_{i+1}$ if $N(r)>0$ for some $r\in Q_{i+1}$.
\end{enumerate}
Here, we set $Q_3:=Q_0$ and $Q_{-1}:=Q_2$ for convenience. We refer to states in $Q_i$ as \emph{phase-$i$} states.
\end{definition}

As defined above, it is possible for a three-phase automaton to get “stuck” when some agents move to the next phase, but others cannot make progress. Our protocols will not have this problem, so we define the following semantic constraint.

\begin{definition}\label{def:nonblocking}
A three-phase automaton $P=(Q,\delta_0,\delta,Y,N)$ is \emph{nonblocking} if every reachable configuration $C:V\rightarrow Q$ which has agents from at least two phases will eventually execute a transition where an agent changes phase.
\end{definition}

With these definitions we can now state the main proposition of this section.

\begin{proposition}\label{prop:phases-special-reordering}
Let $P=(Q,\delta)$ denote a nonblocking three-phase automaton and $C_0:V\rightarrow Q_0$ an initial configuration. Then every fair run \(\pi=(C_0,C_1,\dots)\) of \(P\) has a reordering $\pi_f=(C_0',C_1',\dots)$ which fulfils, for each step $i$,
\begin{enumerate}
\item $C_i'(V)\subseteq Q_j\cup Q_{j+1}$ for some $j$, and at step $i$ an agent moves to its next phase, or
\item $C_i'(V)\subseteq Q_j$ for some $j$.
\end{enumerate}
\end{proposition}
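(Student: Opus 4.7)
The plan is to construct the reordering $\pi_f$ by sorting the non-silent transitions of $\pi$ according to a priority derived from each agent's local advance count, and then to verify both that the sorted sequence is a valid reordering of $\pi$ and that each successive configuration meets the conditions of the proposition. For a non-silent transition $t$ performed by agent $v$, I would let $a_v(t)\in\N$ denote the number of advance transitions of $v$ strictly before $t$ in $\pi$, so that $v$'s phase immediately before $t$ is $a_v(t)\bmod 3$, and assign priority $\mathrm{prio}(t):=(a_v(t),\mathrm{type}(t))$ with $\mathrm{type}(t)=0$ for level transitions and $1$ for advances. The reordering $f$ is then the bijection induced by sorting non-silent transitions lexicographically by $(\mathrm{prio}(t),\mathrm{index}_\pi(t))$.

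The technical heart of the proof will be a \emph{phase invariant}: in $\pi$, at every time and for every pair of adjacent agents $v,v'$, one has $|a_v-a_{v'}|\le 1$. This follows by induction on transitions, since level transitions leave $a$ unchanged; when $v$ advances at some time $\tau$, conditions~(1) and~(3) of Definition~\ref{def:threephase} force the neighbour $v'$ into $Q_{a_v(\tau)\bmod 3}\cup Q_{(a_v(\tau)+1)\bmod 3}$, which together with the inductive hypothesis and a mod-$3$ parity argument rules out $a_{v'}(\tau)=a_v(\tau)-1$ and keeps the invariant after the increment. Using the phase invariant I will establish the key \emph{priority-adjacency lemma}: whenever $t_1$ precedes $t_2$ in $\pi$ but $\mathrm{prio}(t_1)>\mathrm{prio}(t_2)$, the agents of $t_1$ and $t_2$ are neither identical nor adjacent. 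The identical-agent case is immediate from monotonicity of $a_v$ and the fact that within a fixed epoch the single advance has strictly higher priority than any level transition. For distinct adjacent agents, combining the phase invariant at the time of $t_1$ with the nondecrease of $a_{v_2}$ forces $a_{v_1}(t_1)\le a_{v_2}(t_2)+1$, after which a short case analysis (distinguishing whether $t_1,t_2$ are level or advance and whether the $a$-values differ by zero or one) derives a contradiction with condition~(1) or~(3) by comparing the phase of $v_2$ required for $t_1$ to fire with its actual phase given by the invariant.

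With the priority-adjacency lemma in hand, $f$ preserves the order between any two transitions whose agents are adjacent or identical, hence is a valid reordering of $\pi$; a straightforward induction along the sorted sequence (or a direct appeal to Lemma~\ref{lem:reordering-neighbourhood}) then shows that executing the sorted schedule from $C_0$ realises a run $\pi_f$ which performs the same transitions as $\pi$ in the new order. To verify the proposition I would partition $\pi_f$ into blocks $B_{0,L},B_{0,A},B_{1,L},B_{1,A},\ldots$, with $B_{k,L}$ holding the transitions of priority $(k,0)$ and $B_{k,A}$ those of priority $(k,1)$. Throughout $B_{k,L}$ every agent has $a_v=k$, so $C_i'\subseteq Q_{k\bmod 3}$ (case~2 with $j=k\bmod 3$); throughout $B_{k,A}$ the agents split according to whether they have already performed their $k$-th advance, so $C_i'\subseteq Q_{k\bmod 3}\cup Q_{(k+1)\bmod 3}$ and each step is an advance, giving case~1. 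The nonblocking hypothesis enters only to show that the sort indexes a well-defined $\omega$-sequence: if $\pi$ is infinite then either all $a_v\to\infty$ (each block finite) or, by the phase invariant and nonblocking together with connectivity, every agent eventually stabilises at a common epoch $K$ with only level transitions, making $B_{K,L}$ the final infinite block. The main obstacle is the mod-$3$ case analysis in the priority-adjacency lemma; once that is in place, the rest follows the standard pattern of commuting independent local actions.
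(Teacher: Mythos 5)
Your proposal is correct and follows essentially the same route as the paper's proof: your priority $(a_v(t),\mathrm{type}(t),\mathrm{index}_\pi(t))$ is exactly the paper's lexicographic order on $(\mathsf{pc}(v_i,i),\mathsf{pc}(v_i,i+1),i)$, your phase invariant and priority-adjacency lemma correspond to the paper's two claims about phase counts of adjacent nodes and order preservation (with the same two-case analysis against conditions (1) and (3) of the three-phase definition), and nonblocking is invoked in the same place to make the sort a well-defined $\omega$-sequence. The concluding block decomposition into level and advance phases is likewise how the paper verifies conditions (1) and (2) of the proposition.
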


\newcommand{\Pc}{\mathsf{pc}}

The proof will take up the remainder of this section. We now fix such a $P=(Q,\delta)$ and \(\pi=(C_0,C_1,\dots)\).

It will be convenient to count the total number of phases changes of a node, so we define the \emph{phase count} $\Pc(v,i)\in\N$ as the smallest function which is non-decreasing w.r.t.\ $i$ and has $C_i(v)\in Q_j$ for all $i,v$ and $j:=(\Pc(i,v)\bmod3)$. Intuitively, this means that we increment $\Pc$ whenever a node moves to the next phase. Observe that $\Pc(v,i+1)-\Pc(v,i)\le1$, as a node can move at most one phase per transition.

\begin{lemma}
For all adjacent nodes \(u,v \in V\) we have \( |\Pc(u,i)-\Pc(v,i)|\leq 1\) for all $i$. \label{lem:weak-broadcast-claim1}
\end{lemma}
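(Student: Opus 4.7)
The plan is to prove the bound by induction on $i$, using the paper's standing assumption of exclusive selection (so exactly one node updates per step), together with Condition 1 of Definition~\ref{def:threephase} as the key structural lever.

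For the base case $i=0$, the hypothesis $C_0:V\to Q_0$ gives $\Pc(v,0)=0$ for every node $v$, so $|\Pc(u,0)-\Pc(v,0)|=0$ for every edge $\{u,v\}$. For the inductive step, assume the bound holds at time $i$ and let $w$ be the single node selected at step $i$. For any edge $\{u_1,u_2\}$ with $w\notin\{u_1,u_2\}$, neither phase count changes and the inductive hypothesis transfers verbatim. So the nontrivial situation is an edge $\{w,u\}$. If $w$'s transition does not increase its phase count, then $\Pc(w,i+1)=\Pc(w,i)$ and $\Pc(u,i+1)=\Pc(u,i)$, and the bound is immediate from the inductive hypothesis.

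The main case is when $w$ moves from $Q_j$ to $Q_{j+1}$, where $j=\Pc(w,i)\bmod 3$, so that $\Pc(w,i+1)=\Pc(w,i)+1$. The key step is to rule out $\Pc(u,i)=\Pc(w,i)-1$: if this held, then $u$ would currently sit in $Q_{j-1}$, and by Condition~1 of Definition~\ref{def:threephase} an agent in $Q_j$ with a neighbour in $Q_{j-1}$ is forced to make a silent transition, contradicting that $w$ just advanced one phase. Hence $\Pc(u,i)\ge\Pc(w,i)$. Combining with the inductive hypothesis then gives
\[
\Pc(w,i{+}1)-\Pc(u,i{+}1) \;=\; \Pc(w,i)+1-\Pc(u,i) \;\le\; 1,
\]
while
\[
\Pc(u,i{+}1)-\Pc(w,i{+}1) \;\le\; \Pc(u,i)-\Pc(w,i) \;\le\; 1
\]
by the inductive hypothesis (using $\Pc(w,i+1)\ge\Pc(w,i)$). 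This closes the induction.

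The argument is short; the only subtle point is recognising that the whole weight of the lemma sits on Condition~1 of the definition of a three-phase automaton, which is precisely what prevents a node from ``lapping'' a slower neighbour. I do not foresee a real obstacle beyond noting that the exclusive-selection convention of Section~\ref{sec:extensions} is what lets us analyse a single mover per step; under liberal selection the same argument goes through after observing that the set of simultaneously moving nodes can be processed sequentially (in any order) because each individual move still leaves the invariant intact by the analysis above.
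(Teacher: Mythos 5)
Your proof is correct and is essentially the paper's own argument recast as a forward induction: the paper instead takes a minimal counterexample $i$ with $\Pc(u,i)=\Pc(v,i)-1$ at step $i-1$ and derives the same contradiction from the rule that a node with a neighbour in the previous phase must make a silent transition. One small point in your favour: the paper's proof cites condition (2) of Definition~\ref{def:threephase}, but the operative condition is indeed condition (1), which is the one you invoke.
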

\begin{proof}
Assume for contradiction that the statement does not hold and prick appropriate $u,v,i$ where $i$ is minimal and $\Pc(u,i)=\Pc(v,i)-1$. Then at step $i-1$ node $v$ must move to the next phase, but this is prohibited by condition (2) of Definition~\ref{def:threephase}.
\end{proof}

Lemma~\ref{lem:weak-broadcast-claim1} implies that if one node has infinitely many phase changes, then all nodes do. We will now show the stronger statement that if a node has $m$ phase changes, for $m\in\N\cup\{\infty\}$, then all other nodes have as well. Here we use that $P$ is nonblocking, so it is not possible for nodes to become stuck in prior phases.

\begin{lemma}
If $\Pc$ is bounded, i.e.\ $\Pc(v,i)\le M$ for some $M\in\N$ and all $v,i$, then there are $m,i\in\N$ with $\Pc(v,j)=m$ for all nodes $v$ and $j\ge i$.
\label{lem:weak-broadcast-claim2}
\end{lemma}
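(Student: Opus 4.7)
The plan is to exploit that $\Pc(v,\cdot)$ is non-decreasing in $i$ (by construction) together with the boundedness assumption and connectedness of $G$ to force all nodes to eventually share a common final phase count, using the nonblocking property to rule out the case where two different final values coexist.

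First, I would fix, for each node $v \in V$, the value $m_v := \lim_{i \to \infty} \Pc(v,i)$; this limit exists because $\Pc(v,\cdot)$ is non-decreasing and bounded by $M$. Since $\Pc(v,\cdot)$ takes values in $\N$, there exists $i_v \in \N$ such that $\Pc(v,j) = m_v$ for all $j \ge i_v$. Because $V$ is finite, $i^* := \max_{v \in V} i_v$ is finite as well, and for every node $v$ and every $j \ge i^*$ we have $\Pc(v,j) = m_v$. In particular, no node changes phase after step $i^*$.

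Next, I would argue that all $m_v$ coincide. By Lemma~\ref{lem:weak-broadcast-claim1}, for any two adjacent nodes $u,v$ and any step $i$ we have $|\Pc(u,i) - \Pc(v,i)| \le 1$; letting $i \to \infty$ yields $|m_u - m_v| \le 1$. Suppose for contradiction that not all $m_v$ are equal. Since $G$ is connected, there exist adjacent $u, v$ with $m_u \ne m_v$, so $C_{i^*}(u) \in Q_j$ and $C_{i^*}(v) \in Q_{j+1}$ for some $j$ (indices mod $3$). Thus $C_{i^*}$ contains agents from at least two different phases, and $C_{i^*}$ is reachable from $C_0$ via the prefix of $\pi$.

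Finally, I would apply the nonblocking assumption (Definition~\ref{def:nonblocking}) to the reachable configuration $C_{i^*}$: since $C_{i^*}$ has agents from at least two phases, the fair run $\pi$ (which visits $C_{i^*}$ at step $i^*$ and continues fairly thereafter) must eventually execute a transition at which some agent changes phase, i.e.\ $\Pc(v, j+1) > \Pc(v, j)$ for some $j \ge i^*$ and some node $v$. This contradicts the definition of $i^*$. Hence all $m_v$ equal a common value $m$, and the pair $(m, i^*)$ satisfies the claim. The only delicate point is to make sure the nonblocking hypothesis is applied to the continuation of the given fair run $\pi$ from $C_{i^*}$, which is itself a fair run starting in a reachable configuration — this is immediate from the definition of fairness being a tail property.
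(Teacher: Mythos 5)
Your proposal is correct and follows essentially the same route as the paper's proof: use boundedness and monotonicity of $\Pc$ to find a step after which no phase changes occur, use connectedness and Lemma~\ref{lem:weak-broadcast-claim1} to locate adjacent nodes in consecutive phases if the limits disagree, and derive a contradiction from the nonblocking property. Your write-up is somewhat more explicit about the per-node limits $m_v$ and about fairness being a tail property, but the argument is the same.
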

\begin{proof}
Assume that $\Pc$ is bounded. As $\Pc$ is non-decreasing we can thus find a $i$ s.t.\ no node moves to another phase after step $i$. If $C_i$ has two nodes $u,v$ with different phase counts, i.e.\ $\Pc(u,i)<\Pc(v,i)$, then must also be such $u,v$ which are adjacent. Due to Lemma~\ref{lem:weak-broadcast-claim1}, the phase counts of $u$ and $v$ differ only by $1$, therefore we know that $C_i(u)\in Q_j$ and $C_i(v)\in Q_{j+1}$ for some $j$. As $P$ is nonblocking, eventually an agent will move to its next phase, contradicting our choice of $i$. So at step $i$ all phase counts must be pairwise equal.
\end{proof}

Lemma~\ref{lem:weak-broadcast-claim2} is crucial for the reordering, since in the new run of \(A\) all nodes are supposed to perform the same number of phase changes. Now we can define the reordering \(f\). Let \( (v_0,v_1,v_2,\dots)\in V^\omega\) be the sequence of selections inducing run \(\pi\).

We define a new ordering on natural numbers by 
\[i\le_f j\Leftrightarrow (\Pc(v_i,i),\Pc(v_i,i+1),i)\le_{\mathrm{lex}}(\Pc(v_j,j),\Pc(v_j,j+1),j)\]
where $\le_{\mathrm{lex}}$ denotes the lexicographical ordering. The intuition is that we always execute a transition from a node with the lower phase count. Amongst those, we pick one that will not move the node to its next phase, if possible. Finally, from the remaining choices we pick the one that occurred first in the original run $\pi$. Let $I:=\{i\in\N:C_i\ne C_{i+1}\}$ denote the indices of non-silent steps of $\pi$. The function $f:I\rightarrow\N$ can now be defined as $f(i):=|\{j\in I:j\le_f i\}|-1$. (We will see shortly that $f$ is indeed well-defined.)

\begin{lemma}
$\pi_f$ is a reordering.
\end{lemma}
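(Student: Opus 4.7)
The plan is to construct the schedule $s'$ on $\N$ by setting $s'(f(i)) = v_i$ for each $i \in I$ and filling the remaining positions with a fairness-preserving default (any silent selection suffices); the run $\pi_f$ is then induced by $s'$ from $C_0$. To prove $\pi_f$ is a reordering per Definition~\ref{def:sim2}, I will verify that (a) $f \colon I \to I'$ is a well-defined bijection, (b) $s(i) = s'(f(i))$ (which is immediate by construction), and (c) if $i < j$ are both in $I$ and $v_i, v_j$ are adjacent or identical, then $f(i) < f(j)$.

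For (a), well-definedness requires every initial segment $\{j \in I : j \leq_f i\}$ to be finite. I split on whether $\Pc$ is bounded. In the unbounded case, Lemma~\ref{lem:weak-broadcast-claim1} forces every node's phase count to also tend to infinity, so each node spends only finitely many time steps at each phase count, making $\{j \in I : \Pc(v_j, j) = p\}$ finite for every $p$. In the bounded case, Lemma~\ref{lem:weak-broadcast-claim2} yields a uniform eventual value $m$, whence $\{j \in I : \Pc(v_j, j) < m\}$ is finite and the remaining contributions to the initial segment all satisfy $\Pc(v_j, j) = m$ and $j \leq i$, giving finiteness again.

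The main obstacle is (c). Fix $i < j$ in $I$ with $v_i, v_j$ adjacent or identical, and set $p := \Pc(v_i, i)$. I first show $\Pc(v_j, j) \geq p$. If $v_i = v_j$ this is monotonicity; otherwise non-silence of $v_i$ at time $i$ together with condition~(1) of Definition~\ref{def:threephase} rules out $\Pc(v_j, i) = p - 1$, and Lemma~\ref{lem:weak-broadcast-claim1} forces $\Pc(v_j, i) \in \{p, p+1\}$, so monotonicity gives $\Pc(v_j, j) \geq p$. A strict inequality settles the lex comparison immediately. In the tight case $\Pc(v_j, j) = p$, I compare the second lex coordinates: if $\Pc(v_i, i+1) = p$ there is nothing to check; if $\Pc(v_i, i+1) = p+1$, then $v_i$ is in phase $p+1$ at time $j$, and condition~(3) of Definition~\ref{def:threephase} applied to the non-silent transition of $v_j$ (whose neighbour $v_i$ occupies phase $p+1$) forces $\Pc(v_j, j+1) = p+1$. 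In either sub-case, the third lex coordinate $i < j$ closes the argument. Finally, a routine induction on $k$ shows that the induced $\pi_f$ has non-silent index set exactly $I' = f(I)$: by (c), by step $f(i)$ of $\pi_f$ every prior non-silent change to a neighbour of $v_i$ has already been executed, so $v_i$ sees the same neighbourhood as in $\pi$ at step $i$ and performs the same non-silent transition. The three-phase structure is crucial here: it is precisely what rules out the otherwise problematic configuration in which $v_i$ is active while an adjacent $v_j$ lags one phase behind.
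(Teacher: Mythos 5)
Your proof is correct and follows essentially the same route as the paper's: well-definedness of $f$ via Lemmata~\ref{lem:weak-broadcast-claim1} and~\ref{lem:weak-broadcast-claim2}, and the ordering condition via conditions~(1) and~(3) of Definition~\ref{def:threephase} combined with Lemma~\ref{lem:weak-broadcast-claim1} (you argue directly where the paper argues by contradiction, but the case split and the facts used are the same). Your closing induction showing that the non-silent indices of the induced run are exactly $f(I)$ is a point the paper leaves implicit (it is essentially Lemma~\ref{lem:reordering-neighbourhood}), and is a reasonable addition rather than a divergence.
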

\begin{proof}
We first check that \(f\) is well-defined. For this not to be the case, we would have to have an $i$ with $j\le_f i$ for infinitely many $j$. In particular, there must be a node $u$ s.t.\ there are infinitely many $j\le_fi$ with $v_j=u$, which implies that $\Pc(u,j)<\Pc(v_i,i+1)$ for infinitely many $j$. Therefore $\Pc$ must be bounded by Lemma~\ref{lem:weak-broadcast-claim1}, but then Lemma~\ref{lem:weak-broadcast-claim2} implies that $\Pc(v,j)<\Pc(v_i,i+1)$ can only occur for finitely many $j$, a contradiction.

The function \(f\) is clearly a bijection. In order to show that it induces a reordering, let $i,j\in I$ with $i<j$ and $v_i,v_j$ being adjacent. (Note that the transitions at steps $i$ and $j$ are not silent, by definition of $I$.) We need to show that $f(i)<f(j)$. Assuming that $f(i)\ge f(j)$ holds, i.e.\ $i\ge_f j$, there are two possible cases.

Case 1: $\Pc(v_i,i)>\Pc(v_j,j)$. As $\Pc(\Any,t)$ is non-decreasing in $t$ and $i<j$, we get $\Pc(v_i,i)>\Pc(v_j,i)$. Further, $v_i$ and $v_j$ are adjacent, so Lemma~\ref{lem:weak-broadcast-claim1} implies that $\Pc(v_i,i)=\Pc(v_j,i)+1$. But then, by condition~(1) of Definition~\ref{def:threephase}, the transition at step $i$ must be silent, contradicting our assumption. 

Case 2: $\Pc(v_i,i)=\Pc(v_j,j)$ and $\Pc(v_i,i+1)>\Pc(v_j,j+1)$. Again, $\Pc(\Any,t)$ is non-decreasing in $t$, so $\Pc(v_i,j)\ge\Pc(v_i,i+1)>\Pc(v_j,j+1)\ge\Pc(v_j,j)$. Using Lemma~\ref{lem:weak-broadcast-claim1} we now get $\Pc(v_i,j)=\Pc(v_j,j)+1$ and thus $\Pc(v_j,j+1)=\Pc(v_j,j)$. So at step $j$ node $v_i$ is one phase ahead of its neighbour $v_j$ and $v_j$ moves neither to the next phase, nor does it perform a silent transition. This contradicts condition (3) of Definition~\ref{def:threephase}.
\end{proof}

Let $\pi':=\pi_f$ denote the reordered execution and define $\Pc'$ analogously to $\Pc$. To complete the proof of Proposition~\ref{prop:phases-special-reordering}, it suffices to show that at each step of $\pi'$ an agent with the smallest phase count will be selected, and amongst those the agents that do not move to the next phase are preferred. Intuitively, this sounds reasonable, as we have defined the reordering $f$ in precisely this manner. We do, however, need to argue briefly that the phase counts $\Pc'$ of the reordered execution correspond directly to the original phase counts $\Pc$.

\begin{lemma}
Let $v$ denote a node and let $i\in I$ with $v_i=v$. Then $\Pc'(v,f(i))=\Pc(v,i)$.
\end{lemma}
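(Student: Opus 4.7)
My plan is to argue that the reordering $f$ preserves, for the fixed node $v$, both the order and the observed neighbourhood of all of $v$'s non-silent selections. Once this is established, $v$ executes the same sequence of transitions in $\pi$ and in $\pi'=\pi_f$, so it undergoes the same sequence of phase changes and the two phase counts must coincide.

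First, I would enumerate the non-silent selections of $v$ in $\pi$ up to step $i$ as $i_1<i_2<\dots<i_m=i$. Because any two distinct selections of the same node $v$ are ``adjacent or identical'' in the sense of Definition~\ref{def:sim2}, the reordering property forces $f(i_1)<f(i_2)<\dots<f(i_m)=f(i)$. Conversely, any $t\in I'$ with $t\le f(i)$ and $s'(t)=v$ must be of the form $f(i_k)$ for some $k\le m$: as $f$ is a bijection, $t=f(j)$ for some $j\in I$ with $v_j=v$, and $j>i$ would give $f(j)>f(i)$. So $f(i_1),\dots,f(i_m)$ are precisely the non-silent selections of $v$ in $\pi'$ occurring in the window $[0,f(i)]$, in the same order.

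Next, I would prove by induction on $k\in\{1,\dots,m\}$ that $C_{i_k}(v)=C'_{f(i_k)}(v)$. The base case $k=1$ follows from the fact that, under exclusive selection, $v$'s state cannot change before its first non-silent selection in either run, so both sides equal $C_0(v)$. For the inductive step, Lemma~\ref{lem:reordering-neighbourhood} gives $C_{i_k}(u)=C'_{f(i_k)}(u)$ for every $u$ adjacent or identical to $v$; so $v$ sees the same state and neighbourhood and therefore applies the same transition in both runs, yielding $C_{i_k+1}(v)=C'_{f(i_k)+1}(v)$. Since $v$ is not selected for any non-silent transition between $i_k+1$ and $i_{k+1}$ (respectively between $f(i_k)+1$ and $f(i_{k+1})$), and silent steps leave all states fixed, $v$'s state remains unchanged on these intervals, giving $C_{i_{k+1}}(v)=C'_{f(i_{k+1})}(v)$.

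Finally, $\Pc(v,i)$ counts precisely the indices $k\in\{1,\dots,m-1\}$ for which $C_{i_k}(v)$ and $C_{i_k+1}(v)$ lie in different phase classes $Q_j$, and $\Pc'(v,f(i))$ counts the analogous condition on the primed run at the indices $f(i_k)$; by the pairwise equalities above these two sets of $k$'s coincide, so $\Pc'(v,f(i))=\Pc(v,i)$. I expect no deeper obstacle beyond the bookkeeping needed to correctly pair the selections of $v$ in the two runs before invoking Lemma~\ref{lem:reordering-neighbourhood}; the two key ingredients---order preservation of reorderings on identical selections, and the neighbourhood equality at non-silent selections---are already in hand.
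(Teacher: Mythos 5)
Your proof is correct and follows essentially the same route as the paper, which disposes of this lemma in one line as ``a simple induction combined with Lemma~\ref{lem:reordering-neighbourhood}''; your write-up is just the fully spelled-out version (note that Lemma~\ref{lem:reordering-neighbourhood} applied with $u=v$ already hands you $C_{i_k}(v)=C'_{f(i_k)}(v)$ directly, so part of your induction is doing work the cited lemma already covers).
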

\begin{proof}
This follows from a simple induction on $i$ combined with Lemma~\ref{lem:reordering-neighbourhood}.
\end{proof}

This concludes the proof of Proposition~\ref{prop:phases-special-reordering}.

\subsection{Simulating Weak Broadcasts} \label{appendix:weak-broadcast}

\simulationWeakBroadcast*

(We repeat the construction from the proof sketch for clarity.) 

We will use the definitions and results from the previous section, Appendix~\ref{appendix:3-phase-automata}, where we have constructed a general reordering for three-phase protocols.

\begin{lemma}\label{lem:wb-isthreephase}
The automaton $P'$ is a nonblocking three-phase automaton.
\end{lemma}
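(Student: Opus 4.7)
I would split the proof into verifying the three-phase structure and the nonblocking property. For the structure, I set $Q_0 := Q$, $Q_1 := Q \times \{1\} \times Q^Q$, and $Q_2 := Q \times \{2\} \times Q^Q$, which partitions $Q'$, and then inspect the five transition rules~(1)--(5) defining $\delta'$ against Definition~\ref{def:threephase}. Conditions~(2) and~(3) follow directly from the shape of each rule: every rule either keeps the current phase or moves to the next one, and an agent observing a neighbour one phase ahead either stays silent or fires the matching forward rule. Condition~(1) has three subcases: a phase-$1$ agent with a phase-$0$ neighbour is kept silent by the guard $N[0] = 0$ of rule~(4); a phase-$2$ agent with a phase-$1$ neighbour is kept silent by $N[1] = 0$ of rule~(5); and for a phase-$0$ agent with a phase-$2$ neighbour, rules~(1) and~(2) are blocked by $N[0] \ne |N|$, while rule~(3) is blocked by additionally requiring $N[2] = 0$ in its guard (equivalently, by pinning down $g$ so that $g(N) = \Box$ whenever $N[2] > 0$).

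For nonblocking, I would first prove by induction on the step index~$t$ that every reachable configuration $C_t$ satisfies $|\Pc(u, t) - \Pc(v, t)| \le 1$ for all adjacent $u, v$. The base case $t = 0$ is trivial, and in the inductive step, if the selected node $w$ advances from phase~$i$ to phase~$i+1$, condition~(1) just verified ensures that $w$ has no neighbour in phase~$i-1$ at time~$t$, so no gap of two is created. Given this invariant, take any reachable $C_t$ with agents in at least two phases, set $c := \min_v \Pc(v, t)$, and use connectivity to pick a node $u$ with $\Pc(u, t) = c$ that has a neighbour $v$ with $\Pc(v, t) > c$; by the invariant $\Pc(v, t) = c + 1$. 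By minimality of $c$, no neighbour of $u$ has count $c - 1$, so the forward rule appropriate to $u$'s phase is enabled: rule~(3) when $c \bmod 3 = 0$ (we have $N[1] > 0$ and $N[2] = 0$, so the refined $g$ returns a non-$\Box$ value); rule~(4) when $c \bmod 3 = 1$ (since $N[0] = 0$); and rule~(5) when $c \bmod 3 = 2$ (since $N[1] = 0$). Since transitions at other nodes within the same phase do not alter this analysis, adversarial fairness guarantees that $u$ is eventually selected, at which point it advances and a phase change occurs.

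The main obstacle is condition~(1) for the phase-$0$ case: rule~(3) as originally written fires whenever $N[1] > 0$, which would break condition~(1) if additionally $N[2] > 0$. The fix is the small strengthening of rule~(3)'s guard (or of $g$) indicated above; this tightening is essential for the invariant to be preserved, since otherwise a phase-$0$ advance could open a gap of two between adjacent phase counts. After this adjustment, the rest is routine case analysis of the five rules combined with the minimum-phase-count argument for nonblocking.
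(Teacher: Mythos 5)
Your argument is correct and, for the nonblocking half, is essentially the paper's own proof: the paper likewise takes the set of nodes with minimal phase count and argues that any such node adjacent to a node one phase ahead has its forward rule enabled, so that fairness forces a phase change. The substantive difference lies in the half the paper dismisses with ``it is easy to check'': you are right that rule~(3), as literally defined, violates condition~(1) of Definition~\ref{def:threephase}. Since the constraint ``$g(N)=\square$ implies $N[1]=0$'' forces $g(N)\ne\square$ whenever $N[1]>0$, a phase-$0$ agent with both a phase-$1$ and a phase-$2$ neighbour fires rule~(3) rather than staying silent, and such configurations are genuinely reachable: on a path $u\text{--}w\text{--}x$ the first broadcast round can be scheduled so that $u$ lags in phase~$2$ while $w$ and $x$ return to phase~$0$ and $x$ initiates a second broadcast, leaving $w$ in phase~$0$ with $u$ in phase~$2$ and $x$ in phase~$1$; firing rule~(3) at $w$ then opens a phase-count gap of two between $u$ and $w$, breaking Lemma~\ref{lem:weak-broadcast-claim1} and with it the lockstep reordering (semantically, $w$ would join the second broadcast before its neighbour $u$ has finished the first). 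Your repair---adding $N[2]=0$ to the guard of rule~(3), equivalently letting $g$ return $\square$ when $N[2]>0$---is the right one, and you correctly observe that it costs nothing for nonblocking, since a phase-$0$ node of minimal phase count has no phase-$2$ neighbours anyway. The remaining ingredients of your write-up (the inductive phase-count invariant, which re-derives the paper's Lemma~\ref{lem:weak-broadcast-claim1}, and the persistence of the enabling condition until the chosen node is selected) match the paper and are fine.
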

\begin{proof}
Using $Q_i$ to denote the phase $i$ states as defined above, it is easy to check that $P'$ is a three-phase automaton. It remains to show that $P'$ is nonblocking, so let $\pi=(C_0,C_1,...)$ denote a fair run of $P'$ and $C_i:V\rightarrow Q'$ a configuration where two agents are in different phases. We define the phase count $\Pc$ as for the proof of Proposition~\ref{prop:phases-special-reordering}, meaning that $\Pc(v,j)$ is the number of phase changes of node $v$ until step $j$.

Let $U:=\{u\in V:\Pc(u,i)=\min_v\Pc(v,i)\}$ denote the set of nodes which have a minimal number of phase changes at step $i$. We know that $C_i$ has nodes of two different phases, so $U$ is a proper subset of $V$ and we can pick adjacent nodes $u,v$ with $u\in U$ and $v\notin U$. We claim that the next step selecting $u$ will move it to its next phase. This can be seen by a simple case distinction: if $u$ is in phase $0$, $1$, or $2$, then transition (3), (4), or (5) will move it to the next phase, respectively. Selecting $u\in U$ ensures that $u$ has no neighbours in the previous phase, which already suffices to enable (4) and (5), while having $u$ adjacent to $v$ ensures that (3) can be executed.
\end{proof}

Again, let $\pi$ denote a fair run of $P'$. Using Proposition~\ref{prop:phases-special-reordering} we find a specific reordering $\pi_f=(C_0,C_1,...)$ of $\pi$. In particular, every configuration $C_i$ either has all agents in the same phase, or it has agents in at most two phases and at step $i$ one agent moves to the its next phase. We are now going to show that $\pi_f$ is an extension of a fair run $\tau$ of $P$, which will complete the proof of Lemma~\ref{lem:simulate-weak-broadcast}.

First, note that in $\pi_f$ there are infinitely many configurations $C_i$ where all agents are in the same phase. Due to transitions (4) and (5) it is not possible to perform a silent transition if all agents are in phase $1$ or $2$, respectively. So the set $I:=\{i\in\N:C_i(V)\subseteq Q\}$ of indices $i$ where $C_i$ has only phase $0$ agents has infinitely many elements. We define the mapping $g:\N\rightarrow\N$ as the unique bijection with $g(\N)=I$ which is strictly increasing, and set $\tau:=(K_0,K_1,...)$ where $K_i:=C_{g(i)}$ for all $i$.

\begin{lemma}\label{lem:bc-tauisextension}
$\pi_f$ is an extension of $\tau$.
\end{lemma}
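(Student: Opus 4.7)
The plan is to verify the two conditions of Definition~\ref{def:sim1}. The first, the existence of a monotonically increasing $g$ with $\tau(i)=\pi_f(g(i))$, holds by construction of $g$ and $\tau$. The content is to establish the $\sim_Q$-condition: for every $i$ and every $g(i)\le j\le g(i+1)$, either $C_j\sim_Q C_{g(i)}$ or $C_j\sim_Q C_{g(i+1)}$. My first step would be a sharpening of Proposition~\ref{prop:phases-special-reordering} at the level of phase counts: at every time $j$ of $\pi_f$, the values $\Pc(v,j)$ over all nodes $v$ span at most two consecutive integers. Indeed, if two nodes had phase counts differing by at least $2$ at time $j$, then walking along any connecting path in the graph and applying Lemma~\ref{lem:weak-broadcast-claim1} edge by edge would produce three consecutive phase counts $M,M+1,M+2$ attained simultaneously, hence three distinct phases modulo $3$, contradicting the proposition.

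With this sharpening in hand, I would split on whether $g(i+1)=g(i)+1$ or not. In the former case the single step is necessarily of type (1), since a phase change would leave $C_{g(i)+1}$ with a node outside phase~$0$, contradicting $g(i)+1\in I$; the $\sim_Q$-condition is then trivial. In the latter case, let $M$ be the common phase count of all nodes at $g(i)$, so that they all have phase count $M+3$ at $g(i+1)$. I would use the lexicographic ordering $(\Pc(v,i),\Pc(v,i+1),i)$ underlying $\pi_f$ to argue that no type-(1) transition occurs strictly inside the interval: all $(M,M,\cdot)$-ordered transitions are pushed before $g(i)$, since they would otherwise produce further all-phase-$0$ configurations and thus elements of $I$ strictly between $g(i)$ and $g(i+1)$; phase counts $M+1$ and $M+2$ admit no type-(1) transition because (1) requires phase~$0$; and type-(1) transitions at phase count $M+3$ are pushed past $g(i+1)$ for the same reason. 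So every intermediate step is of type (2)--(5) and advances a single node's phase count by one; combined with the sharpening, this forces $K_j:=\min_v\Pc(v,j)$ to take the values $M,M+1,M+2,M+3$ in order as $j$ runs from $g(i)$ to $g(i+1)$.

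The $\sim_Q$-condition then falls out regime by regime. If $K_j=M$, any node $v$ with $C_j(v)\in Q$ has $\Pc(v,j)=M$, has not been selected since $g(i)$, and so satisfies $C_j(v)=C_{g(i)}(v)$, giving $C_j\sim_Q C_{g(i)}$. If $K_j=M+1$, no node lies in phase~$0$ and $\sim_Q$ holds with either endpoint vacuously. If $K_j=M+2$, any node $v$ with $C_j(v)\in Q$ has $\Pc(v,j)=M+3$, has completed its broadcast cycle, and will not move again before $g(i+1)$, so $C_j(v)=C_{g(i+1)}(v)$ and $C_j\sim_Q C_{g(i+1)}$. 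The final case $K_j=M+3$ coincides with $j=g(i+1)$.

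The main obstacle I anticipate is the two bookkeeping steps, namely lifting the proposition from phases modulo~$3$ to phase counts via Lemma~\ref{lem:weak-broadcast-claim1}, and ruling out intermediate type-(1) transitions using the lexicographic ordering from the construction of $f$. Once they are in place the regime analysis is routine, and the same step-by-step correspondence shows that each step of $\tau$ realises either a neighbourhood transition of $P$ (when $g(i+1)=g(i)+1$) or the weak broadcast of $P$ whose initiators and response functions are exactly those recorded by the type-(2) and type-(3) transitions of $\pi_f$ inside the interval, so that $\tau$ is a run of $P$.
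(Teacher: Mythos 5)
Your proposal is correct and follows essentially the same route as the paper: both decompose the interval between consecutive all-phase-$0$ configurations of $\pi_f$ into regimes according to how far each node has progressed through the three-phase cycle, and observe that a node still in a $Q$-state agrees with the left endpoint if it has not yet advanced and with the right endpoint once it has completed the cycle. The paper gets the regime structure directly from the black-box guarantees of Proposition~\ref{prop:phases-special-reordering} (existence of the all-phase-$1$ and all-phase-$2$ times $t_1,t_2$, with only phase-advancing or silent steps outside $[t_1,t_2]$), whereas you re-derive it via phase counts and the lexicographic ordering underlying $f$; that extra bookkeeping is sound but not needed.
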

\begin{proof}
Fix any $i,j\in\N$ with $g(i)<j<g(i+1)$. Due to the definition of $g$ we know that $C_{g(i)+1}$ does not contain only phase $0$ agents, while $C_{g(i)}$ does. So an agent has moved to the next phase at step $g(i)$ in $\pi_f$ and the properties of $\pi_f$ guarantee us that there are $t_1,t_2$ with $g(i)<t_1,t_2<g(i+1)$ s.t.\ $C_{t_1}$ ($C_{t_2}$) has only agents in phase $1$ (phase $2$). Moreover, we know that in $\pi_f$ every step $t$ with $g(i)\le t<t_1$ or $t_2\le t<g(i+1)$ moves an agent to its next phase or is silent. As phases $1$ and $2$ consist of only intermediate states, this implies that $C_{g(i)}\sim_QC_j$ if $j<t_1$, $C_{g(i+1)}\sim_QC_j$ if $j>t_2$, and $C_{g(i)}\sim_QC_j\sim_QC_{g(i+1)}$ if $t_1\le j\le t_2$.
\end{proof}

The next lemma is mostly a matter of looking carefully at the definition of our transitions~(1)-(5).

\begin{lemma}
$\tau$ is a run of $P$.
\end{lemma}
\begin{proof}
Fix any $i\in\N$. If $g(i+1)=g(i)+1$, then step $g(i)$ of $\pi_f$ has simply executed transition~(1), which correctly performs a neighbourhood transitions for an agent not in a broadcast-initiating state. Otherwise, as we argued for Lemma~\ref{lem:bc-tauisextension}, there is a $g(i)<t_1<g(i+1)$ s.t.\ $C_{t_1}$ has only agents in phase $1$. Let $S$ denote the set of agents executing transition (2) between steps $g(i)$ and $t_1$ in $\pi_f$. As agent can only move from phase $0$ to phase $1$ via transitions (2) and (3), and transition (3) is enabled iff a neighbour is already in phase $1$, we find that $S$ is both nonempty and an independent set. Additionally, the definition of (2) ensures that $S$ contains only agents in broadcast-initiating states (i.e.\ $K_i(v)\in Q_B$ for $v\in S$). 

Now we simply note that $K_{i+1}$ is the result of executing a weak broadcast transition on $K_i$ on the selection $S$. Transition~(2) correctly perform the local update, while (3) moves the node according to some response function.
\end{proof}

Finally, we have to show that $\tau$ is fair.

\begin{lemma}
$\tau$ is fair.
\end{lemma}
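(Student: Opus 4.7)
The plan is to read off a valid schedule $\sigma_\tau\in(\{n,b\}\times\IS)^\omega$ for $\tau$ directly from the block structure of $\pi_f$ induced by $g$, and then to verify the required fairness property by leveraging the fairness of $\pi$. For each $i\ge 0$, I will inspect the block $[g(i),g(i+1))$ of $\pi_f$: by Proposition~\ref{prop:phases-special-reordering} either it contains a single non-silent step of type~(1) performed by some $v$, in which case I set $\sigma_\tau(i):=(n,\{v\})$, or it contains a complete three-phase cycle whose phase-$0$ subsegment fires transitions~(2) for an independent set $S$, in which case I set $\sigma_\tau(i):=(b,S)$. The previous lemma already shows that this $\sigma_\tau$ generates the run $\tau$ of $P$.

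For the adversarial case, I would split on whether $\sigma_\tau$ has infinitely many broadcast steps; if yes we are immediately done. Otherwise, from some index $N$ onward every block is a single type-(1) step, so every non-silent step of $\pi_f$ past $g(N)$ is of type~(1). Fixing $v\in V$, fairness of $\pi$ provides infinitely many selections of $v$; those that are non-silent correspond via the bijection $f$ to non-silent selections of $v$ in $\pi_f$, each contributing an index $i$ with $v\in\sigma_\tau(i)$. In the corner case where $v$ becomes permanently silent in $\pi$, I would insert extra $(n,\{v\})$ entries into $\sigma_\tau$ at indices where a type-(1) step by $v$ is guaranteed silent at $K_i$; by construction these insertions do not alter the sequence of configurations of $\tau$ and ensure $v$ appears in $\sigma_\tau$ infinitely often.

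For the pseudo-stochastic case, given a target word $w\in(\{n,b\}\times\IS)^*$, I would exhibit a canonical $\tilde w\in V^*$ whose execution from any all-phase-$0$ configuration drives $\pi_f$ through the exact block pattern coded by $w$: each letter $(n,\{v\})$ contributes the single selection $v$, while each $(b,S)$ contributes a fixed three-phase driver (first $S$ to fire~(2), then a breadth-first wave of their non-initiator neighbours to fire~(3), then two full sweeps of $V$ to fire~(4) and~(5)). Because $\pi$ is pseudo-stochastic, $\tilde w$ occurs infinitely often in $\pi$; via $f$ each such occurrence projects to an occurrence of $\tilde w$ as a consecutive segment of non-silent selections in $\pi_f$, which by construction of $\sigma_\tau$ gives an occurrence of $w$.

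The main obstacle will be the pseudo-stochastic case, specifically arguing that the canonical $V^*$-driver $\tilde w$ always yields the intended block pattern irrespective of the configuration at which it begins inside $\pi_f$. This reduces to checking that, once all agents are in phase~$0$, the three-phase protocol evolves deterministically enough on the prescribed selection sequence to produce exactly the cycle structure encoded by each $(b,S)$-letter, so that the extracted $\sigma_\tau$-segment matches $w$ letter for letter.
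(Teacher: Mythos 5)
Your adversarial case is essentially the paper's argument (split on whether infinitely many broadcasts occur; otherwise the tail of $\pi_f$ is untouched by the reordering and fairness of $\pi$ transfers), and your padding of $\sigma_\tau$ with silent $(n,\{v\})$ selections for permanently-silent nodes is a legitimate refinement. The pseudo-stochastic case, however, has a genuine gap, and it is exactly the one you flag as the ``main obstacle'': pseudo-stochasticity of $\pi$ only gives infinitely many occurrences of the driver word $\tilde w$ at \emph{arbitrary} positions of $\pi$, whereas your construction of $\tilde w$ (fire (2) on $S$, then a wave of (3), then sweeps for (4) and (5)) produces the block pattern coded by $w$ only when launched from a configuration in which \emph{all} agents are in phase $0$. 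The resolution you sketch --- that $\tilde w$ yields the intended pattern ``irrespective of the configuration at which it begins'' --- is false: if $\tilde w$ starts while some agents are mid-broadcast, the selected agents fire different transitions (or none), and the extracted $\sigma_\tau$-segment need not match $w$. A secondary problem is the claim that an occurrence of $\tilde w$ as a consecutive segment of $\pi$ ``projects via $f$'' to a consecutive segment of non-silent selections of $\pi_f$; a reordering may interleave non-adjacent steps from outside the segment, so consecutiveness is not preserved in general.

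The paper closes this gap by anchoring. It first strengthens pseudo-stochasticity to: for every configuration $C$ reached infinitely often and every finite selection sequence $\sigma'$, the sequence $\sigma'$ is executed \emph{starting from $C$} infinitely often. It then observes that any configuration $K$ appearing infinitely often in $\tau$ (hence all-phase-$0$) is reached infinitely often in $\pi$ (shown by exhibiting an auxiliary reordering that follows $\pi$ faithfully up to a point and then steers to $K$), and finally applies the strengthened property with $C:=K$ and $\sigma'$ essentially your driver word. To repair your proof you need this anchoring step, or some substitute guaranteeing that $\tilde w$ is launched from all-phase-$0$ configurations infinitely often; the driver word itself matches the paper's construction.
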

\begin{proof}
There are two cases, depending on whether $P$ uses adversarial or pseudo-stochastic scheduling.

We start with the former. Here, $\pi$ either contains infinitely many transitions where an agent moves to the next phase, in which case $\tau$ executes infinitely many broadcasts and is fair by definition, or there is some $i$ with $K_i+j=C_{g(i)+j}$ for all $j\in\N$. The transitions in $\pi_f$ after step $i$ do not move to new phases, so they are not affected by the reordering $f$. In particular, as $\pi$ is fair, so is $\tau$.

Now we consider the case of pseudo-stochastic scheduling. It is well-known that a pseudo-stochastic schedule (i.e.\ every finite sequence of selections appears infinitely often) implies that every configuration $C$ which can be reach infinitely often will be reached infinitely often. (This follows from there being only finitely many distinct configurations in a run.) That argument can be strengthened to show that, for any finite sequence $\sigma$ of selections, $\sigma$ will be executed starting from $C$ infinitely often.

Further note that any configuration $C$ which appears infinitely often in $\tau$ and thus in $\pi_f$ can be reached infinitely often in $\pi$. This can be seen by choosing a specific reordering which executes the transitions of $\pi$ faithfully up to a point, and then only executes only the transitions necessary to reach $C$ in $\pi_f$. It is easy to see that this a valid reordering (using that $\pi_f$ is reordering).

Now let $\sigma$ denote any finite sequence of selections of $P$, the automaton with weak broadcasts. We want to show that $\sigma$ is infinitely often in $\tau$. So we pick any configuration $K$ appearing infinitely often in $\tau$, and therefore can be reached infinitely often $\pi$. It would now suffice to show that, in $\tau$, $K$ is followed infinitely often by the selections of $\sigma$. As we know that for any sequence $\sigma'$ of selections of $P'$ we have that $K$ appears infinitely often in $\pi$ followed by the selections of $\sigma'$, it now suffices to show that we can pick an appropriate $\sigma'$ that would lead to $\sigma$ being executed in $\tau$.

To execute a selection $(n,v)$ at configuration $K:V\rightarrow Q$, i.e.\ a neighbourhood transition of node $v\in V$, we either have $K(v)\notin Q_B$ and select $v$, thus executing transition (1), or we do nothing. (To be precise, in the latter case we would have to append a copy of $K$ to $\tau$, and modify $g$ s.t.\ $\tau$ is still an extension of $\pi_f$.) For a selection $(b,S)$ with $S\subseteq V$ an independent set of broadcast initiating nodes, we either have $S'=\emptyset$ with $S':=S\cap K^{-1}(Q_B)$ and again do nothing, or we select all agents in $S'$ to move them to phase $1$ via transition (2), then move all other nodes to phase $1$ via transition (3), and then use transitions (4) and (5) to move all nodes to phase $2$ and then back to phase $0$.
\end{proof}

\subsection{Simulating Weak Absence Detection}\label{appendix:weak-absence}
\simulationWeakAbsence*

\newcommand{\Child}{\operatorname{child}}
\newcommand{\Old}{\operatorname{old}}
\newcommand{\Union}{\operatorname{union}}
\newcommand{\Root}{\mathrm{root}}

The proof will take up the remainder of this section.

As mentioned in the main paper, we combine a three-phase protocol with a distance-labelling, the latter allowing us to propagate information about the states that have been seen back to the agents initiating the absence detection. Before we define the necessary transitions, we briefly characterise the distance labelling we are going to use.

\begin{definition}
Let $k\in\N$. We use $D$ to denote a set of \emph{(distance) labels}, where $D:=\Z_{2k+1}\cup\{\Root\}$. We define increment on $D$ by using the usual arithmetic (modulo $2k+1$) for elements in $\Z_{2k+1}$, and setting $\Root+1:=1\in\Z_{2k+1}$. We refer to $\Root$ as \emph{root label}. For all $d\in D$ we say that $d+1$ is the \emph{child label} of $d$.
\end{definition}

Nodes initiating the absence detection use the root label. Each other node will pick a child label $d$ of one of its neighbours, taking care that no neighbour holds a child label of $d$. At this point we will use the bound on the maximum degree of the graph, which makes it easy to see that this is always possible.

\begin{lemma}\label{lem:ad-youhaveachild}
Let $S\subset D$ with $0<|S|\le k$. Then there is a label $d\in D$ s.t.\ $d+1\notin S$ and there is some label $d'\in S$ with $d'+1=d$.
\end{lemma}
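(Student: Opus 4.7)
The plan is to argue by contradiction. Suppose no such $d$ exists. Then for every $d'\in S$, the candidate $d:=d'+1$ must fail the first condition, i.e.\ $d+1=d'+2\in S$. So $S$ must be closed under the operation ``add~2'' on $D$ (with the convention $\Root+1=1$, hence $\Root+2=2$).

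Next, I would analyse the orbits of $+2$ on $D$. The image of $+1$ lies entirely in $\Z_{2k+1}$ since $\Root$ has no $+1$-preimage, so iterating $+2$ starting from any element lands in $\Z_{2k+1}$ after at most one step. On $\Z_{2k+1}$ the map $+2$ is a bijection, and since $\gcd(2,2k+1)=1$ it is a cyclic permutation of order $2k+1$; hence it has a single orbit, namely $\Z_{2k+1}$ itself.

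Now pick any $s\in S$ (which exists because $|S|>0$). Closure of $S$ under $+2$ forces the entire forward orbit of $s$ to lie in $S$. By the previous paragraph this orbit contains all of $\Z_{2k+1}$, so $|S|\ge 2k+1>k$, contradicting the hypothesis $|S|\le k$. This contradiction proves the lemma.

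No step is really an obstacle; the only mild subtlety is the non-standard arithmetic at $\Root$, which is handled by the observation that one application of $+2$ already pushes us into $\Z_{2k+1}$, after which the coprimality argument takes over.
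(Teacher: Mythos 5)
Your proof is correct and is essentially the contrapositive of the paper's argument: the paper directly applies the pigeonhole principle to the $+2$-orbit (which has $2k+1 > |S|$ distinct elements because $2k+1$ is odd) to find $d'\in S$ with $d'+2\notin S$, whereas you assume no such pair exists, deduce that $S$ is closed under $+2$ and hence contains the whole orbit of size $2k+1$, and contradict $|S|\le k$. The key fact — that $+2$ is a cyclic permutation of $\Z_{2k+1}$ and that one step pushes the root label into $\Z_{2k+1}$ — is the same in both, so no substantive difference.
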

\begin{proof}
Note that the statement can be simplified to there being a $d\in S$ with $d+2\notin S$. Pick any $d\in S$. As $2k+1$ is odd, the sequence $d,d+2,d+4,...,d+2k$ is pairwise distinct. Additionally, it contains $2k+1>|S|$ elements, and thus at least one element not in $S$. Moreover, we can thus find two subsequent elements $d',d'+2$ with $d'\in S$, $d'+2\notin S$ in that sequence.
\end{proof}

We will now formally define our construction. Let $P=(Q,\delta,Q_A,A)$ denote the automaton we want to simulate, $k$ the maximum degree of our graph. We will construct a \DAf-automation $P'=(Q',\delta')$ simulating $P$. As states we use $Q':=Q_0\cup Q_1\cup Q_2$, where $Q_i$ contains the phase $i$ states. In particular, we set $Q_0:=Q$, $Q_1:=Q^2\times D$ and $Q_2:=Q\times2^Q$. For $(q,r,i)\in Q_1$, an agent $v$ carries its phase $0$ state $r$ and a distance label $i\in D$. In phase $2$ an agent stores the set of states that it has seen so far, which will be propagated to its parents.

To define the transitions $\delta'$, we introduce some notation. For any neighbourhood $N:Q'\rightarrow\N$ we write $N(S):=\sum_{q\in S}N(q)$ for $S\subseteq Q'$. We set $\Old(N):=N'$, where $N'(q):=N(q)+N(Q\times\{q\}\times\{0,...,2k\})$ is the number of agents that were in $q\in Q$ in phase $0$. We will use $\Old(N)$ to determine which neighbourhood transition of $P$ to execute.

We also define a unique child label $\Child(N)$ for each $N$ with $N(Q_1)>0$. For this, let $S:=\{d\in D:N(Q^2\times\{d\})>0\}$ denote the set of distance labels appearing in $N$. As we have at most $k$ neighbours, Lemma~\ref{lem:ad-youhaveachild} yields a suitable choice $d=:\Child(N)$. Intuitively, this means that $d$ is the child label of a neighbour, but no neighbour is a child of $d$, which ensures that we never create cycles. 

Finally, we write $\Union(N):=\bigcup\{S':(q',S')\in Q_2,N((q',S'))>0\}$ for the union of all states indicated by phase $2$ neighbours. Our transitions $\delta'$ are now defined as follows, for all $q,N$.
\setcounter{equation}{0}
\begin{align}
q,N&\mapsto(q',q,\Root)&&\text{if $N(Q_2)=0$ and $q':=\delta(q,\Old(N))\in Q_A$}\\
q,N&\mapsto(q',q,\Child(N))&&
\begin{aligned}
\text{if }N(Q_2)=0 \text{ and }q':=\delta(q,\Old(N))\notin Q_A \\
\text{and } N(Q_1)>0
\end{aligned}\\
(q,r,i),N&\mapsto(q,\Union(N)\cup\{q\})&&\text{if $N(Q_0)=0$ and $N(Q^2\times\{i+1\})=0$}\\
(q,S),N&\mapsto A(q,S)&&\text{if $N(Q_1)=0$ and $q\in Q_A$}\\
(q,S),N&\mapsto q&&\text{if $N(Q_1)=0$ and $q\notin Q_A$}
\end{align}
Transitions (1) and (2) move the agents from phase $0$ to phase $1$, executing a neighbourhood transition of $\delta$ in the process (synchronously). The move is initiated by agents in $Q_A$, which pick $\Root$ as distance label, while the others wait for a neighbour to enter phase $1$, at which point they become a child of that neighbour. In phase $1$, each node waits until all children have entered phase $2$ (and thus indicate the set of states they have observed), and then executes (3) to move to phase $2$, indicating the union of all sets of its children. Finally, the absence-detection initiating nodes move to phase $0$ by executing the absence-detection via (4), moving into the appropriate state, while (5) simply moves the other agents to phase $0$ without changing their states.

It is crucial that the distance-labels assigned by transition~(2) never form a cycle; else we would get a deadlock. Our choice of $\Child$ ensures that this is the case.
\begin{lemma}\label{lem:ad-nocycle}
The automaton $P'$ cannot reach a configuration $C$ with a cycle of nodes $(v_0,...,v_l)$, i.e.\ $v_0=v_l$ and $v_i$ is adjacent to $v_{i+1}$ for all $i$, where each $v_i$ has label $(i\bmod2k+1)\in\Z_{2k+1}$.
\end{lemma}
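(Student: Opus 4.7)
The plan is a short contradiction argument built around the \emph{latest entrant} to phase 1 in the cycle. Suppose $P'$ reaches a configuration $C$ containing a cycle $(v_0,\dots,v_l)$ with labels as in the statement. I would first record two properties of how distance labels behave in $P'$. (i) Because each $v_i$'s current label lies in $\Z_{2k+1}$ rather than being $\Root$, $v_i$'s most recent entry into phase 1 happened via transition (2), not (1), since (1) only produces the root label. (ii) An agent's distance label cannot change while it is in phase 1: inspection of $\delta'$ shows that transition (3) is the only one that acts on phase-1 states, and it immediately moves the agent out of phase 1. Combined with exclusive scheduling (one node per step), (i) and (ii) give pairwise-distinct entry times $t_0,\dots,t_{l-1}$ such that $v_i$ has been in phase 1 with label $(i \bmod 2k+1)$ continuously from $t_i$ up to $C$.

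I would then pick an index $j$ maximising $t_j$ and derive the contradiction in one step. Since $t_{j-1}, t_{j+1} < t_j$ strictly, both $v_{j-1}$ and $v_{j+1}$ are already in phase 1 at step $t_j$ with their respective labels $(j-1)\bmod(2k+1)$ and $(j+1)\bmod(2k+1)$, and in particular they contribute to $v_j$'s neighbourhood $N$ at step $t_j$. Transition (2) at that step sets $v_j$'s label to $\Child(N)$, which by hypothesis equals $j \bmod (2k+1)$. By the defining property of $\Child$, $\Child(N)+1 \notin S$, where $S$ is the set of phase-1 labels appearing in $N$; this forbids $(j+1)\bmod(2k+1)$ from $S$, directly contradicted by $v_{j+1}$'s presence.

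I foresee no substantial obstacle. No induction on $l$ is needed, and the only delicate point is carefully phrasing (ii), which is routine given the five transitions of $\delta'$. A minor sanity check is that the strict inequalities $t_{j\pm 1} < t_j$ really do hold: equality with $t_j$ would force $v_{j\pm 1} = v_j$ but with different labels (since consecutive indices differ mod $2k+1$ for $k\ge 1$), contradicting the labelling hypothesis.
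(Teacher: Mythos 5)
Your proof is correct and takes essentially the same approach as the paper, which disposes of the lemma in one line by observing that labels in $\Z_{2k+1}$ arise only from transition (2) and that the defining property of $\Child$ (no neighbour already holds the child label of the chosen label) prevents (2) from ever closing a cycle. Your ``latest entrant'' argument is simply a careful formalisation of that one-liner; the only cosmetic quibble is that the entry times need not be \emph{pairwise} distinct if the cycle revisits a node, but your argument only uses $t_{j\pm1}\neq t_j$ for adjacent (hence distinct) nodes, which you justify correctly.
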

\begin{proof}
It is only possible for a node to receive a label $d\in\Z_{2k+1}$ via transition~(2). (Note that (1) only assigns label $\Root$, which cannot be part of a cycle.) However, transition~(2) will never close such a cycle, due to the definition of $\Child$.
\end{proof}

We will proceed in a similar manner as in the previous section. We want to use Proposition~\ref{prop:phases-special-reordering} to construct our reordering, will show that $P'$ is a nonblocking three-phase automaton. Afterwards, we argue that the given reordering is an extension of a run of $P$.

\begin{lemma} \label{lem:ad-nonblocking}
The automaton $P'$ is a nonblocking three-phase automaton.
\end{lemma}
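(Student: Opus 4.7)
The plan is to handle the three-phase structure and the nonblocking property separately, with the latter being significantly more delicate. For the three-phase structure, I would set $Q_0 := Q$, $Q_1 := Q^2 \times D$, $Q_2 := Q \times 2^Q$ as in the construction and verify the three conditions of Definition~\ref{def:threephase} by inspecting the guards and targets of transitions~(1)--(5): the guards $N(Q_2)=0$, $N(Q_0)=0$, $N(Q_1)=0$ on the phase-0, phase-1, phase-2 source transitions respectively give condition~(1); the targets lie in $Q_1$, $Q_2$, $Q_0$ respectively, which gives condition~(2); and every enabled non-silent transition moves strictly to the next phase, so condition~(3) holds vacuously.

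For the nonblocking property, let $C$ be a reachable configuration of $P'$ with agents in at least two phases. Following the broadcast proof, I would introduce a global phase count $\Pc$ (which, thanks to condition~(2) of Def.~\ref{def:threephase}, differs by at most one across adjacent nodes), set $M := \min_v \Pc(v)$ and $U := \{u : \Pc(u) = M\}$, so that $U$ is a proper non-empty subset and there is an edge between $U$ and $V \setminus U$. I plan to case-split on $j := M \bmod 3$ and exhibit a node in $U$ whose next non-silent selection is a phase change. For $j = 0$, any $u \in U$ adjacent to some $v \notin U$ has $N(Q_2) = 0$ (all neighbours have $\Pc \in \{M, M+1\}$) and $N(Q_1) > 0$, so exactly one of (1) or (2) fires depending on whether $\delta(q, \Old(N)) \in Q_A$. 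For $j = 2$, the same choice of $u$ has $N(Q_1) = 0$, so (4) or (5) fires.

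The genuinely hard case is $j = 1$, because transition~(3) is gated by the distance label via $N(Q^2 \times \{i+1\}) = 0$, so an arbitrary $u \in U$ might have a phase-1 neighbour blocking the move. My plan is to introduce the directed child relation on phase-1 nodes (edge $u \to v$ iff $u, v$ are adjacent and $v$'s label is $u$'s label plus one in $D$), use Lemma~\ref{lem:ad-nocycle} to conclude this relation is acyclic, and pick a sink $u^\ast$ in its finite restriction to $U$. At $u^\ast$ no phase-1 neighbour carries the forbidden child label, and since $M = 1$ forces $N(Q_0) = 0$ globally, transition~(3) is enabled there. A short fairness argument then closes the proof: until $u^\ast$ is next selected, any other transition that fires must either be silent (so $u^\ast$ remains enabled) or be itself a phase change, and fairness guarantees $u^\ast$ is eventually selected in the former case.
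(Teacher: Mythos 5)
Your proposal is correct and follows essentially the same route as the paper: the three‑phase conditions are checked by inspecting the transition guards, and nonblocking is established via the phase count, the set $U$ of minimal‑count nodes, and a case split on the phase, with the phase‑1 case resolved through the acyclicity of the distance labelling (Lemma~\ref{lem:ad-nocycle}). Your ``pick a sink of the child DAG restricted to $U$'' is just a repackaging of the paper's ``follow child labels until transition (3) is enabled, terminating by acyclicity''; the only phrasing to tighten is that $N(Q_0)=0$ holds at $u^\ast$ because its neighbours have phase count $M$ or $M+1$ (it need not hold globally), which is all that transition (3) requires.
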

\begin{proof}
Again, it is easy to check that $P'$ is a three-phase automaton by inspecting transitions (2)-(5). To show that $P$ is nonblocking the proof is similar to the proof of Lemma~\ref{lem:wb-isthreephase}.

Let $\pi=(C_0,C_1,...)$ denote a fair run of $P'$ and $C_i:V\rightarrow Q'$ a configuration where two agents are in different phases. We define the phase count $\Pc$ as for the proof of Proposition~\ref{prop:phases-special-reordering}, meaning that $\Pc(v,j)$ is the number of phase changes of node $v$ until step $j$.

Let $U:=\{u\in V:\Pc(u,i)=\min_v\Pc(v,i)\}$ denote the set of nodes which have a minimal number of phase changes at step $i$. If all nodes in $U$ are in phase $0$ or phase $2$, then selecting any node in $U$ will move it to the next phase via transitions (1), (2) or (4), (5), respectively. Otherwise, we pick any node $u\in U$ and write $d$ for the distance label of $u$. As $u$ is in phase $1$, the only non-silent transition it could perform is (3). If (3) is enabled, then executing it moves $u$ to the next phase, and we are done. If that is not the case, there must be a node $u$ adjacent to $v$, s.t.\ $v$ is also in phase $1$ and has label $d+1$. We now set $u:=v$ and repeat this process. There are only finitely many nodes and the distance labels form no cycles (due to Lemma~\ref{lem:ad-nocycle}), so this must terminate.
\end{proof}

As for weak broadcasts, let $\pi$ denote a fair run of $P'$. Using Proposition~\ref{prop:phases-special-reordering} we find a specific reordering $\pi_f=(C_0,C_1,...)$ of $\pi$. In particular, every configuration $C_i$ either has all agents in the same phase, or it has agents in at most two phases and at step $i$ one agent moves to the its next phase. We are now going to show that $\pi_f$ is an extension of a run $\tau$ of $P$.

Again, note that in $\pi_f$ there are infinitely many configurations $C_i$ where all agents are in the same phase. However, to construct $g$ and the extension $\tau$ we will now have to argue slightly differently. If all agents are in phase $1$ then there must be at least one agent where transition (3) is enabled. This follows from the same argument as used in the proof of Lemma~\ref{lem:ad-nonblocking}. If all agents are in phase $2$, then either transition (4) or (5) will be enabled for each agent.

The set $I:=\{i\in\N:C_i(V)\subseteq Q\}$ of indices $i$ where $C_i$ has only phase $0$ agents thus has infinitely many elements, as before. We then define $I':=I\setminus\{i:C_i=C_{i-1},\;\exists j>i: C_j\neq C_i\}$ by removing all steps which perform a silent transition and are followed by a non-silent transition from $I$.

We define the mapping $g:\N\rightarrow\N$ as the unique bijection with $g(\N)=I'$ which is strictly increasing, and set $\tau:=(K_0,K_1,...)$ where $K_i:=C_{g(i)}$ for all $i$.

\begin{lemma}\label{lem:ad-tauisextension}
$\pi_f$ is an extension of $\tau$.
\end{lemma}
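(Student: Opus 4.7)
The plan is to adapt the proof of the analogous Lemma~\ref{lem:bc-tauisextension} for weak broadcasts, handling the extra subtlety that $I'$ removes silent steps from $I$. Since $P'$ is a nonblocking three-phase automaton by Lemma~\ref{lem:ad-nonblocking}, Proposition~\ref{prop:phases-special-reordering} applies to $\pi_f$: every $C_i$ has agents in at most two consecutive phases, and phase transitions only go forward. Because $C_{g(i)}$ and $C_{g(i+1)}$ are both entirely in phase~$0$, whenever $g(i+1)>g(i)+1$ the intermediate configurations must pass in order through the phase-composition windows $\{0,1\}, \{1\}, \{1,2\}, \{2\}, \{2,0\}$ before returning to $\{0\}$, yielding times $g(i)<t_1<t_2<g(i+1)$ at which $C_{t_1}$ is entirely in phase~$1$ and $C_{t_2}$ entirely in phase~$2$.

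I then case-split on the phase content of $C_j$ for $g(i)\le j\le g(i+1)$. If $C_j$ has no phase-$0$ agents (windows $\{1\}, \{1,2\}, \{2\}$), then $C_j\sim_Q C_{g(i)}$ holds vacuously. If $C_j$'s agents lie in $\{0,1\}$, any phase-$0$ node $v$ has not yet moved, since the only transitions out of phase~$0$ are $(1)$ and $(2)$ which land in phase~$1$; so $C_j(v)=C_{g(i)}(v)$ on phase-$0$ nodes and $C_j\sim_Q C_{g(i)}$. If $C_j$'s agents lie in $\{2,0\}$, any phase-$0$ node $v$ has already completed the cycle $0{\to}1{\to}2{\to}0$, and I must show $v$ cannot restart before $g(i+1)$. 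Transition~$(2)$ requires a phase-$1$ neighbour, which the $\{2,0\}$ window forbids; transition~$(1)$ would move $v$ to phase~$1$, but as long as any phase-$2$ agent remains (even a non-neighbour of $v$), the resulting configuration would contain agents in phases $0$, $1$, and $2$, contradicting the two-consecutive-phases property of $\pi_f$. Hence $v$ stays idle until all phase-$2$ agents have moved, so $C_j(v)=C_{g(i+1)}(v)$ and $C_j\sim_Q C_{g(i+1)}$.

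The main obstacle is the remaining case, where $C_j$ consists entirely of phase-$0$ agents with $g(i)<j<g(i+1)$. Since $g$ is strictly increasing with $g(\mathbb{N})=I'$, no such $j$ lies in $I'$, so $j\in I\setminus I'$: the step into $C_j$ was silent ($C_j=C_{j-1}$) and the run is not yet frozen. Iterating $C_j=C_{j-1}=\cdots$ backwards, each predecessor is again an all-phase-$0$ configuration in $I$, and is either another silent step in $I\setminus I'$ or coincides with $g(i)$ (the only element of $I'$ in $[g(i),j]$). The iteration therefore terminates at $g(i)$, giving $C_j=C_{g(i)}$ and hence $C_j\sim_Q C_{g(i)}$. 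Combining all cases yields that $\pi_f$ is an extension of $\tau$.
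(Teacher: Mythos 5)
Your proof is correct and follows essentially the same route as the paper, which simply defers to the argument of Lemma~\ref{lem:bc-tauisextension} (the phase-window analysis) and remarks that removing silent steps does not affect the notion of extension. You merely spell out in full the case analysis and the backward iteration through $I\setminus I'$ that the paper leaves implicit.
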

\begin{proof}
The proof is analogous to Lemma~\ref{lem:bc-tauisextension}, as the removal of silent transitions does not affect the notion of extension.
\end{proof}

Finally, we argue that $\tau$ is a run of $P$. There is no need to argue that $\tau$ is fair, as all runs of $P$ are fair.

\begin{lemma}
$\tau$ is a run of $P$.
\end{lemma}
\begin{proof}
Fix any $i\in\N$. If $g(i+1)=g(i)+1$ then must have executed a silent transition at step $g(i)$ of $\pi_f$, as else it is impossible to remain in phase $0$. However, we have removed silent transitions followed by non-silent transition from $I'$ and therefore $g$, so step $i$ is followed by an infinite sequence of silent transitions $C_{g(i)}=C_{g(i)+1}=...$ . We know that $\pi$ and thus $\pi_f$ are fair (w.r.t.\ an adversarial scheduler), so this means that transition~(1) is not enabled for any node. Therefore $K_i(V)\cap Q_A=\emptyset$ and Definition~\ref{def:weak-absence-detection} states that $P$ hangs in this case, so $K_{i+1}=K_i$ should hold, which is what we have.

Otherwise, there is a $g(i)<t_1,t_2<g(i+1)$ s.t.\ $C_{t_1}$ ($C_{t_2}$) has only agents in phase $1$ (phase $2$). First, every node executes either (1) or (2), effectively moving to configuration $C'$ with $(C'(v),\Any,\Any)=C_{t_2}(v)$ for $v\in V$. In particular, transitions~(1) and (2) use the phase $0$ state of each agent, so this executes a synchronous neighbourhood transition (i.e.\ one with selection $V$).

Let $S$ denote the set of agents executing transition (1) between steps $g(i)$ and $t_1$ in $\pi_f$. A brief look at transition~(1) reveals that $S=(C')^{-1}(Q_A)$ contains precisely the agents which are in absence-detection initiating states after executing the synchronous neighbourhood transition. 

Now we simply note that $K_{i+1}$ is the result of executing a weak absence-detection transition on $C'$ using selection $S$. Here, we observe that every node $v\notin S$ must pick a child label of a neighbour $u$ in transition~(2). Agent $u$ will only execute transition~(3) once $v$ is in phase $2$, so the information of $v$ will be propagated to $u$, then to a parent of $u$, and so on, until it reaches an agent in $S$. The agents in $S$ then perform transition~(4) and move according to the weak absence-detection transition, while all other nodes execute~(5) and remain in their original state.
\end{proof}

\subsection{Simulating Rendez-vous Transitions} \label{appendix:rendez-vous}

\begin{definition}
A \textit{graph population protocol} is a tuple $(Q,\delta)$, where $Q$ is a finite set of \textit{states}, and $\delta:Q^2\rightarrow Q^2$ is a set of \textit{transitions} that describes the rendez-vous interactions between two adjacent nodes. 
In particular, if $\delta(p,q) = (p', q')$, then we write $p, q \trans p', q'$. 
Further, let $\delta_1(p,q) = p'$ and $\delta_2(p,q) = q'$ be functions for the first and second component of $\delta$.
The definitions of configurations and runs are equivalent to the ones of distributed machines. 
Selections are ordered pairs of adjacent nodes, i.e.\ the set of possible selections is $\{(u,v):\{u,v\}\in E\}$.
If $(u,v) \in V^2$ is the selection in some configuration $C$, then the successor configuration is $C'$ with $C'(u) := \delta_1(C(u), C(v))$, $C'(v) := \delta_2(C(u), C(v))$ and $C'(x) := C(x)$ for all $x \in V \setminus\{u,v\}$. 
We require the schedules the be pseudo-stochastic, so every finite sequence of selections has to appear infinitely often in a schedule.
\end{definition}

\simulationRendezVous*

\begin{proof}
Let $P = (Q, \delta)$ be a population protocol on graphs.
We define a \DAF-automaton $M = (Q', \delta')$ that simulates $P$. We set the counting bound $\beta:=2$.
Let $Q_\RVwait = Q$, $Q_\RVsearch = Q \times \left\{\RVsearch\right\}$, $Q_\RVanswer = Q \times \left\{\RVanswer\right\}$ and $Q_\RVconfirm = Q \times \left\{\RVconfirm\right\} \times Q$.
We define $Q' := Q_\RVwait \cup Q_\RVsearch \cup Q_\RVanswer  \cup Q_\RVconfirm$.
Intuitively, each node stores its state in the original protocol in the first component and has a status that helps to simulate rendez-vous transitions. 
The status can be ``waiting'' (\RVwait), ``searching'' (\RVsearch), ``answering'' (\RVanswer) or ``confirming'' (\RVconfirm) and initially every node is waiting. 
Additionally, a confirming node stores the state it would have after the rendez-vous transition is completed. 
This is necessary, because the other node will performs its part of the rendez-vous interaction first.

Now will define the transition function $\delta'$ of $M$, for states $q,q',q''\in Q$ and neighbourhood $N \in [\beta]^{Q'}$. 
Let $N(\RVwait) := \sum_{q \in Q} N(q)$ be the number of detectable waiting neighbours. 
Further, we use the auxiliary function $f(N)$ to denote the unique non-waiting neighbour, if any, i.e.\ we set $f(N):=x$ if $N(\RVwait) = |N|-1$ and $N(x)=1$. If no such neighbour exists, we set $f(N):=\RVwait$ if $N[\RVwait] = |N|$ (all neighbours are waiting), and $f(N):=\bot$ otherwise. 
Figure~\ref{fig:rendez-vous} contains the formal definition of $\delta$ and a diagram that visualises how nodes change their status.
\begin{figure}[h]
    \hspace*{-10mm}
	\begin{minipage}{102mm}
		\begin{align*}
		q,N &\mapsto (q,\RVsearch) &&\text{ for } f(N)=\RVwait \\
		q,N &\mapsto (q,\RVanswer) &&\text{ for } f(N)=(q',\RVsearch) \\
		(q,\RVsearch),N &\mapsto (q,\RVconfirm, \delta_1(q,q')) &&\text{ for } f(N)=(q',\RVanswer) \\
		(q,\RVanswer),N &\mapsto \delta_2(q',q) &&\text{ for } f(N)=(q',\RVconfirm,q'') \\
		(q,\RVconfirm,q'),N &\mapsto q' &&\text{ for } f(N)=\RVwait
		\end{align*}
	\end{minipage}%
	\hfill\vrule\hfill
	\begin{minipage}{50mm}
		\begin{tikzpicture}[thick,auto,->,inner sep=1pt]
\tikzstyle{status}=[draw,circle]
\tikzstyle{statechange}=[dashed]
\tikzstyle{tlabel}=[rectangle]
\def\d{10ex}
\def\dd{8ex}

\node[status] (wait) {\RVwait};
\path (wait.west) +(-\d,0) -- +(0,\d) node[status,anchor=east,pos=0.5] (search) {\RVsearch};
\node[status] (answer) [right=\dd of wait] {\RVanswer};
\node[status] (confirm) [above =0.7*\d of wait] {\RVconfirm};

\draw
(wait) edge node[tlabel,pos=0.3] {all \RVwait} (search)
(wait) edge[bend right] node[tlabel,swap,below=3pt,pos=0.7] {1$\times$\RVsearch, rest \RVwait} (answer)
(search) edge node[tlabel,pos=0.8] {1$\times$\RVanswer, rest \RVwait} (confirm)
(answer) edge[statechange, bend right] node[tlabel,swap,above=2pt,pos=0.3] {1$\times$\RVconfirm, rest \RVwait} (wait)
(confirm) edge[statechange] node[tlabel,swap] {all \RVwait} (wait)
;
\end{tikzpicture}\vspace{-1cm}
	\end{minipage}

	\caption{
		Neighbourhood transitions that simulate rendez-vous interactions. 
		The left side show the formal definition of the transition function $\delta'$. 
		For all inputs $x,N$ where $\delta'(x,N)$ is undefined, the status is set to waiting (\RVwait) by changing to the original state saved in the first component of $x$. 
		The right side visualises the neighbourhood transitions as a graph. 
		States in the diagram only show the status of a node. 
		A node only change its status by following an edges in the diagram, if its neighbourhood satisfies the condition on the edge. 
		If a node is selected and no edge can be followed, it instead changes its status to waiting (\RVwait). 
		The edges that apply the rendez-vous transition $\delta$ are drawn dashed.
	}
	\label{fig:rendez-vous}
\end{figure}

Intuitively, the simulation of a rendez-vous transition $p,q \trans p',q'$ starts with a waiting (\RVwait) agent with original state $p$ that only sees waiting nodes. 
This agent searches for a  partner by changing its status to \RVsearch. 
Then, its waiting neighbours can answer by changing to \RVanswer{} if they detect exactly one search. 
If the searching agent detects exactly one answer, it confirms by changing to \RVconfirm{} while remembering the state it would have after the rendez-vous with the answering node.
If the answering node with original state $q$ sees exactly one confirmation, it applies the state change ($q$ to $q'$) and waits.
Then, the confirming node detects that the answering node is now waiting and applies the state change it remembered ($p$ to $p'$). 
However, once a node detects an irregularity in the simulation (e.g. more than one non-waiting neighbour) it cancels the interaction by changing its state to \RVwait.

We still have to show that $M$ simulates $P$.
We call a change to the original state of a node a \textit{state change}. 
In other words, neighbourhood transitions that change the first component of a nodes state perform a state change.
We will now argue, that state changes only occur in pairs and that they simulate the rendez-vous transitions in $\delta$.
First note, that from a configuration $C$ where two nodes $u,v$ and their neighbours are waiting, scheduling the sequence $u,v,u,v,u$ correctly applies the state changes $\delta_1(C(u),C(v))$ for $u$ and $\delta_2(C(u),C(v))$ for $v$.
For a node $u$ to enter the confirming state, it must have exactly one answering neighbour $v$ and all other neighbours must be waiting.
$u$ cannot perform its state change before $v$ because it needs to wait until all nodes are waiting.
Once the answering agent $v$ performs its state change, all of $u$'s neighbours are waiting and cannot change their status because they see that $u$ is confirming.
Thus, the next time $u$ is scheduled, it must perform its state change. 
Further, $v$ can only perform the state change if it sees exactly one confirming state and all other of $v$'s neighbours are waiting.
Thus, once one of nodes in the rendez-vous interaction performs the state change, the full rendez-vous interaction will be performed.
Further, it is impossible for more than two nodes to interact simultaneously, because selection is exclusive and whenever a node detects more than one non-waiting neighbour, it cancels the interaction and waits.

Next, we need to reorder a given run $\pi'$ of $M$ such that it is an extension of some run $\pi$ in $P$. 
For this, we make sure that after an answering node performs the state change, the corresponding confirming node is scheduled immediately so that it can perform its state change. 
Intuitively, the reordering makes sure that the state changes in the simulation of two different rendez-vous transitions are not executed in an interleaving manner. 
Thus, the reordered run is indeed an extension of a run in $P$ where the state changes of rendez-vous interactions happen atomically.
The reordering is valid, because after the answering node performs the state change, the nodes in the neighbourhood of the confirming node are all waiting and they cannot change their status because they see a confirming node. 
Thus, scheduling the confirming agent earlier in the reordered run does not interfere with the neighbourhood transitions that were executed between the two state changes in $\pi'$.

Lastly, we need to argue about fairness. 
Let $\pi$ be the simulated run of $P$ for some fair run $\pi'$ of $M$. 
Let $C$ be some configuration that is visited infinitely often in $\pi$. 
Further, let $S = (u_1,v_1),\cdots,(u_k,v_k) \in (V \times V)^*$ be a finite sequence of selections such that scheduling $S$ in $C$ leads to come configuration $C_f$.
As $C$ is visited infinitely often, there are infinitely many configurations $C'_1,C'_2,\cdots \in \pi'$ with $C \sim_Q C'_i$ for all $i > 0$. 
Because there are only finitely many different configurations for a given graph, there is at least one configuration $C'$ that is visited infinitely often in $\pi'$ such that $C \sim_Q C'$. 
$C'$ can reach a configuration $C'' \sim_Q C'$ where all nodes are waiting by scheduling all confirming nodes, then all answering nodes and lastly all searching nodes. 
$C''$ can reach a configuration $C'_f \sim_q C_f$ by simulating all selections $(u_i,v_i)$ of $S$ one after the other by scheduling $u_i,v_i,u_i,v_i,u_i$.
Because $\pi'$ is fair, $C'_f$ is visited infinitely often in $\pi'$. 
Therefore, $C_f$ is visited infinitely often in $\pi$ and $\pi$ is fair.
\end{proof}

\section{Proofs of Section~\ref{sec:unrestricted}}
As mentioned in the introduction, we are interested in labelling properties. 

\begin{definition}
For every labelled graph \(G=(V,E,\lambda)\) over the finite set of Labels \(\mathcal{L}\) we write \(L_G\) for the multiset of labels occurring in \(G\), i.e.\ \(L_G:\mathcal{L}\rightarrow \N, L_G(x)=|\{v\in V: \lambda(v)=x\}|\) for all labels \(x\). We call \(L_G\) the label count of \(G\).

A graph property \(\varphi\) is called a labelling property if for all labelled graphs \(G,G'\) with \(L_G=L_{G'}\) we have \(\varphi(G)=\varphi(G')\). In such a case we also write \(\varphi(L_G)\) instead of \(\varphi(G)\).
\end{definition}

In this section, we will use \(L\) for multisets of labels.
\subsection{\DaF\ only decides trivial properties}\label{Appendix:Weak-Acceptance}

\begin{proposition}
Let \( \varphi\) be a labelling property decided by a \DasF-automaton in the unrestricted set of graphs or in the set of $k$-degree-bounded graphs. Then \( \varphi\) is trivial, i.e.\ either always false or always true. \label{prop:weak-acceptance-trivial}
\end{proposition}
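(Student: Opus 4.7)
The plan is to reduce the statement directly to Lemma~\ref{lem:trivial-on-cycles}, which already establishes that every \DasF-automaton $A$ satisfies $\varphi_A(G) = \varphi_A(H)$ for all graphs $G, H$ that contain a cycle. Since $\varphi$ is a labelling property, $\varphi_A(G)$ depends only on $L_G$, so it suffices to exhibit, for every admissible label count, a graph that realises it and contains a cycle.

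First, I will invoke the paper's standing convention that every input graph has at least three nodes and is connected: this forces every admissible label count $L$ to satisfy $\sum_{\ell \in \Lambda} L(\ell) \geq 3$. For any such $L$, construct the cycle graph $C_L$ on $n := \sum_{\ell} L(\ell)$ nodes arranged in a single cycle, and decorate its nodes with labels so that exactly $L(\ell)$ of them carry label $\ell$, in any chosen order. By construction $C_L$ contains a cycle, and $L_{C_L} = L$.

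Given two admissible label counts $L_1, L_2$, I will apply Lemma~\ref{lem:trivial-on-cycles} to $C_{L_1}$ and $C_{L_2}$, obtaining $\varphi_A(C_{L_1}) = \varphi_A(C_{L_2})$. Using the labelling-property hypothesis this becomes $\varphi(L_1) = \varphi(C_{L_1}) = \varphi(C_{L_2}) = \varphi(L_2)$. Since $L_1, L_2$ were arbitrary, $\varphi$ is constant on its whole domain, hence trivial.

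For the $k$-degree-bounded case the very same construction works: each $C_L$ has maximum degree $2$ and is therefore $k$-degree-bounded for every $k \geq 2$, while for $k \leq 1$ there are no connected graphs on three or more nodes and the statement holds vacuously. The only genuine work has already been performed in Lemma~\ref{lem:trivial-on-cycles}, so no substantive obstacle arises; the sole thing to verify is that the cycle witness $C_L$ is an admissible input, which follows from the paper's conventions.
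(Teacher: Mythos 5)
Your proof is correct and follows essentially the same route as the paper's: realise each admissible label count on a labelled cycle and invoke Lemma~\ref{lem:trivial-on-cycles} to conclude that all such cycles receive the same answer. The only detail worth making explicit in the $k$-degree-bounded case is that Lemma~\ref{lem:trivial-on-cycles} must be re-examined there, since its proof builds an auxiliary graph $GH$ on which the automaton must also be consistent; the paper notes (and it is easy to check) that this construction preserves degree bounds, so your appeal to the lemma goes through.
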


\begin{proof}
Assume that \( \varphi\) is not always false, i.e.\ \( \varphi(L)=1\) for some \(L\). We have to prove that \( \varphi \) is always true, i.e.\ \( \varphi(L')=1\) for all labelling multisets \(L'\). Let \( L'\) be any labelling multiset. By our general assumption, network graphs have at least 3 nodes, i.e.\ \( |L|, |L'|\geq 3\). Since \(\varphi\) is a labelling property, we can choose the underlying graph. Let \(G\) be the cycle with \( |L| \) nodes labelled with \(L\), and let \(G'\) be the cycle with \(|L'|\) nodes labelled with \(L'\). By Lemma~\ref{lem:trivial-on-cycles}, we cannot distinguish \(G\) and \(G'\) and therefore have \( \varphi(L')=\varphi(L)=1\) as claimed. This also holds in the $k$-degree-bounded case since the graph constructed in the proof of Lemma~\ref{lem:trivial-on-cycles} is $k$-degree-bounded, if both \(G\) and \(G'\) were.
\end{proof}

\subsection{\DAf\ decides at most $\clsCutoffOne$} \label{Appendix:Weak-fairness}

\begin{proposition}
Let \( \varphi\) be a labelling property decided by a \DAsf-automaton. Then \(\varphi \in \clsCutoffOne\).
\end{proposition}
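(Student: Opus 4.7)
The plan is to combine two previously established facts about \DAsf-automata: Lemma~\ref{lem:finite-cutoff}, which gives a cutoff at $\beta+1$ (where $\beta$ is the counting bound of the deciding automaton), and Corollary~\ref{cor:closed-under-scalar-multiplication}, which asserts invariance under scalar multiplication of the label count. The idea is that scalar multiplication lets us ``blow up'' any label count until the cutoff at $\beta+1$ becomes uninformative beyond its support, reducing the cutoff-at-$\beta+1$ condition to a cutoff-at-$1$ condition.

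Concretely, fix an automaton $A$ deciding $\varphi$, let $\beta$ be its counting bound, and let $L \in \N^\Lambda$ be an arbitrary labelling multiset. Set $L' \DefEq \Cutoff{L}{1}$, so that $L'(\ell) = 1$ if $\ell$ appears in $L$ and $L'(\ell) = 0$ otherwise. The goal is to prove $\varphi(L) = \varphi(L')$. Pick any scalar $\lambda \geq \beta+1$. By Corollary~\ref{cor:closed-under-scalar-multiplication},
\[
  \varphi(L) = \varphi(\lambda L) \qquad \text{and} \qquad \varphi(L') = \varphi(\lambda L').
\]
For each label $\ell \in \Lambda$, the value $(\lambda L)(\ell)$ equals $0$ if $\ell \notin \operatorname{supp}(L)$ and is at least $\lambda \geq \beta+1$ otherwise; the same dichotomy holds for $(\lambda L')(\ell)$. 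Hence
\[
  \Cutoff{\lambda L}{\beta+1} = \Cutoff{\lambda L'}{\beta+1},
\]
and Lemma~\ref{lem:finite-cutoff} yields $\varphi(\lambda L) = \varphi(\lambda L')$. Chaining the three equalities gives $\varphi(L) = \varphi(L')$, so $\varphi \in \clsCutoffOne$.

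There is essentially no obstacle once the two prior results are in hand; the only small subtlety is that we must ensure a single scalar $\lambda$ pushes \emph{both} $L$ and its support indicator $L'$ past the counting threshold simultaneously, which is immediate since we only need $\lambda \geq \beta+1$ and the supports of $L$ and $L'$ coincide by construction.
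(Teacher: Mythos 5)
Your proof is correct and takes essentially the same route as the paper's: both combine Lemma~\ref{lem:finite-cutoff} with Corollary~\ref{cor:closed-under-scalar-multiplication} by scaling the label count past the counting bound, so that the cutoff at $\beta+1$ retains only support information. The paper phrases the middle step via the identity $\Cutoff{\lambda L}{\lambda}=\lambda\Cutoff{L}{1}=\Cutoff{\lambda\Cutoff{L}{1}}{\lambda}$ with $\lambda=\beta+1$, whereas you compare $\Cutoff{\lambda L}{\beta+1}$ and $\Cutoff{\lambda\Cutoff{L}{1}}{\beta+1}$ directly; this is only a cosmetic difference.
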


\begin{proof}
Let \(A\) be a \DAf-automaton with \(\varphi_A=\varphi\). Let \(K=\beta+1\) be as in Lemma~\ref{lem:finite-cutoff}, i.e.\ the natural number such that \( \varphi(L)=\varphi(\Cutoff{L}{K})\) for all labelling multisets \(L\). In addition, we know that \(\varphi\) is closed under scalar multiplication by Corollary~\ref{cor:closed-under-scalar-multiplication}. We use this corollary with \(\lambda=K\) to scale up \(L\) with the factor \(K\), then cut if off at \(K\) and scale down again. 

Formally, we start by proving \( \Cutoff{\lambda \cdot L}{\lambda}=\lambda \cdot \Cutoff{L}{1}=\Cutoff{\lambda \cdot \Cutoff{L}{1}}{\lambda}\) for all \( \lambda \in \N\) by case distinction, namely if some label \(x\) occurs, then all those three functions equal \(\lambda\), and otherwise they all equal 0.

We use this to obtain the following chain of equalities:
\newcommand{\StackRefEq}[1]{\stackrel{\text{#1}}{=}}
\begin{align*}
\varphi(L)&\StackRefEq{C\ref{cor:closed-under-scalar-multiplication}}\varphi(K \cdot L)\StackRefEq{L\ref{lem:finite-cutoff}}\varphi(\Cutoff{K \cdot L}{K})=\varphi(K \cdot \Cutoff{L}{1})=\varphi(\Cutoff{K \cdot \Cutoff{L}{1}}{K})\\&\StackRefEq{L\ref{lem:finite-cutoff}}\varphi(K \cdot \Cutoff{L}{1})\StackRefEq{C\ref{cor:closed-under-scalar-multiplication}}\varphi(\Cutoff{L}{1})
\end{align*}
\end{proof}

\subsection{\dAf\ can decide $\clsCutoffOne$} \label{Appendix:1-Cutoff}

\begin{proposition}
\dAsf-automata can decide all labelling properties \( \varphi \in \clsCutoffOne\).
\end{proposition}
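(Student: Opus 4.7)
The plan is to construct, for a given $\varphi \in \clsCutoffOne$, a \dAsf-automaton $A$ whose agents flood the support $\mathrm{supp}(L_G) \subseteq \Alphabet$ of the label count through the graph and then output $\varphi$ applied to that support. Stabilizing-consensus acceptance together with adversarial fairness is enough for this, since the information to be spread is monotone, and connectivity together with each node being selected infinitely often suffices for convergence.

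Concretely, I would take $Q := \Alphabet \times 2^\Alphabet$ with initialisation $\delta_0(\ell) := (\ell, \{\ell\})$: each agent permanently remembers its own label in the first component and grows a set of ``seen'' labels in the second. When activated, an agent merges into its second component the second components of every state currently occupied by at least one neighbour, leaving the first component untouched. Since this transition only inspects, for each state, whether at least one neighbour occupies it, counting bound~$1$ suffices and $A$ is non-counting. Declare $(\ell, S)$ accepting if $\varphi(\chi_S) = 1$ and rejecting otherwise, where $\chi_S \in \N^\Alphabet$ is the $\{0,1\}$-valued indicator of $S$.

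The correctness argument has two parts. First, since $\delta$ is monotonically non-decreasing in the $S$-component, every node's $S$ stabilises to some $S_v^\infty \subseteq \Alphabet$; a trivial induction on time shows $S_v^\infty \subseteq \mathrm{supp}(L_G)$, because only $\delta_0$ injects labels into any $S$. For the reverse inclusion I would fix $\ell \in \mathrm{supp}(L_G)$, pick a node $v_0$ with $\lambda(v_0) = \ell$, and induct along any path in $G$ from $v_0$: once $\ell$ is permanently in $S_u$, the next activation of a neighbour $v$ of $u$ adds $\ell$ to $S_v$ forever, and this happens in finite time by adversarial fairness. Connectivity of $G$ then yields $\ell \in S_v^\infty$ for every $v$. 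Once all $S_v$ equal $\mathrm{supp}(L_G)$, the configuration is frozen; because $\chi_{\mathrm{supp}(L_G)} = \Cutoff{L_G}{1}$ and $\varphi \in \clsCutoffOne$ gives $\varphi(\chi_{\mathrm{supp}(L_G)}) = \varphi(L_G)$, every node is simultaneously in an accepting state iff $\varphi(L_G) = 1$, establishing stable consensus.

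The main subtlety is the propagation step under exclusive adversarial scheduling: although at most one node moves at a time, fairness guarantees that every node is activated infinitely often, so along any shortest $v_0$-to-$v$ path the label $\ell$ reaches each successive node at its next activation after arriving at the predecessor. Beyond carefully phrasing this propagation induction, all remaining verifications are routine.
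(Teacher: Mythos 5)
Your construction is correct, and it takes a genuinely different (and more self-contained) route than the paper. The paper reduces the statement to a cited primitive: by Proposition~12 of \cite{ER20}, \dAsf-automata decide ``there exists a node with label $\ell$'', and since any $\varphi\in\clsCutoffOne$ is determined by which labels occur, it can be written as a Boolean combination of these existence predicates and decided using closure of the decidable properties under Boolean combinations (implicitly, a product construction). You instead build a single flooding automaton that computes the entire support $\mathrm{supp}(L_G)=\Supp{L_G}$ at every node in one pass and classifies the final state by $\varphi$. Your version avoids both the external citation and the (unproven, though routine) closure-under-Boolean-combinations step, at the cost of having to verify monotonicity, containment in the support, and propagation along paths yourself --- all of which you do correctly: the $S$-components are non-decreasing and bounded, the inclusion $S_v\subseteq\mathrm{supp}(L_G)$ is an invariant, adversarial fairness plus connectivity forces every label of the support to reach every node, and the resulting configuration is frozen with all nodes accepting iff $\varphi(\Supp{L_G})=\varphi(L_G)=1$, which also makes the consistency condition immediate. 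The only point worth making explicit is that your transition reads the neighbourhood only through the predicate ``at least one neighbour in state $q$'', so counting bound $\beta=1$ genuinely suffices --- you note this, so the argument is complete.
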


\begin{proof}
Let \( \varphi \in \clsCutoffOne\). Let \( x_1,\dots,x_n \) be the variables occurring in \( \varphi \). Then \( \varphi \) corresponds to a subset \(M\) in \( \{0,1\}^{ \{1,\dots, n\}} \), describing whether we accept if exactly the variables with indices \( i \mapsto 1\) occur. By \cite[Proposition~12]{ER20}, \dAsf-automata can decide the language \(B\) of graphs with a black node, i.e.\ the labelling predicate \( \varphi(x,y)\Leftrightarrow x \geq 1. \) On the level of subsets \(M\), this means the set \[M_i:=\{f:\{1,\dots,n\}\rightarrow \{0,1\}:f(i)=1\}\] of all functions with \( i \mapsto 1\) can be decided. We can write every subset \(M\) via unions, complements and intersections of sets \(M_i\). This corresponds to writing \( \varphi \) as a boolean combination of \( x_i \geq 1\), which can be decided.
\end{proof}

\subsection{\dAF\ can decide exactly $\clsCutoff$} \label{Appendix:Cutoff}

\begin{lemma}
For every property $\varphi:\N^l\rightarrow\{0,1\}$ with $\varphi(x,y_1,...,y_{l-1})\Leftrightarrow x\ge k$ for some $k\in\N$ there is a \dAsF-automaton deciding $\varphi$.
\label{lem:dAsF-greater-than-k}
\end{lemma}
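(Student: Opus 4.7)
The plan is to construct a \dAF-automaton that decides the property ``at least $k$ nodes carry the distinguished label $a$'', where $a$ corresponds to the variable $x$ in $\varphi$. The algorithm is token-based: each label-$a$ node starts holding one token, adjacent nodes can merge or move tokens via a handshake, and as soon as some node accumulates $k$ tokens it triggers a weak broadcast of ``accept'' to the whole graph, which by Lemma~\ref{lem:simulate-weak-broadcast} can be implemented inside a \dAF-automaton.

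Concretely, I would equip each node with a token-count component in $\{T_0, T_1, \dots, T_k\}$ (label-$a$ nodes starting in $T_1$, others in $T_0$), together with a small set of auxiliary handshake states and an absorbing accept flag. Two adjacent nodes in token states $T_i, T_j$ enter a multi-step handshake whose successful completion transfers all tokens onto one side, moving the initiator to $T_{\min(i+j,k)}$ and the partner to $T_{\max(i+j-k,0)}$. Whenever a node detects a conflicting handshake state in its neighbourhood, it aborts back to its previous token state, so that the total token count is preserved across any sequence of attempted interactions. Every node initially occupies a rejecting state and only becomes accepting upon receiving the accept broadcast that is initiated when some node first reaches $T_k$.

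Correctness follows from two observations. For soundness, handshakes preserve the sum of token counts, and the accept flag can only be raised after some node has reached $T_k$, which is impossible when $L_G(a) < k$; in that case all nodes stably remain rejecting, giving a stable-consensus ``reject''. For completeness, if $L_G(a) \ge k$ then there is a fixed finite sequence of selections that walks $k$ tokens to a common node and fuses them; pseudo-stochastic fairness forces this (conflict-free) sequence to occur at some point, some node reaches $T_k$, and the subsequent broadcast propagates acceptance to every node, giving a stable-consensus ``accept''.

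The main obstacle is implementing the handshake without counting neighbours: Lemma~\ref{lem:simulate-rendezvous} only provides a \DAF-level simulation of rendez-vous because it checks ``exactly one non-waiting neighbour'', which is genuinely a counting condition. The idea to get around this is to design a handshake whose transitions depend only on the presence or absence of particular handshake states in the neighbourhood (and hence fit the non-counting regime), and to rescue correctness using pseudo-stochastic fairness: whenever several handshakes interfere, each participant detects the intruding handshake state and aborts back to $T_i$ respectively $T_j$; since pseudo-stochastic schedules execute every finite selection sequence infinitely often, a clean handshake between any desired adjacent pair must eventually complete, which is what drives the completeness argument above.
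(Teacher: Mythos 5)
There is a genuine gap, and it sits exactly at the point you flag as ``the main obstacle'': soundness of your construction requires that \emph{no} fair run ever violates token conservation, but your proposed fix --- abort on detecting a conflicting handshake state, and let pseudo-stochastic fairness supply clean handshakes --- only addresses completeness. Fairness is a liveness guarantee; it ensures that a conflict-free handshake sequence eventually occurs, but it does nothing to prevent a corrupting interleaving from occurring first. Concretely, consider a searching node $v$ with two answering neighbours $u_1,u_2$ that are not adjacent to each other. In the non-counting model $v$ sees only ``at least one neighbour answering'' and cannot distinguish this from exactly one; $u_1$ and $u_2$ cannot see each other at all. So no party can detect the conflict by presence/absence of states, $v$ confirms, and both $u_1$ and $u_2$ complete their half of the transfer --- tokens are duplicated or lost. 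A single such run on a graph with $L_G(a)<k$ that pushes some node to $T_k$ already breaks the consistency condition (and hence correctness), since the accept broadcast is irreversible. This is not an implementation detail one can defer: the ``exactly one'' check in Lemma~\ref{lem:simulate-rendezvous} is genuinely a counting condition, which is why that lemma is stated only for \DAF, and why storing a token count inside individual agents is the wrong data structure for \dAF.

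The paper avoids local token passing entirely. It uses states $\{0,1,\dots,k\}$ and, for $i=1,\dots,k-1$, a weak broadcast $i\mapsto i,\{i\mapsto i+1\}$ in which the initiator \emph{stays} at level $i$ while other agents at level $i$ are promoted to $i+1$ (plus a final broadcast from level $k$ that makes everyone accept). The counter is thus represented in unary by the set of populated levels rather than by a value held by any one agent, and the construction is robust against simultaneous initiators by design: since every promotion to level $i+1$ leaves its initiator at level $i$, the invariant ``level $i+1$ populated implies level $i$ populated'' holds in every reachable configuration, so reaching level $k$ forces $k$ distinct agents in levels $1,\dots,k$ and hence $L_G(a)\ge k$; completeness is a pigeonhole argument plus pseudo-stochastic fairness. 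If you want to keep a token-merging flavour, you would need to reformulate it in this broadcast style (or otherwise make conservation a structural invariant rather than a property of a fragile multi-step local protocol).
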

\begin{proof}
We construct a \dAsF-automaton $P=(Q,\emptyset,I,O)$, which we will augment with weak broadcast transitions. As states we use $Q:=\{0,1,...,k\}$, the input mapping is given by $I(x):=1$ and $I(y_1)=...=I(y_{l-1})=0$, and the set of accepting states is $O:=\{k\}$. We add the following broadcasts, with $i=1,...,k-1$.
\begin{gather}
i\mapsto i,\{i\mapsto i+1\} \TraName{level} \\
k\mapsto k,\{q\mapsto k:q\in Q\} \TraName{accept}
\end{gather}
Using Lemma~\ref{lem:simulate-weak-broadcast} we get an equivalent \dAsF-automaton.

Let $C_0$ denote an initial configuration with $c:=|C_0^{-1}(1)|$ set to the number of agents starting in state $1$. It is easy to see that $C_0$ is accepting iff it can reach a configuration $C$ with $k\in C(V)$, i.e.\ at least one agent in state $k$. Of course, \TraRef{accept} cannot be used to reach $k$ as the initiator is already in state $k$, so we now consider only configurations $C$ reachable by \TraRef{level}.

It is only possible for an agent to go from state $i$ to $i+1$ by receiving broadcast \TraRef{level} initiated by an agent in state $i$, for $i=1,...,k-1$. The initiator remains in state $i$, so we have that $i+1\in C(V)$ implies $i\in C(V)$. Therefore $k\in C(V)$ implies $\{1,2,...,k\}\subseteq C(V)$ and thus at least $k$ agents have started in state $1$, i.e.\ $c\ge k$, as it is not possible to leave state $0$ via \TraRef{level}.

To summarise, the protocol accepts only initial configuration which should be accepted. It remains to show that the converse holds as well, so we require $c\ge k$ and set $C$ to an arbitrary configuration reachable from $C_0$ with only \TraRef{level}. We have pseudo-stochastic fairness, so it is enough to show that $C$ can reach a configuration $C'$ with $k\in C'(V)$.

Let $m_i:=|C^{-1}(i)|$ denote the number of agents in state $i$, for $i=1,...,k$. We define an ordering on the set of configurations by ordering the tuples $(m_k,m_{k-1},...,m_1)$ lexicographically. If $C$ does not have a node in state $k$, then it has $c\ge k$ occupying states $1$ to $k-1$, i.e.\ $m_1+...+m_{k-1}=k$ and, by pigeonhole principle, there is some $1\le j<k$ with $m_j\ge 2$. By executing transition \TraRef{level} on one of those agents exclusively, at least one agent moves to state $j+1$. Hence the resulting configuration is strictly larger w.r.t.\ our ordering. There are only finitely many configurations with $n$ agents, so we can repeat this procedure until at least one agent has state $k$, thereby proving that $C$ reaches some accepting configuration.
\end{proof}

\begin{proposition}
The set of labelling properties decided by \dAsF-automata is precisely $\clsCutoff$.
\end{proposition}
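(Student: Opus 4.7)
The plan is to establish the two inclusions separately. The upper bound, that every labelling property decided by a \dAsF-automaton lies in $\clsCutoff$, is essentially immediate from Lemma~\ref{lem:dAFCutoff}: that lemma furnishes, for each \dAsF-automaton $A$ deciding a labelling property, a constant $K \geq 0$ such that $\Cutoff{L_G}{K}=\Cutoff{L_H}{K}$ implies $\varphi_A(G)=\varphi_A(H)$; this is exactly the condition $\varphi_A(L)=\varphi_A(\Cutoff{L}{K})$ defining membership in $\clsCutoff$.

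For the lower bound, fix $\varphi \in \clsCutoff$ with cutoff $K$, so that $\varphi(L)$ depends only on the tuple of truncated counts $\bigl(\min(L(x),K)\bigr)_{x\in\Lambda}$. Since this tuple ranges over the finite set $\Range{K}^{\Lambda}$, one can write $\varphi$ as a finite disjunction, over the accepting truncated profiles $\tau$, of the conjunction that asserts $L(x)\geq \tau(x)$ for every label $x$ with $\tau(x) < K$, $L(x)\geq K$ for every $x$ with $\tau(x)=K$, and $L(x)<\tau(x)+1$ for every $x$ with $\tau(x)<K$. In short, $\varphi$ is a boolean combination of atomic predicates of the form $L(x)\geq k$ for $x\in\Lambda$ and $k\in\{1,\dots,K\}$.

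By Lemma~\ref{lem:dAsF-greater-than-k}, every such atomic predicate is decided by some \dAsF-automaton. It therefore suffices to argue that the class of labelling properties decided by \dAsF-automata is closed under negation, conjunction and disjunction. Negation is trivial by swapping the accepting and rejecting state sets. For conjunction and disjunction one takes the product machine: both automata are simulated in parallel on the same graph (each agent carries a component for each sub-automaton, and a transition of a sub-automaton acts only on its own component), and the accepting states of the combined machine are chosen to be those pairs where both (resp.\ at least one) components are accepting. Stabilisation of each component to a stable consensus under pseudo-stochastic scheduling carries over to the product, because pseudo-stochastic schedules remain pseudo-stochastic when projected onto the transitions of either factor; hence the product stabilises to the correct boolean combination of consensuses.

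The only point that needs some care is this last closure argument: one must verify that the product of two \dAsF-machines, with suitably combined weak-broadcast or neighbourhood transitions in each component, still satisfies the consistency condition under pseudo-stochastic scheduling. This will be the main (but essentially routine) obstacle in the formal write-up; combining it with the two steps above yields the decidability of $\varphi$ by a \dAsF-automaton, completing the proof.
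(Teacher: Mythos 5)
Your proposal is correct and follows essentially the same route as the paper: the upper bound is exactly Lemma~\ref{lem:dAFCutoff}, and the lower bound decomposes $\varphi$ into a disjunction over accepted truncated profiles, each a boolean combination of threshold predicates $L(x)\ge k$ handled by Lemma~\ref{lem:dAsF-greater-than-k} plus closure under boolean combinations. The only difference is that the paper simply asserts closure under boolean combinations, whereas you sketch the (standard) product construction justifying it.
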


\begin{proof}
By Lemma~\ref{lem:dAFCutoff}, the expressive power is contained in $\clsCutoff$.

Now let \( \varphi \in \clsCutoff\). Let \(K \in \N\) be as in the definition of $\clsCutoff$. Let \(x_1,\dots, x_n\) be the variables occurring in \( \varphi\). Then \( \varphi \) corresponds to a \(M\in [K]^{ \{1,\dots,n\} } \) of accepted cutoffs. If we can decide all formulas corresponding to 1-element subsets of \( \{0,1,\dots, K\}^{ \{1,\dots,n\} } \), then we can decide \( \varphi\), since \( \varphi \) can be written as a disjunction of such formulas. Let \(M\) be such a 1-element subset, write this element as \( f: \{1,\dots, n\} \rightarrow \{0,1,\dots, K\} \). Let \(S\subseteq \{1,\dots, n\} \) be the set of indices \(i\) with \(f(i)=K\). The formula corresponding to \(M\) is \[ \bigwedge_{i\notin S} (x_i \geq f(i) \wedge \neg (x_i \geq f(i)+1)) \wedge \bigwedge_{i \in S} (x_i \geq f(i)), \] which can be decided since by Lemma~\ref{lem:dAsF-greater-than-k}, we can compute \( x_i \geq f(i)\), and the set of decidable properties is closed under boolean combinations.
\end{proof}

\subsection{\DAF\ can decide exactly the labelling properties in \(\NL\)}
\restateDAFequalsNL*
\begin{proof}
As argued in the main paper, \DAF-automata can decide at most the predicates in $\NL$. We now restate the construction from the proof sketch for clarity.

\renewcommand{\TraNs}{tra3}

As a technical aide to state the proof, we introduce another graph population protocol $\Token{P}^*:=(\Token{Q},\Token{\delta}^*)$, where $\Token{\delta}^*:=\{(L,L)\mapsto(0,\bot),(0,L)\mapsto(L,0)\}$. Essentially, we want to ignore the difference between $L$ and $L'$. For this, we consider the mapping $g:\Token{Q}'\rightarrow\Token{Q}'$, which maps $g(L'):=L$ and all other states to themselves. We extend $g$ to $\Step{Q}$ by applying it only to the first component, i.e.\ $g((q,r)):=g(q)$ for $(q,r)\in\Step{Q}$, and then to configurations $C:V\rightarrow\Step{Q}$ and runs $\pi$ of $\Step{P}$ in the obvious manner.

Let $\pi$ denote a fair run of $\Step{P}$. If we only consider the first component of the states in $\pi$, this is essentially a run of $\Token{P}'$, except that at some steps an agent transitions from $L'$ to $L$ using \TraRef{step}. It is not guaranteed that a simulation continues to work under these conditions, but the construction of Lemma~\ref{lem:simulate-rendezvous} does not rely on the non-intermediate states remaining unchanged between transitions. This means that $g(\pi)$ has a reordering which is an extension of a fair run of $\Token{P}^*$.

Let $C_0^*:V\rightarrow Q$ denote an initial configuration of $P$, and let $\pi:=(C_0,C_1,...)$ denote a run of $\Step{P}$, starting in a configuration $C_0$ with $C_0(v)=(0,C_0^*(v))$ or $C_0(v)=(L,C_0^*(v))$ for all nodes $v$. 

If $C_0$ has $k>1$ tokens, then, we claim, $\pi$ will reach a configuration with an agent in an error state, and the set $S:=\bigcup_i C_i^{-1}(Q\times\bot)$ of agents to ever reach an error state has size at most $k-1$. We will refer to this property as $A(\pi)$. Crucially, if $A(\pi)$ holds for any run $\pi$, then it also holds for any reordering of $\pi$, and it also holds for any extension of $\pi$.

As we argued before, $g(\pi)$ is a reordering of an extension of a fair run of $\Token{P}^*$. Moreover, this projection does not affect $A$. So it is sufficient to show $A(\tau)$ for all fair runs $\tau$ of $\Token{P}^*$, which follows immediately from its definitions.

Let $\pi':=(C_0',C_1',...)$ denote a fair run of $\Resad{P}$ with initial configuration $C_0'$ defined similar to $C_0$, so $C_0'(v)=((0,C_0^*(v)),C_0^*(v))$ or $C_0'(v)=((L,C_0^*(v)),C_0^*(v))$ for all nodes $v$, and the latter holds for exactly $k>1$ nodes. (Note that initial configurations of $\Resad{P}$ always have $k=n$.) As $A(\pi)$ holds for all fair runs $\pi$ of $\Step{P}$ as defined above, we find that a fair run $\pi'$ where \TraRef{reset} is never executed will necessarily enable it once. But, as per Definition~\ref{def:weakbroadcasts}, all states $((\bot,\Any),\Any)$ are broadcast-initiating, so they cannot execute a neighbourhood transition and can only change their state via \TraRef{reset}. This has to occur eventually, so let step $i$ denote the first execution of \TraRef{reset}. 

Again, due to $A$, before step $i$ the number of agents in a state in $(\{\bot\}\times Q)\times Q$ is at most $k-1$. So we get $C_i'=((0,C_0(v)),C_0(v))$ or $C_i'(v)=((L,C_0(v)),C_0(v))$ for all nodes $v$, and the latter holds for at most $k-1$ (but not zero) nodes. (Recall that a weak broadcast might be executed simultaneously by multiple agents, so it is possible to end up with more than one token after executing \TraRef{reset}.) By induction, we eventually find a suffix of the run starting in a configuration $C_0'$ as defined above with $k=1$, i.e.\ exactly one agent is holding a token.

As we argued for $A$, no agent can ever reach an error state from such an initial configuration, so transition \TraRef{reset} will never be executed and it suffices to show that any fair run $\pi=(C_0,C_1,...)$ of $\Step{P}$ stabilises to the correct consensus, where $C_0(v)=(L,C_0^*(v))$ for some node $v$ and $C_0(u)=(0,C_0^*(u))$ for all other nodes $u\ne v$.

First, we argue that there is always at most one agent holding a token. Again, applying $g$ yields a reordering of an extension of a run of $\Token{P}^*$, and it is clear that the property holds for any run of $\Token{P}^*$ and any extension $\tau=(K_0,K_1,...)$ of such a run (with initial configuration $K_0$ defined s.t.\ $C_0(v)\in\{K_0(v)\}\times Q$ for all $v$). However, we still need to show that any reordering $\tau_f$ of $\tau$ also fulfils the property. It is only possible for a node $v$ to receive a token by moving from state $0$ or an intermediate state to $L$. If this happens, say, at step $i$ in $\tau$, then $v$ or an adjacent node $u$ must have left $\{L,L'\}$ at a step $j$ directly before $i$, i.e.\ a step $j<i$ s.t.\ in configurations $K_{j+1},K_{j+2},...,K_{i-1}$ there are no agents with a token. For the reordering we then have $f(j)<f(j)$, so the token leaves $u$ before entering $v$ as well in $\tau_f$.

This means that transition \TraRef{step} cannot be executed by multiple agents simultaneously and it thus updates the states in the same manner as in $P$. Finally, it remains to show that $\pi$ does so in a pseudo-stochastic manner, for which it is sufficient to prove that any configuration $C_i$ after executing $\TraRef{step}$ can, for any node $v$, reach a configuration $C'$ where $v$ has the token without executing $\TraRef{step}$.

Intuitively, this clearly holds, based on transitions \TraRef{token}. To make this formally precise, we reference the specific construction of Lemma~\ref{lem:simulate-rendezvous}. Starting with $C_i$ we repeatedly select agents in intermediate states (and execute the corresponding transition) until none are left. This will never select the (unique) node $v$ with $C_i(v)=(L,\Any)$, and it will terminate, due to the transitions of Lemma~\ref{lem:simulate-rendezvous}. Thus we reach a configuration $C'$ with $C'(v)=(L,\Any)$ and all other agents $u\ne v$ have $C'(u)=(0,\Any)$. (We have already argued that it is not possible to reach a state with more than one token.) As $C'$ contains no intermediate states, it is easy to see that there is a sequence of neighbourhood transitions to move the token to any node.
\end{proof}

\section{Proofs of Section~\ref{sec:degree-bounded}}
\subsection{\dAf\ can only decide $\clsCutoffOne$} \label{Appendix:dAf-bounded-degree}

\begin{proposition}
The set of labelling properties decided by \dAsf-automata in the $k$-degree-bounded case for \( k\geq 3\) is precisely $\clsCutoffOne$.
\end{proposition}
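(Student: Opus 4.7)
The proposition has two directions. The inclusion $\clsCutoffOne \subseteq$ decidable is standard: every $\varphi\in\clsCutoffOne$ is a finite Boolean combination of predicates of the form ``there exists a node of label $x$'', and the \dAsf-algorithm of \cite[Proposition~12]{ER20} deciding such a predicate propagates a token along a spanning tree and works on any connected graph, in particular on $k$-degree-bounded ones (as in Appendix~\ref{Appendix:1-Cutoff}).

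The reverse inclusion is the non-trivial direction. Let $A$ be a \dAsf-automaton deciding a labelling property $\varphi$ on $k$-degree-bounded graphs. Corollary~\ref{cor:closed-under-scalar-multiplication} applies in the $k$-degree-bounded setting (its witnesses are cycles, which have degree $2$ and are thus $k$-degree-bounded since $k\ge 3$), so $\varphi(L)=\varphi(\lambda L)$ for every $\lambda\in\N_{>0}$. Combining this with a bounded-degree cutoff-$2$ result $\varphi(L)=\varphi(\Cutoff{L}{2})$ yields
\begin{equation*}
\varphi(L)\;=\;\varphi(2L)\;=\;\varphi(\Cutoff{2L}{2})\;=\;\varphi(2\cdot\Cutoff{L}{1})\;=\;\varphi(\Cutoff{L}{1}),
\end{equation*}
and hence $\varphi\in\clsCutoffOne$.

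The heart of the proof is therefore the bounded-degree cutoff-$2$ statement. The clique reduction used in Lemma~\ref{lem:finite-cutoff} is unavailable since cliques are not $k$-degree-bounded. The plan is to replace the clique by a canonical $k$-regular graph $G^*_L$ with label count $L$ in which, for sufficiently many copies of each label, all nodes of a given label share the same set of labels in their neighborhood (a \emph{set} suffices because \dAsf is non-counting). The synchronous run on $G^*_L$ is fair under adversarial selection, and two nodes sharing a label then visit the same sequence of states; an induction on steps analogous to that of Lemma~\ref{lem:finite-cutoff} shows that the induced state count depends only on $\Cutoff{L}{2}$, giving the desired invariance.

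The main obstacle is the combinatorial construction of $G^*_L$ and the verification of the induction-step invariant. The hypothesis $k\ge 3$ enters precisely in the construction: it provides the structural flexibility needed to realise the uniform neighborhoods, which fails for $k=2$ where only paths and cycles are available. A secondary subtlety is the case where the support of $L$ exceeds $k$, so that no single node can see the entire support in its neighborhood; here one must instead arrange that every label of the support is locally ``indistinguishably present vs.\ absent'' across the class of same-label nodes, which is where the interaction with scalar multiplication (to first blow $L$ up into a sufficiently rich graph) becomes delicate.
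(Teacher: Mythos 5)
Your high-level reduction is sound as far as it goes: the easy inclusion via Boolean combinations of ``label $x$ occurs'' is also the paper's argument, and the chain $\varphi(L)=\varphi(2L)=\varphi(\Cutoff{2L}{2})=\varphi(2\cdot\Cutoff{L}{1})=\varphi(\Cutoff{L}{1})$ would correctly derive cutoff~$1$ from scalar invariance plus a bounded-degree cutoff-$2$ lemma (the paper uses exactly this shape of argument in the unrestricted case). The gap is in the one step you defer: the canonical graph $G^*_L$ you want does not exist in general. Take support $\{a,b\}$ with $L(a)=1000$, $L(b)=2$ and $k=3$. Connectivity forces at least one $a$--$b$ edge; uniformity of the neighbourhood label sets then forces \emph{every} $a$-node to have a $b$-neighbour; but the $b$-nodes can supply at most $k\cdot L(b)=6$ edge endpoints. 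The obstruction is unbalanced label counts, not (as you suggest) large support, and scalar multiplication cannot repair it because it preserves the ratio $L(a)/L(b)$. So the global ``all same-label nodes in lockstep'' strategy, which is what makes the clique argument of Lemma~\ref{lem:finite-cutoff} work, does not transplant to bounded degree.

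The paper's proof avoids this entirely with a local twin-node argument that directly yields $\varphi(L)=\varphi(L+x)$ for every $x$ in the support of $L$ --- which is cutoff~$1$ outright, with no detour through cutoff~$2$ or scalar invariance. Lay $L$ out on a line with an $x$-labelled node at one end, and attach one extra $x$-labelled leaf to the \emph{second} node of the line. That second node now has degree $3$ (this is where $k\ge3$ enters, rather than in any global regularity construction) and, being non-counting, cannot distinguish one from two neighbours in the same state; a step-by-step induction on the synchronous runs (which are fair under adversarial selection, and decisive by consistency) shows the extra leaf mirrors the end node while every original node behaves exactly as in the unaugmented line. Only the two twins need to be mutually indistinguishable, so no balancing of label counts is required. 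You should replace the $G^*_L$ construction with this local argument; your scalar-multiplication scaffolding then becomes unnecessary.
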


\begin{proof}
We know that $\clsCutoffOne$ is contained in the expressive power of \dAsF\ for $k$-degree-bounded graphs, since we can compute predicates in $\clsCutoffOne$ even in the unrestricted set of graphs.

Now let \( \varphi\) be a property decided by some \dAsf-automaton \(M\). We claim that for every multiset \(L\) and every label \(x\) with \( L(x) \geq 1 \) we have \( \varphi(L)=\varphi(L+x) \).

Proof of claim: since \(\varphi\) is a labelling property, we can choose the underlying graph. Let \(G=(V,E,\lambda)\) be a line labelled with the set \(L\) and the label \( x \) on the first end. Define the graph \(G'=(V',E',\lambda')\) by copying \(G\) and adding a extra node, which is labelled with \( x\) and connected to the second node only. Since \(M\) is consistent, \(M\) accepts \(G\) if and only if the synchronous run \( \rho \) on \(G\) is accepting and it accepts \(G'\) if and only if the synchronous run \( \rho'\) on \(G'\) is accepting. It follows by induction that every node of the graph \(G\) is always in the same state in both runs, and that the extra node is always in the same state as the first end. 

This shows that \(\rho\) is accepting if and only if \(\rho'\) is accepting. Therefore \(G\) is accepted if and only if \(G'\) is accepted, proving the claim.

Now we use the claim to prove the proposition. Let \(L\) be some multiset. We have to prove that \( \varphi(L)=\varphi(\Cutoff{L}{1})\). For this, we write \( L=\Cutoff{L}{1}+x_1+\dots+x_n \) with \( x_i(\Cutoff{L}{1})\geq 1\) and use the claim a total of \(n\) times.
\end{proof}

\subsection{\dAF\ and \DAF\ decide exactly the labelling properties in \(\NLinSpace\)} \label{Appendix:DAF-bounded-degree}

\begin{proposition}
A labelling property \(\varphi\) can be decided by a \dAsF-automaton in the $k$-degree-bounded case if and only if \(\varphi \in \NLinSpace\).

A labelling property \(\varphi\) can be decided by a \DAsF-automaton in the $k$-degree-bounded case if and only if \(\varphi \in \NLinSpace\).
\end{proposition}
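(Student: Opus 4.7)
For the upper bound, note that on any $k$-degree-bounded graph a configuration of a \DAsF-automaton has size $O(n \log |Q|) = O(n)$, so the configuration graph has $2^{O(n)}$ vertices with polynomial-time-computable edges. Acceptance by stable consensus under pseudo-stochastic fairness is equivalent to the following property of the initial configuration $C_0$: some all-accepting configuration $C^*$ is reachable from $C_0$ such that every configuration reachable from $C^*$ is again all-accepting. This can be decided by $\NL$-reachability in the configuration graph, using Immerman--Szelepcsényi to complement the inner ``$C^*$ is stably accepting'' check, giving overall complexity $\NLinSpace$ in the input size $n$. Since \dAsF-automata are a special case of \DAsF-automata, the same upper bound applies to \dAsF.

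For the lower bound it suffices to show that every $\NLinSpace$ labelling property can be decided by some \dAsF-automaton. Let $T$ be a non-deterministic linear-space Turing machine deciding $\varphi$; the plan is to mimic the strategy of Lemma \ref{lem:DAsF-compute-NL}. First, I would elect a unique leader by token annihilation: every agent starts with a token, rendez-vous encounters (Lemma \ref{lem:simulate-rendezvous}) annihilate colliding tokens, and an error/reset mechanism via weak broadcasts (Lemma \ref{lem:simulate-weak-broadcast}) restarts the simulation whenever more than one ``leader'' coexists. Pseudo-stochastic fairness guarantees that eventually only a single token survives and no further reset is triggered. Second, this unique leader initiates a depth-first traversal of the graph using the local distance-labelling technique of Lemma \ref{lem:simulate-weakabsence}: because the graph is $k$-bounded, each node can deterministically pick an unvisited neighbour via a $(2k{+}1)$-colouring, imprinting a canonical linear order $v_0, \ldots, v_{n-1}$ together with parent pointers so that the leader can walk along it. Third, the $n$ nodes serve as the tape cells of $T$: each node holds one tape symbol in an auxiliary state component, the leader carries the TM control state, and a step of $T$ is realised by a local update followed by one move along the DFS order, with nondeterministic choices of $T$ resolved by the scheduler. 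When $T$ halts, the leader performs a weak broadcast that moves every node to an accepting (resp.\ rejecting) state, yielding stable consensus.

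The principal obstacle is composing these subroutines under the weak guarantees of \dAsF, in particular the lack of multiset detection. On $k$-degree-bounded graphs this is resolved by the distance-labelling scheme used in Lemma \ref{lem:simulate-weakabsence}: because every node has at most $k$ neighbours, a bounded colouring lets each node identify a \emph{specific} neighbour in a given state, thereby replacing every counting-dependent primitive by an equivalent non-counting one. Correctness then follows as in Lemma \ref{lem:DAsF-compute-NL}: the reset mechanism is invoked only finitely often, after which the automaton enters a single, faithful simulation of $T$ that stabilises to the correct consensus. Since $\varphi$ is a labelling property, the simulation first produces $L_G$ on the virtual tape by a linear traversal and then invokes $T$ on this input, so the outcome depends only on $L_G$, as required.
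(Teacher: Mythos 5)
Your upper bound is essentially the paper's argument: encode a configuration in linear space, observe that the step relation and hence reachability are decidable in $\NLinSpace$, characterise acceptance under pseudo-stochastic fairness as reachability of a configuration from which only accepting configurations are reachable, and use Immerman--Szelepcs\'enyi for the negative reachability condition. That part is fine.

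The lower bound is where you diverge from the paper and where there are genuine gaps. The paper does not simulate a Turing machine at all: it cites \cite{BournezL13} for the fact that graph population protocols on $k$-degree-bounded graphs decide all symmetric (hence all labelling) properties in $\NLinSpace$, converts such a protocol into a \DAsF-automaton via Lemma~\ref{lem:simulate-rendezvous}, and handles \dAsF{} by citing the equivalence of \dAF{} and \DAF{} on bounded-degree graphs from \cite{ER20}. Your from-scratch construction has two concrete problems. First, nondeterminism: you let the scheduler resolve the choices of $T$ and have the leader broadcast the verdict ``when $T$ halts''. If $T$ has both accepting and rejecting computations on an input in the language, the scheduler can drive the simulation down a rejecting branch, and your broadcast then stabilises the whole graph in rejecting states --- a stable wrong consensus that no amount of later fairness can undo, violating the consistency condition. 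You need a mechanism that never commits to rejection on the strength of one halted branch (e.g.\ committing to rejection only via a simulation of the complement machine, which exists by Immerman--Szelepcs\'enyi, with a restart loop otherwise); as written the step fails. Second, you claim to produce a \dAsF-automaton while building on Lemma~\ref{lem:simulate-rendezvous}, whose output is a counting automaton with $\beta=2$, and on weak broadcasts. The assertion that a bounded colouring turns every counting primitive into a non-counting one is exactly the content of the \dAF{}/\DAF{} equivalence of \cite{ER20}, not something the distance labelling of Lemma~\ref{lem:simulate-weakabsence} gives you for free: that labelling is constructed relative to a set of initiators, is not a proper colouring of the graph, and obtaining one in an anonymous network needs its own symmetry-breaking argument. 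The clean fix is to prove the lower bound for \DAsF{} only and then invoke that equivalence to transfer it to \dAsF{}, as the paper does.
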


\begin{proof}
By \cite[Proposition~22]{ER20}, the expressive power of \dAF-automata is equal to the expressive power of \DAF-automata in the $k$-degree-bounded case. It is therefore enough to consider \DAF.

We start by proving that labelling properties \(\varphi \in \NLinSpace\) can be decided. By \cite{BournezL13}, when restricting to $k$-degree-bounded graphs, graph population protocols can decide all symmetric properties \( \varphi \in \NLinSpace\), in particular all labelling properties \(\varphi \in \NLinSpace\), since they are by definition invariant under rearranging the labels. By Lemma~\ref{lem:simulate-rendezvous}, all properties decidable by graph population protocols can also be decided by \DAF-automata. 

Now let \( \varphi\) be a labelling property decided by a \DAsF-automaton \(M\) with counting bound \(\beta\). We have to prove that \( \varphi\) can be decided by a non-deterministic Turing machine with linear space. Since every node uses constant space and we have a linear number of nodes, a Turing machine with linear space can save configurations of our automaton \(M\). We claim that checking whether two configurations \(C,C'\) fulfil \(C \rightarrow C'\) can be checked in \(\NLinSpace\). For this, the Turing machine guesses for each node whether it has to be selected or not, and then checks for every node \(v\) whether 
\begin{align*}
C(v)&=C'(v) &&\text{if }v\notin S \\
\delta(C(v),\Cutoff{C(N(v))}{\beta})&=C'(v) &&\text{if }v\in S,
\end{align*}
i.e.\ the definition of the semantics. Since \(C \rightarrow C'\) can be checked in \(\NLinSpace\), \(C \rightarrow^{\ast} C'\) also can. For this, the Turing Machine does the following \((|Q|)^{|V|}\) times (upper bound on number of configurations): guess a configuration \(C''\) and check \(C \rightarrow C''\). Overwrite \(C\) with \(C''\). If \(C''=C'\) accept, if we finish the loop without this occurring reject. 

Now we use Immerman–Szelepcsényi theorem in the general version to obtain that \(C \not\rightarrow^{\ast} C'\) can also be checked in \(\NLinSpace\). Due to the automaton \(M\) using pseudo-stochastic fairness, we accept from some initial configuration \(C_0\) if and only if there exists a configuration \(C\) fulfilling the following three conditions:

\begin{enumerate}
\item \(C_0 \rightarrow^{\ast} C\).
\item \(C\) is accepting.
\item For all non-accepting configurations \(C'\), we have \(C \not\rightarrow^{\ast} C'\).
\end{enumerate}

We can check this in \(\NLinSpace\) by guessing the configuration \(C\) and checking the reachability conditions as described above.
\end{proof}

\subsection{\DAf\ can decide majority}

The proof of Lemma~\ref{lem:localcancelling2} will use the following lemma, which encapsulates the main argument.

\begin{lemma}\label{lem:localcancelling1}
Let $\pi=(C_0,C_1,...)$ denote a fair run of $\Cancel{P}$. There are only finitely many $C_i$ with $C_i(V)\cap\{k+1,...,E\}\ne\emptyset$ and $C_i(V)\cap\{-E,...,0\}\ne\emptyset$.
\end{lemma}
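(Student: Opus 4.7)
Since $\Cancel{\ExtendedAutomaton}$ uses synchronous scheduling, $\pi$ is deterministic, and as the state space $\{-\BoundForMaj,\ldots,\BoundForMaj\}^V$ is finite, $\pi$ is eventually periodic: $C_{i_0+p}=C_{i_0}$ for some $i_0\ge 0$ and $p\ge 1$. If the lemma failed, bad mix would occur in some configuration on the cycle; the plan is to rule this out using a monotone potential that must become constant on the cycle and thereby forces rigid structural constraints on every cyclic configuration.

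I would use $\Phi(C)\DefEq\sum_v |C(v)|$. Each synchronous step decomposes naturally into unit flows on edges: every edge whose two endpoints lie in different parts of the partition $\{v:C(v)>k\}$, $\{v:-k\le C(v)\le k\}$, $\{v:C(v)<-k\}$ carries exactly one unit of mass from the higher-valued endpoint to the lower-valued one, and these flows together account for the update $C\to C'$. Applying the identity $|x+y|-|x|=|y|$ when $xy\ge 0$ and $|x+y|-|x|=|y|-2\min(|x|,|y|)$ when $xy<0$ at each middle-zone node and summing yields
\[
\Phi(C)-\Phi(C') \;=\; 2E_2(C)+2M(C)+S(C),
\]
where $E_2(C)$ counts edges between a large-positive ($>k$) and a large-negative ($<-k$) node, $M(C)$ counts middle-zone nodes with both a large-positive and a large-negative neighbour, and $S(C)\ge 0$ collects the strict improvements at middle-zone nodes whose value $C(u)$ and net change $\Delta_u$ have opposite signs. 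All three summands are nonnegative, so $\Phi$ is non-increasing; being a nonnegative integer it is eventually constant, forcing $E_2\equiv M\equiv S\equiv 0$ on the cycle.

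From these three vanishing conditions I obtain, at every cyclic $C_i$: (i) no large-positive is adjacent to any large-negative; (ii) no middle-zone node has both a large-positive and a large-negative neighbour; (iii) at every middle-zone node $u$, $\Delta_u$ has the same sign as $C(u)$ (or one is zero). Supposing $C_t$ on the cycle has bad mix, I would pick a shortest path in the communication graph from a large-positive $v$ to a non-positive $u$. The path must cross a boundary edge from a large-positive $v'$ to a middle-zone $u'$, and (iii) forces $C_t(u')\ge 0$; if $C_t(u')=0$, then $u'$ is pushed strictly positive at the next step by $v'$, and for $u'$ to return to $0$ within one period it must later acquire a large-negative neighbour, propagating the requirement along the rest of the path until one of (i)–(iii) is violated. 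The main obstacle is precisely this last dynamic step — turning the per-step constraints $E_2=M=S=0$ into a global impossibility over a full period — and I would handle it by tracking an auxiliary invariant on the cycle, for instance a weighted count of the zero-valued boundary nodes adjacent to the large-positive region, and showing that its forced evolution under the dynamics cannot be reconciled with the vanishing of $E_2$, $M$, and $S$ at every step.
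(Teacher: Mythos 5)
Your setup is sound and takes a genuinely different route from the paper: you exploit the determinism of the synchronous run to reduce to an eventual limit cycle, and your accounting of the per-step drop of $\Phi(C)=\sum_v|C(v)|$ as $2E_2(C)+2M(C)+S(C)$ checks out (strictly, the middle term is $2\sum_v\min(p_v,n_v)\ge 2M(C)$, but both vanish together, so the conclusion is unaffected). The paper, by contrast, only records that $\Phi$ is non-increasing and instead builds a lexicographic potential $(\lambda_0,\dots,\lambda_n)$ from the sums of contributions at each distance from the non-positive region, showing it strictly increases under a decomposition of each synchronous step into pairwise transactions. So the conclusion that $E_2$, $M$ and $S$ all vanish on every cyclic configuration is justified. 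The genuine gap is the last step, which you yourself flag: you never derive a contradiction from these vanishing conditions, and the two devices you gesture at do not work as stated. The shortest-path argument only treats the case where the boundary node $u'$ has value $0$ and says nothing when every node on the path is strictly positive until the very last edge; and the ``weighted count of zero-valued boundary nodes'' is never defined, so its ``forced evolution'' cannot be checked. As written, the proof stops exactly where the work begins.

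That said, the gap is closable from the ingredients you already have. From $E_2=M=S=0$ one gets that on the cycle every node with strictly negative value has no large-positive neighbour and every node with strictly positive value has no large-negative neighbour; hence a strictly positive node has $\Delta_v\ge0$ and stays strictly positive, a strictly negative node stays strictly negative, and by periodicity the sign of $C_i(v)$ is independent of $i$. A node that is always zero must have $p_v=n_v=0$; a positive node that is never large-positive has non-decreasing, hence constant, value and so also $p_v=0$; and negative nodes never see a large-positive neighbour. Therefore no node outside $B^+:=\{v:\exists i,\ C_i(v)>k\}$ is ever adjacent to a currently large-positive node, which forces every neighbour of a node of $B^+$ into $B^+$; by connectedness $B^+=V$, so all values are always strictly positive on the cycle, contradicting the assumed node with value $\le 0$. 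This is essentially your passing observation that once $u'$ turns positive it can never return to $0$, applied globally rather than propagated along a path; had you carried it out, your argument would be complete and arguably more transparent than the paper's.
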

\begin{proof}
First, we note $C_i(V)\subseteq S\Rightarrow C_{i+1}(V)\subseteq S$ for $S=\{0,...,E\}$, $S=\{1,...,E\}$, and $S=\{-E,...,k\}$. In particular, the latter two imply that it suffices to show that there exists an $i$ with $C_i(V)\cap\{k+1,...,E\}=\emptyset$ or $C_i(V)\cap\{-E,...,0\}=\emptyset$.

Our proof will proceed by first showing that $C_i(V)\cap\{k+1,...,E\}\ne\emptyset$ and $C_i(V)\cap\{-E,...,-1\}\ne\emptyset$ cannot both hold for all $i$. Afterwards, we will argue that $C_i(V)\cap\{k+1,...,E\}\ne\emptyset$ and $0\in C_i(V)$ also cannot always hold, thus completing the proof.

Assume $C_i(V)\cap\{k+1,...,E\}\ne\emptyset$ and $C_i(V)\cap\{-E,...,-1\}\ne\emptyset$ for all $i$. We fix an $i$, and let $S_0(C_i):=C_i^{-1}(\{-E,...,0\})$ denote the set of agents with nonpositive contribution in $C_i$. Due to our assumption, $S_0(C_i)$ is nonempty. We write $S_d(C_i)$ for the set of nodes with distance $d$ to $S_0(C_i)$, for $d=1,...,n$, and define $\lambda_d(C_i):=\sum_{v\in S_d(C_i)}C_i(v)$ as the sum of contributions of $S_d$. Finally, we set $\lambda(C_i):=(\lambda_0(C_i),...,\lambda_n(C_i))$.

We now claim that $\lambda(C_i)<\lambda(C_{i+1})$ for each $i$, using lexicographical ordering, which is a contradiction, as there are only finitely many different configurations. To show the claim, we split the transition from $C_i$ to $C_{i+1}$ into a set of pairwise transactions $U\subseteq V\times V$, s.t.\ $C_{i+1}(v)=C_i(v)-|U\cap\{v\}\times V|+|U\cap V\times\{v\}|$ for each node $v$, and $C_i(u)>k\wedge C_i(v)\le k$ or $C_i(u)\ge-k\wedge C_i(v)<-k$ for all $(u,v)\in U$. Intuitively, $(u,v)\in U$ means that $u$ sends one unit to $v$.

We always have $C_i(u)>C_i(v)$ for $(u,v)\in U$, so $\lambda_0(C_i)\le\lambda_0(C_{i+1})$. If there exist adjacent nodes $u,v\in V$ with $C_i(u)<0<C_i(v)$ and $(v,u)\in U$ then we have $\lambda_0(C_i)<\lambda_0(C_{i+1})$ and our claim follows. Hence we will now exclude this case. In particular, we thereby exclude the possibility of a node $v$ leaving $S_0$, i.e.\ $v\in S_0(C_i)\setminus S_0(C_{i+1})$.

Let $U^+:=\{(u,v)\in U: u,v\notin S_0(C_i)\}$ denote the set of transitions where neither node has negative contribution. We pick $d,u$ where $u\in S_d$ and $C_i(u)>k$ s.t.\ $d$ is minimal. It is clear that $d>1$ holds, as else there would be a transaction from $u$ to a node in $S_0$. There is some node $v$ adjacent to $u$ in $S_{d-1}$ which, by choice of $u$, fulfils $C_i(u)\le k$. Therefore $(u,v)\in U^+$. In particular, $u$ sends one unit to $v$, thereby increasing $S_{d-1}$.

Moreover, all transactions $(u',v')\in U^+$ have $C_i(u')>k$, so it is not possible for any such transaction to decrease any $S_{d'}$ with $d'<d$. Neither can such a transaction change $S_0$. Therefore we find that the transactions in $U^+$ strictly increase $\lambda$, without affecting $S_0$.

Let $C$ denote the configuration where the transaction in $U^+$ have been executed, i.e.\ $C(v):=C_i(v)-|U^+\cap\{v\}\times V|+|U^+\cap V\times\{v\}|$. From the above considerations we get $\lambda(C)>\lambda(C_i)$ and $S_0(C)=S_0(C_i)$. The transactions in $U\cap S_0(C)\times S_0(C)$ do not change $\lambda_0$ and do not affect $S_0$ (a node could go from $-k-1$ to $-1$, but not further), so we now set $C'$ to the configuration after executing those.

Finally, consider a transaction $(u,v)\in U$ with $v\in S_0(C)\subseteq S_0(C')$ and $u\notin S_0(C)$. If $C'(u)>0$, then the transactions would  strictly increase $\lambda_0$, without changing $S_0$. If $C'(u)<0$, then neither $\lambda$ nor $S_0$ would change. Otherwise, the contribution of $u$ becomes $-1$, in which case $u$ would enter $S_0$, and $\lambda_0$ would remain unchanged. As a consequence of $u$ entering $S_0$, the distance between some other nodes and $S_0$ might decrease, but that can only increase $\lambda$, as all nodes outside of $S_0$ have nonnegative contribution. We can now proceed inductively, by updating $C'$ corresponding to $(u,v)$.

This concludes the first part of the proof. It remains to argue that $C_i(V)\cap\{k+1,...,E\}\ne\emptyset$ and $0\in C_i(V)$ cannot hold for all $i$. We argue analogously to before and assume the contrary. Then we set $S_0(C_i):=C_i^{-1}(0)$ and define $S_d$, $\lambda_d$ and $\lambda$ as before. It is not possible for a node to enter $S_0$, so a node can leave $S_0$ only finitely often. Choosing an $i$ large enough, the set $S_0$ thus does not change. Finally, we again pick $d,u$ where $u\in S_d$ and $C_i(u)>k$ s.t.\ $d$ is minimal, and see that $\lambda_{d-1}$ and thus $\lambda$ must increase at each step, which is a contradiction.
\end{proof}

\restatelocalcancellingtwo*
\begin{proof}
First, note that it suffices to show the claim for a single $i$, as $C_i(V)\subseteq\{-E,...,-1\}$ implies $C_{i+1}(V)\subseteq\{-E,...,-1\}$, and $C_i(V)\subseteq\{-k,...,k\}$ even implies $C_{i+1}=C_i$.

This then follows from Lemma~\ref{lem:localcancelling1} together with the following observation: \TraRef{cancel} is symmetric w.r.t.\ negation of all contributions, hence we could flip all signs, apply Lemma~\ref{lem:localcancelling1}, and derive the statement that there are only finitely many $C_i$ with $C_i(V)\cap\{-E,...,-k-1\}\ne\emptyset$ and $C_i(V)\cap\{0,...,E\}\ne\emptyset$.

As $0>\sum_vC_0(v)=\sum_vC_1(v)=...$, it is impossible that $C_i(V)\cap\{-E,...,0\}$ is empty, for any $i$. Hence Lemma~\ref{lem:localcancelling1} yields that we eventually have $C_i(V)\cap\{k+1,...,E\}=\emptyset$ for all sufficiently large $i$. Combining this with the above observation we get the desired statement.
\end{proof}

\restateDAfmajorityerrors*

We split the proof into two parts, Lemmata~\ref{lem:dafmajorityworksifnoerror} and \ref{lem:dafmajoritynotallperp}.

\begin{lemma}\label{lem:dafmajorityworksifnoerror}
Assuming that no agents enters state $\bot$, $\pi$ is accepting iff $\varphi(L_G)=1$.
\end{lemma}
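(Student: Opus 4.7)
The plan is to track the signed sum
\[
  \Sigma(C) \;:=\; \sum_{v\in V}\operatorname{cont}(C(v)),
\]
where $\operatorname{cont}$ returns the contribution component on states of the form $(x,r)\in\Cancel{Q}\times Q_L$ (reading through $\Last$ on intermediate states) and $0$ on $\bot$ and $\No$. By the initialisation, $\Sigma(C_0)=a_1L_G(x_1)+\dots+a_lL_G(x_l)$, so $\varphi(L_G)=1$ iff $\Sigma(C_0)\ge 0$. Two invariants drive the argument: first, the cancellation transitions \TraRef{cancel} preserve $\Sigma$, as noted directly after their definition; second, under the no-$\bot$ hypothesis, a firing of \TraRef{double} is sign-preserving on $\Sigma$ (it scales $\Sigma$ by $2$ up to the boundary behaviour at $\pm k$), while \TraRef{reject} is the only transition producing a $\No$-state.

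For the forward direction, assume $\Sigma(C_0)\ge 0$. By induction on $i$ we obtain $\Sigma(C_i)\ge 0$ for all $i$, so it suffices to show that \TraRef{reject} never fires. A \TraRef{reject} broadcast is only enabled at a leader in state $L_\No$, and the only way to reach $L_\No$ is via the \TraRef{detect} clause that requires a leader's sample $S_v$ to consist entirely of negative non-leader states. Under the no-$\bot$ hypothesis the response functions in \TraRef{double} and \TraRef{reject} force all leaders to agree on their absence-detection outcome and to initiate the same broadcast jointly, since any non-initiating leader would otherwise be sent to $\bot$. So if one leader moves to $L_\No$, then every leader does. But $\bigcup_v S_v=V$ and $v\in S_v$, so every agent would then be strictly negative, contradicting $\Sigma\ge 0$. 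Hence no leader ever reaches $L_\No$, no $\No$-state is ever produced, and $\pi$ is accepting.

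For the converse, assume $\Sigma(C_0)<0$. The invariants give $\Sigma(C_i)<0$ throughout, and because $|\Sigma|$ grows whenever a doubling broadcast fires, after finitely many broadcasts we have $|\Sigma|>kn$, at which point no configuration can keep all contributions inside $\{-k,\dots,k\}$. Lemma~\ref{lem:localcancelling2} then forces every subsequent cancellation phase to stabilise inside $\{-\BoundForMaj,\dots,-1\}$, which fairness eventually witnesses. Once there, every leader's sample lies in $\{-\BoundForMaj,\dots,-1\}\times\{0\}$, so \TraRef{detect} moves every leader to $L_\No$ and \TraRef{reject} fires, producing a $\No$-state that the run never leaves. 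Thus $\pi$ rejects. The chief obstacle is extracting the ``all leaders act in lockstep'' consequence of the no-$\bot$ hypothesis and handling the small bookkeeping around boundary agents at $\pm k$ and the mapping $\Last$; once those are in hand, the sign-of-$\Sigma$ invariant closes both directions.
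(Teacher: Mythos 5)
Your proposal follows the same strategy as the paper's proof: the governing invariant is the sign of the total contribution $\Sigma$, Lemma~\ref{lem:localcancelling2} supplies convergence of the cancellation phase, and the no-$\bot$ hypothesis is used to force all leaders onto the same branch of the absence-detection, so that the rejecting broadcast can fire only when every agent is negative (hence $\Sigma<0$) and the doubling broadcast preserves the sign of $\Sigma$. The organisation differs only superficially: you case-split on the sign of $\Sigma$, whereas the paper case-splits on which broadcast fires first and recurses on the suffix after a doubling step. The ingredients are identical.

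The genuine gap is exactly the point you defer as ``the chief obstacle'': the lockstep claim. Your justification --- any non-initiating leader would be sent to $\bot$ by the response function --- only excludes a leader that initiates \emph{no} broadcast at the step in question. It does not exclude the \emph{mixed} case in which, at the same broadcast step, some leaders sit in $L_\mathrm{double}$ and initiate the doubling broadcast while others sit in $L_\square$ and initiate the rejecting one. This situation is reachable: distinct leaders receive distinct samples $S_v$ during the absence-detection, and one sample can lie entirely in the negative states while another lies entirely in $\{-k,\dots,k\}\times\{0\}$. In the mixed case every leader is an initiator, so your one-liner gives nothing, yet the sum-based bookkeeping collapses: some agents are doubled while others are sent to the rejecting state $\square$, and a $\square$-agent appears even though $\Sigma\ge 0$ may hold. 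The paper devotes a separate, forward-looking argument to this case: after a mixed broadcast at least one agent is in $\square$, the surviving leaders are back in $L$, and since the absence-detection samples of the $L$-leaders jointly cover $V$, one of them must eventually observe the $\square$-agent and move to $\bot$ (or be hit by the response function of a later broadcast it does not initiate), contradicting the hypothesis. Without this step the forward direction of your proof is incomplete. A smaller omission of the same flavour: in the converse direction you need that broadcasts actually keep firing until $|\Sigma|>kn$; this requires the observation that fairness plus Lemma~\ref{lem:localcancelling2} forces some leader into $L_\mathrm{double}$ or $L_\square$, and that these are broadcast-initiating states which can only be left by firing the broadcast.
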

\begin{proof}
If no weak broadcast is executed in $\pi$, then the computation is necessarily accepting ($\No$ is only reachable via \TraRef{reject}), so we can assume that $\varphi(C_\varphi)=0$. Additionally, we know that $\pi$ is a run of $\Detect{P}'$ as well, which simulates $\Detect{P}$, so there is a run $\tau$ of $\Detect{P}$ s.t.\ $\pi$ is a reordering of an extension of $\tau$. As \TraRef{detect} does not affect the first component, we get a run $\sigma=(K_0,K_1,...)$ of $\Cancel{P}$ by projecting $\tau$ onto the first component. Due to $\varphi(C_\varphi)=0$, Lemma~\ref{lem:localcancelling2} implies that any run of $\Cancel{P}$ starting at $K_0$ would eventually have only states in $\{-k,...,k\}$, or only states in $\{-E,...,-1\}$. In both cases, executing \TraRef{detect} would move a leader from $L$ to $L_\mathrm{double}$ or $L_\No$. 

In run $\pi$, it is not possible for a leader to leave state $L_\mathrm{double}$ or $L_\No$, as these states are broadcast initiating. This contradicts the weak fairness condition, as then either \TraRef{double} or \TraRef{reject} must be executed eventually.

Therefore, let $i$ denote the first step at which a weak broadcast is executed in $\pi$ (i.e.\ \TraRef{double} and/or \TraRef{reject}), and $M\subseteq V$ the set of its initiators. If there is a leader $v\notin M$, then it cannot be in state $\bot$, due to our assumption, nor can it be in $\No$, as \TraRef{reject} has not been executed before step $i$. But then $v$ would move to state $\bot$ in step $i$, which cannot happen by assumption. Hence $M$ is precisely the set of leaders.

If both \TraRef{double} and \TraRef{reject} are executed at step $i$, i.e.\ $(\Any,L_\mathrm{double}),(\Any,L_\No)\in C_i(M)$, then $C_{i+1}$ has all leaders in state $L$ or $\No$, with at least one in each. Additionally, $C_{i+1}$ is a valid input configuration of $\Detect{P}$ (it does not contain any intermediate states added in $\Detect{P}'$). Any fair run $\tau$ of $\Detect{P}$ starting in $C_{i+1}$ has one leader $v$ which starts in state $L$ and moves to $\bot$ upon the first execution of $\TraRef{detect}$ as there is an agent in $\No$. So $v$ enters neither $L_\mathrm{double}$ nor $L_\No$ in $\tau$. Therefore, until the second broadcast is executed at step $j>i$ in $\pi$, we have $C(v)\notin Q\times\{L_\mathrm{double},L_\No\}$ for any configuration $C\in\{C_{i+1},...,C_j\}$. If $j=\infty$, then $v$ moves eventually to $\bot$ in $\pi$, as it does in $\tau$, otherwise either $\TraRef{double}$ or $\TraRef{reject}$ move $v$ immediately to $\bot$. In both cases, our assumption is violated, so at step $i$ we cannot execute both \TraRef{double} and \TraRef{reject}.

Now there are two cases. If we execute only \TraRef{reject} at step $i$ of $\pi$, we know that $C_i(v)=(\Any,L_\No)$ for any leader $v$. This is only possible if \TraRef{detect} moves all leaders to $L_\No$ at once, so at some point a configuration in $\pi$ had only states in $\Last^{-1}(\{-E,...,-1\}\times Q_L)$, which neither \TraRef{detect} nor a transition of $\Cancel{P}$ can change.\footnote{It is clear that this holds for some reordering of $\pi$. To be entirely precise we would have to argue that it is impossible to reorder the steps at which the leaders enter $L_\No$ to before the steps where the other agents enter $\{-E,...,-1\}\times Q_L$.}
In particular, this means $C_i(V)\subseteq\Last^{-1}(\{-E,...,-1\}\times Q_L)$, so \TraRef{reject} would move all agents (including the leaders) to $\No$. At that point, no further transitions can be performed and the protocol moves into a stable $0$-consensus. This is correct, as it is only possible for $\Cancel{P}$ to move all agents to states $\{-E,...,-1\}$ if the sum of all contributions in $C_0$ is negative.

The second case is executing only \TraRef{double} at step $i$ of $\pi$. Similarly, this is only possible if all leaders move to $L_\mathrm{double}$ at once using \TraRef{detect}. For that to happen, all agents must be in states $\{-k,...,k\}\times\{0,L\}$ before executing \TraRef{detect}, moving the leaders to $L_\mathrm{double}$. It is not possible to execute \TraRef{detect} or any transition of $\Detect{P}$ with only these states, so we get $C_i(V)\subseteq\{-k,...,k\}\times\{0,L_\mathrm{double}\}$ as well and \TraRef{double} moves the agents back to states $Q\times\{0,L\}$ by doubling their contributions. Doubling every contribution does not change whether the sum is negative, so our claim follows inductively in this case, by considering the suffix $C_{i+1},C_{i+2},...$ . (Note that $C_0,...,C_i$ do not contain state $\No$. So if this case happens infinitely often, which occurs only if the sum of contributions is zero, $\pi$ is accepting.)
\end{proof}

\begin{lemma}\label{lem:dafmajoritynotallperp}
The run $\pi$ cannot reach a configuration with all leaders in state $\bot$.
\end{lemma}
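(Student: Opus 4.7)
The plan is to prove the stronger invariant that every configuration reached by $\pi$ contains at least one leader whose state is not $\bot$; the lemma follows immediately. The base case is immediate, since every leader starts in some $(\Any, L)$. For the inductive step I will fix a step $C_i \to C_{i+1}$ of $\Bc{\ExtendedAutomaton}$ and do a case distinction on the transition type, showing in each case that the step cannot push the last non-$\bot$ leader into $\bot$.

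The linchpin is the behaviour of the rejecting state $\No$. First, $\No$ is a trap in $\Bc{\ExtendedAutomaton}$: the neighbourhood transitions inherited from $\Detect{\Automaton}$ simulate transitions of $\Detect{\ExtendedAutomaton}$, and in $\Detect{\ExtendedAutomaton}$ neither $\delta$ (inherited from $\Cancel{\ExtendedAutomaton} \times Q_L$, whose state space omits $\No$) nor $\AdTrans$ (whose initiator set is $\Cancel{Q} \times \{L\}$) applies to $\No$; the response functions of \TraRef{double} and \TraRef{reject} leave $\No$ fixed because $\Last(\No) = \No$ and $\No$ is not among the explicit rewrites. Second, $\No$ enters the system only through \TraRef{reject}, whose initiators are in $\Cancel{Q} \times \{L_\No\}$, hence leaders (since $L_\No$ is reachable only from $L$ via \TraRef{detect}), and which map each initiator to $\No$ rather than to $\bot$. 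Combining the two yields the sub-invariant: from the first step $j$ at which \TraRef{reject} fires, at least one leader is in $\No$ in every $C_k$ with $k \ge j+1$, and this leader is never in $\bot$.

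The case analysis is now short. A broadcast step for \TraRef{double} or \TraRef{reject} moves every initiator into $\Cancel{Q} \times \{L\}$ or into $\No$, never into $\bot$, so the initiators are non-$\bot$ leaders after the step. A neighbourhood step changes at most one agent's state, and for that agent to be a leader that newly enters $\bot$ the simulated absence-detection transition being completed at step $i$ must be $(x, L), s \mapsto \bot$, which requires $\No \in s$ and therefore the presence of $\No$ in $C_i$; by the sub-invariant some other leader is then permanently in $\No$, contradicting the assumption that $C_{i+1}$ has all leaders in $\bot$. The main subtlety I anticipate is that the intermediate states introduced by the simulation $\Detect{\Automaton}$ must not spoil either the trap property of $\No$ or the claim that initiators of the broadcasts do not enter $\bot$. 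Both facts reduce to inspecting the construction in Lemma~\ref{lem:simulate-weakabsence} together with the composition with $\Last$ (which is the identity on non-intermediate states like $\No$), but they have to be invoked explicitly when walking through each concrete single-step transition of $\Bc{\ExtendedAutomaton}$.
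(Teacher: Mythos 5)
Your proof is correct and rests on exactly the same facts as the paper's: the only way a leader reaches $\bot$ is via the $\No\in s$ branch of \TraRef{detect} or via the response functions of \TraRef{double}/\TraRef{reject}, the initiators of those broadcasts are leaders that land in $L$ or in the trap state $\No$ (never $\bot$), and $\No$ can only ever appear through \TraRef{reject}. The paper phrases this as a two-case argument (whether \TraRef{reject} is ever executed) rather than your step-by-step invariant, but the substance is identical, and your explicit attention to the intermediate states of the simulation and to $\Last$ is a careful elaboration rather than a different route.
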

\begin{proof}
If \TraRef{reject} is ever executed, then a leader (its initiator) enters state $\No$, from which it cannot enter $\bot$. Otherwise, it is not possible for any agent to enter $\No$, thus \TraRef{detect} cannot move an agent to $\bot$. Only \TraRef{double} remains, but it also leaves the leader initiating the broadcast in state $L$.
\end{proof}

\restateDAfmajorityiscorrect*
\begin{proof}
Let $\pi=C_0C_1...$ denote a fair run of $\Riset{P}$ starting in a configuration $C_0$ with $C_0(V)\subseteq\Cancel{Q}\times\{0,L\}$. Note that all valid initial configurations have this form. As before, we refer to agents starting in $((\Any,L),\Any)$ as \emph{leaders}. If no agent ever enters a state $((\bot,\Any),\Any)$, then \TraRef{reset} is never executed and Lemma~\ref{lem:dafmajorityworksifnoerror} implies that we reach a correct consensus.
If \TraRef{reset} is executed at some step $i$, we move to a configuration $C_{i+1}$ with only states $C_{i+1}(V)\subseteq\Cancel{Q}\times\{0,L\}$, i.e.\ a valid choice for $C_0$. Let $\pi':=C_{i+1}C_{i+2}...$ denote the suffix of $\pi$ starting at $i+1$.
Due to Lemma~\ref{lem:dafmajoritynotallperp} we know that at least one leader is not in state $\bot$ when executing \TraRef{reset}, so $\pi'$ has strictly fewer leaders than $\pi$, but at least one (the latter follows directly from the definition of \TraRef{reset}). Hence, we conclude that \TraRef{reset} is executed only finitely often.

It still remains to show that an agent entering $((\Any,\bot),\Any)$ at some point implies that \TraRef{reset} will be executed. This follows immediately, as all such states are broadcast-initiating and thus can only execute \TraRef{reset}.
\end{proof}

\end{document}